\documentclass[11pt,a4paper]{article}
\usepackage[T1]{fontenc}
\usepackage[utf8]{inputenc}
\usepackage[sc]{mathpazo}
\usepackage[a4paper,left=24mm,right=24mm,top=24mm,bottom=25mm,headheight=15pt,headsep=9mm,footskip=12mm]{geometry}
\usepackage{amsmath,amssymb,amsthm,mathtools,bm,mathrsfs}
\usepackage{graphicx,booktabs,microtype,enumitem,array,longtable}
\usepackage[section]{placeins}
\usepackage{flafter}
\usepackage{titlesec,fancyhdr,xcolor}
\usepackage[round,authoryear]{natbib}
\IfFileExists{xurl.sty}{\usepackage{xurl}}{}
\definecolor{inkblue}{RGB}{26,55,75}
\definecolor{quietgray}{RGB}{90,97,103}
\usepackage[colorlinks=true,linkcolor=inkblue,citecolor=inkblue,urlcolor=inkblue,bookmarksnumbered=true]{hyperref}
\hypersetup{pdftitle={When Can World Models Recover Physical Laws?},
  pdfauthor={Ye Yuan and Jun Liu},
  pdfsubject={Technical report: information, identifiability, and experimental cost},
  pdfkeywords={world models, physical-law recovery, information theory, experimental design}}
\newtheorem{theorem}{Theorem}[section]
\newtheorem{proposition}[theorem]{Proposition}
\newtheorem{corollary}[theorem]{Corollary}
\theoremstyle{definition}
\newtheorem{assumption}[theorem]{Assumption}
\theoremstyle{plain}
\newcommand{\chg}[1]{{\color{blue}#1}}
\newcommand{\W}{\mathcal W}
\newcommand{\A}{\mathcal A}
\newcommand{\F}{\mathcal F}
\newcommand{\E}{\mathbb E}
\newcommand{\R}{\mathbb R}
\newcommand{\KL}{\operatorname{KL}}
\newcommand{\TV}{\operatorname{TV}}
\newcommand{\diam}{\operatorname{diam}}

\newcommand{\argmin}{\operatorname*{arg\,min}}
\newcommand{\lawhat}{\widehat F}
\newcommand{\decoder}{\mathsf D}
\newcommand{\loss}{\rho}
\newcommand{\cost}{\mathsf c}
\newcommand{\capinfo}{\mathcal C}
\newcommand{\rd}{\mathcal R}

\setlist{nosep,leftmargin=1.5em}
\titleformat{\section}{\Large\bfseries\color{inkblue}}{\thesection}{0.7em}{}
\titlespacing*{\section}{0pt}{2.0ex plus 0.8ex minus 0.2ex}{0.9ex}
\titleformat{\subsection}{\large\bfseries}{\thesubsection}{0.7em}{}
\fancypagestyle{plain}{\fancyhf{}\fancyfoot[C]{\small\thepage}}
\begin{document}
\raggedbottom
\thispagestyle{plain}
\vspace*{8mm}
\noindent{\sffamily\small\color{quietgray}\MakeUppercase{Technical Report}\hfill September 2026}\par
\vspace{5mm}
\noindent{\color{inkblue}\rule{\linewidth}{0.8pt}}\par
\vspace{8mm}
\noindent{\fontsize{29}{34}\selectfont\bfseries\color{inkblue}
When Can World Models\\Recover Physical Laws?\par}
\vspace{7mm}
\noindent Ye Yuan\textsuperscript{1}\quad and\quad Jun Liu\textsuperscript{2}\par
\vspace{3mm}
{\small
\noindent\textsuperscript{1}School of Artificial Intelligence and Automation,\\
Huazhong University of Science and Technology, Wuhan, China\\
\href{mailto:yye@hust.edu.cn}{\texttt{yye@hust.edu.cn}}\par
\vspace{2mm}
\noindent\textsuperscript{2}Department of Applied Mathematics,\\
University of Waterloo, Waterloo, Canada\\
\href{mailto:j.liu@uwaterloo.ca}{\texttt{j.liu@uwaterloo.ca}}\par
}
\vspace{2mm}
\noindent{\small\color{quietgray}15 September 2026}\par
\vspace{12mm}
\begin{abstract}
Accurate prediction does not establish that a world model has recovered a physical law: distinct dynamics can generate identical records under the same observation protocol. We formulate law recovery on a fixed physical domain under an explicit catalog of experiments, sensor uncertainty, and an acquisition budget. A rate--distortion converse separates the information needed to describe a law from the information the apparatus can reveal. Its constructive counterpart gives a finite response codebook and an explicit decoding budget. On compact world classes, uniform recovery is possible exactly when every pair of different laws is experimentally distinguishable; equivalently, the apparatus can recover all the entropy of every finite law source. An inverse response modulus quantifies stability. For Lipschitz fields on a $d$-dimensional state--action domain, noisy full-state readouts after resets require minimax budget $\Theta(\varepsilon^{-(d+4)/2})$ for squared law error $\varepsilon$, compared with $\Theta(\varepsilon^{-(d+2)/2})$ for direct field observations. Exact crossing-time symmetries establish the lower bound under adaptive experiment selection and arbitrary durations with constant inputs. Reproducible synthetic cases illustrate the separate roles of intervention, calibration, and repeated measurement. Together, the results identify which evidence supports a claim of physical-law recovery and the cost of acquiring it.
\end{abstract}

\vfill
\clearpage
\tableofcontents
\vspace{9mm}

\clearpage
\section{Introduction}\label{sec:intro}
An agent that acts in a physical environment must anticipate the consequences of its actions. Before moving an object, changing a machine setting, or applying a force, it benefits from knowing what is likely to happen next. A world model provides such predictions from previous observations and actions. The scientific question considered in this report begins one step further: when do the observations support recovery of the physical law that generates those consequences? We ask what an experiment must reveal, how accurately the law can then be recovered, and what acquiring that evidence costs.

The distinction can be understood without choosing a learning algorithm. Suppose a model predicts the next position of a moving object accurately throughout a recording. This establishes agreement with the observed motion. It does not yet establish what the object would do from another initial position, at another velocity, or under an intervention absent from the recording. A physical law makes these additional predictions through a common rule of evolution. Recovering that rule therefore requires deciding between explanations that agree on the available observations but differ on the physical situations of interest. More examples of the same motion need not resolve the decision.

Consider an object moving around a fixed center, as in Figure~\ref{fig:orbit}. One candidate law is inverse-square attraction: the acceleration magnitude decreases as the square of the distance increases. Another is a linear restoring law: the acceleration magnitude increases in proportion to distance. Choose units in which both laws give unit inward acceleration at unit radius. The accelerations at planar position $\boldsymbol q$ are then $a_G(\boldsymbol q)=-\boldsymbol q/\|\boldsymbol q\|^3$ and $a_H(\boldsymbol q)=-\boldsymbol q$. Start the object at unit radius with unit tangential speed. Both laws produce exactly the same circular motion, with the same position at every time. Even an arbitrarily long, noiseless recording cannot identify which law generated it.

\begin{figure}[!htbp]
\centering
\includegraphics[width=\linewidth]{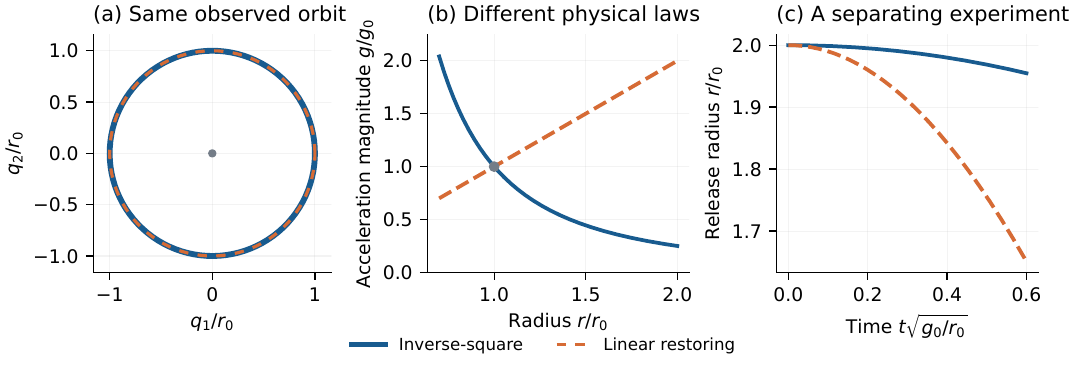}
\caption{Why accurate prediction of a recorded motion need not identify its physical law. Blue solid curves represent inverse-square attraction and orange dashed curves a linear restoring force. The laws are normalized to have the same inward acceleration $g_0$ at radius $r_0$; position, acceleration, and time are measured in units of $r_0$, $g_0$, and $\sqrt{r_0/g_0}$. (a) The axes show the two planar position coordinates. Starting at radius $r_0$ with tangential speed $\sqrt{r_0g_0}$ gives the same circular trajectory under both laws, including the same timing. Their superposed curves are therefore observationally indistinguishable even with exact, continuous position measurements. (b) Acceleration magnitude versus radius reveals the hidden disagreement: $g/g_0=(r/r_0)^{-2}$ for inverse-square attraction and $g/g_0=r/r_0$ for the restoring law. They meet at the observed radius but differ elsewhere. (c) A different preparation exposes that disagreement. Both trials now start at radius $2r_0$ with zero velocity, and the vertical axis records radius as a function of time. The restoring law pulls the object inward faster because its initial acceleration is $2g_0$, compared with $g_0/4$ for inverse-square attraction. These are noiseless trajectories. Section~\ref{sec:cases} adds Gaussian readout noise and quantifies the benefit of repeated releases.}
\label{fig:orbit}
\end{figure}

The disagreement becomes visible when the experiment changes. Release the object from rest at radius two. Its initial inward acceleration is now $1/4$ under inverse-square attraction and $2$ under the restoring law, so the position responses separate. This is a change in preparation, not an external force applied during the motion. Repeating the circular record only refines an already shared prediction; repeating the release can distinguish the two explanations despite measurement noise. The example will recur throughout the report, first as a recovery problem, then as an information constraint, and finally as a concrete experiment with a calculable error probability.

For this example, the physical state is position and velocity together, $x=(\boldsymbol q,\boldsymbol v)$, and the corresponding first-order laws are $F_G(x)=(\boldsymbol v,a_G(\boldsymbol q))$ and $F_H(x)=(\boldsymbol v,a_H(\boldsymbol q))$. More generally, we study a field $F$ in an ordinary differential equation $\dot x=F(x,u)$, where $u$ denotes a controlled input when one is available. Recovery means estimating this field over a declared physical domain. It does not require discovering a unique symbolic expression: different expressions can define the same field. It does require specifying the coordinates, region, and accuracy in which two candidate fields are to be compared.

This question complements the predictive use of world models. \citet{ha2018} learned compact representations of environments for control, and PlaNet planned through learned latent dynamics from image observations \citep{hafner2019planet}. Dreamer subsequently demonstrated the usefulness of world models across diverse control tasks \citep{hafner2025}. These results make prediction and planning compelling objectives in their own right. Our concern is the additional evidence needed when a model is interpreted as recovering a physical rule. A representation useful for choosing actions need not uniquely determine the continuous-time field in specified physical coordinates.

Related distinctions arise in representation learning. I-JEPA predicts representations of unobserved image regions from visible context, while V-JEPA learns from predicting features of video \citep{assran2023ijepa,bardes2024vjepa}. Such objectives can learn useful abstractions without reconstructing every observed detail. They do not, by themselves, specify a physical-coordinate recovery target or an experiment that identifies it. In another setting, \citet{locatello2019} showed that unsupervised disentanglement requires assumptions on models and data. Causal representation learning likewise asks how learned variables can support a causal description \citep{schoelkopf2021}. These questions concern latent representations rather than the dynamical fields studied here, but they reinforce the need to specify what fixes the physical meaning of an estimate. Our coordinates and sensing assumptions are explicit parts of the recovery problem.

The issue has a long history in system identification. Structural identifiability asks whether different systems can generate the same observations \citep{bellman1970}; symmetries of dynamics and sensing explain why such equivalences can persist \citep{sontag2017}. In the circular example, the available motion never visits radii where the laws differ. With an uncertain sensor, an additional ambiguity is possible: changing the physical scale while compensating through sensor gain can leave all readings unchanged. The report treats both as properties of the permitted experiments. An informative release addresses the first obstruction; a known physical reference may be necessary for the second.

Recovering equations from data also builds on substantial constructive work. The sparse Bayesian approach of \citet{pan2016}  identifies nonlinear state-space systems using sparse structure; SINDy applies similar idea by selecting governing terms from a candidate library \citep{brunton2016}. Neural ordinary differential equations instead parameterize continuous-time dynamics through trainable vector fields \citep{chen2018}. Active identification provides guarantees for nonlinear dynamics linear in known features \citep{mania2022}, and continuous-time sampling theory characterizes when sampled evolution determines its generator \citep{zeng2024}. These methods and results answer different parts of the recovery problem. Our formulation makes the target domain, sensor uncertainty, executable interventions, and acquisition costs explicit within a common analysis.

The starting point is thus the experimental record, before selecting a representation or optimizer. An apparatus supplies a catalog of operations: preparing states, applying specified inputs, observing at chosen times, and, when available, calibrating against a reference. A learner may choose the next operation using earlier readings. Every operation has a cost, and the complete record must fit an acquisition budget. The central difficulty is whether this record separates candidate laws to the required physical precision. This perspective permits limits that apply to every learning algorithm, together with constructions that show when those limits can be approached.

Information theory provides a useful language for separating the demands of the target from the abilities of the apparatus \citep{shannon1948,shannon1959,cover2006}. A law class may contain many physically different possibilities at a given resolution; describing the correct one requires information. Experiments deliver information through their responses. In the equal-prior two-law example, identifying the correct candidate requires one bit, but the circular experiment reveals none of it. The release can reveal that distinction. With a continuous law class, metric entropy and rate--distortion quantify the corresponding description requirement, while experimental information capacity describes what can be learned within the budget.

There is an essential difference from ordinary communication. An experimenter does not know the true law and cannot encode it freely into a channel input. The experimenter can only probe the unknown world using available operations. Consequently, an inequality comparing required and available information is a necessary condition; it does not alone construct a successful recovery procedure. We give a separate response-matching decoder based on a finite collection of representative worlds. Under the stated compactness and continuity assumptions, arbitrary uniform accuracy is possible exactly when every pair of different target laws can be separated by some permitted experiment. The separating program may depend on the requested precision.

Accuracy also depends on how strongly responses separate. Closely spaced responses require more repetitions in noise, and a small response error can correspond to a large physical-law error. Our analysis uses established tools from minimax information bounds, sequential experimental design, and inverse-problem stability \citep{yang1999,yu1997,chernoff1959,kaufmann2016,donoho1991,donoho1994}. For a concrete class of Lipschitz fields, we then quantify how preparation, readout, and elapsed time determine acquisition cost. The squared-error exponent $-2/(d+4)$ is already established for nonparametric ODE estimation under specified observation designs \citep{schoetz2026,schoetz_siebel2026}; our costed access model examines it with adaptive experiment selection and contrasts state observations with direct field observations.

The argument proceeds in this order. Section~\ref{sec:problem} defines the physical target and available experiments. Section~\ref{sec:distinguishability} asks which laws those experiments can separate. Section~\ref{sec:information} compares description requirements with accessible information, and Section~\ref{sec:recovery} constructs a decoder and explains its stability. Section~\ref{sec:budget} specializes the theory to acquisition costs for Lipschitz dynamics. Section~\ref{sec:cases} returns to the orbital example and adds calibration and noisy-estimation cases. Each stage addresses a question left open by the previous one: what is being recovered, whether the evidence determines it, how accurately it can be recovered, and how much that accuracy costs.

\section{Problem Formulation}\label{sec:problem}
We ask how accurately a learner can recover a physical law from experiments whose total cost is limited. The formulation specifies the unknown system, the field to be recovered, the observations available to the learner, and the resource constraint, in that order.

For integers $p,s\ge1$ and $q\ge0$, let $x(t)\in\mathcal X\subseteq\R^p$ be a state in fixed physical coordinates and $u(t)\in U\subseteq\R^q$ a controlled input. A possible world consists of a dynamical law and a sensor:
\begin{equation}\label{eq:world}
 w=(F_w,H_w)\in\W,\qquad \dot x(t)=F_w(x(t),u(t)).
\end{equation}
The field $F_w:\mathcal X\times U\to\R^p$ determines the state velocity. The map $H_w:\mathcal X\to\R^s$ determines the noiseless sensor output, which has $s$ coordinates. The candidate class $\W$ is known, but the true member $w$ is not. Sensor uncertainty is included because different dynamics and sensors can explain the same observations. The target of recovery is the physical field $F_w$; the sensor need not itself be returned by the learner.

We evaluate an estimated field $\lawhat$ on a fixed compact domain $D=K\times U$, where $K\subseteq\mathcal X$ is a compact state region of interest and $U$ is compact. The system may move outside $K$ during an experiment; its dynamics must remain defined on the larger state region $\mathcal X$. Choose a probability measure $\nu$ on $D$ before collecting data. It assigns the evaluation weights, with positive mass in every nonempty relatively open part of $D$. Define
\begin{equation}\label{eq:metric}
 \loss(F,\widehat F)^2
 =\int_D\|F(x,u)-\widehat F(x,u)\|_2^2\,d\nu(x,u).
\end{equation}
For uniform $\nu$, this is the mean squared discrepancy between the true and estimated state velocities across the domain. It evaluates the function on $D$, including points absent from the training trajectories. We allow any square-integrable estimated field on $D$ and denote this output space by $\F$. A world model can use any internal representation, but a physical-law recovery claim requires such a physical-coordinate output; predicting sensor readings alone does not supply it.

To instantiate these objects, keep the two laws of Figure~\ref{fig:orbit} as a running example. The state is planar position and velocity, $x=(\boldsymbol q,\boldsymbol v)$, so $p=4$. There is no continuously applied input: the input dimension is $q=0$, and $U$ is a singleton. An intervention will instead change the initial preparation. On $\mathcal X=(\R^2\setminus\{0\})\times\R^2$, the two candidate fields are $F_G(x)=(\boldsymbol v,-\boldsymbol q/\|\boldsymbol q\|^3)$ and $F_H(x)=(\boldsymbol v,-\boldsymbol q)$. Take a known radial sensor $H(x)=\|\boldsymbol q\|$, with $s=1$. The corresponding worlds are $w_G=(F_G,H)$ and $w_H=(F_H,H)$. The picture displays planar geometry, but the measurement used in the example is one scalar radius.

Choose $K$ to consist of all states with position radii in $[1,2]$ and velocities in the closed unit ball. Specify $\nu$ by drawing radius uniformly on $[1,2]$, angle uniformly on $[0,2\pi)$, and velocity uniformly in that ball, independently. This is uniform radial weighting, not uniform planar-area weighting. The fields have the same position derivative and different accelerations. Their distance is therefore
\begin{equation}\label{eq:orbit-law-distance}
 \delta_{\rm law}^{\,2}
 :=\loss(F_G,F_H)^2
 =\int_1^2(r-r^{-2})^2\,dr
 =\frac{21}{8}-2\log 2.
\end{equation}
This positive number measures disagreement over the evaluation region, even though the laws coincide on the unit circle. The chosen region and weighting are part of the recovery question: evaluating only that circle would remove the distinction that this example is intended to study.

An apparatus provides a catalog $\A$ of executable experiments, called episodes. An episode $a\in\A$ specifies its preparation, its input signal, and a finite readout schedule before execution. For a trajectory episode, let $x_w^a$ be its state trajectory and $t_{a,1},\ldots,t_{a,J_a}$ its $J_a\ge1$ readout times. The learner observes a vector $Y_a\in\R^{k_a}$, with $k_a=sJ_a$, given by
\begin{equation}\label{eq:episode}
 Y_a=m_a(w)+\xi_a,\qquad
 m_a(w)=\bigl(H_w(x_w^a(t_{a,i}))\bigr)_{i=1}^{J_a},\qquad
 \xi_a\sim N(0,\Sigma_a).
\end{equation}
Thus $m_a(w)$ is the predicted observation of an experiment, not the field itself. The covariance $\Sigma_a\succ0$ is known and does not depend on the world. Noise is independent across executions; readouts within one episode may be correlated. A calibration episode instead measures $H_w(x_{\rm ref})$ at a known physical reference $x_{\rm ref}\in\mathcal X$, and is available only when the apparatus supplies that operation.

In the same example, a passive episode starts at $\boldsymbol q(0)=(1,0)$, $\boldsymbol v(0)=(0,1)$. Both laws then give $H(x(t))=1$ at every readout. A release episode starts instead at $\boldsymbol q(0)=(2,0)$, $\boldsymbol v(0)=0$ and reads the radius at $t_*=0.4$. Its two means differ by approximately $0.13781$; taking $\sigma=0.1$ gives $\Sigma_a=\sigma^2$ for this single readout. Changing the preparation has changed the observation map $m_a$ without changing the sensor or the quantity to be recovered. Only trajectories that remain away from the singular origin through their readout are admitted; they need not stay inside the evaluation set $K$.

A sampling policy $\pi$ uses completed observations to choose the next episode or to stop. It chooses experiments; the input $u(t)$ is one component of the chosen experiment. Within an episode its controls and readout schedule remain fixed. Let $N_\pi$ be the number of executed episodes and let $T^\pi$ be the complete record of their specifications and readings, $(a_1,Y_1,\ldots,a_{N_\pi},Y_{N_\pi})$, together with any learner-accessible policy randomness. A learning rule $\decoder$ maps this record to the field estimate $\lawhat=\decoder(T^\pi)\in\F$. Policy and decoder randomness are independent of the world and measurement noise; formal measurability conditions are in Appendix~\ref{app:model}.

Each episode costs $c_a\ge c_{\min}>0$, accounting for the preparations, readouts, and elapsed time charged by the apparatus. A budget $B\ge0$ limits their total acquisition cost. Define $\Pi_{\A}(B)$ to be the set of admissible policies satisfying $\sum_{j=1}^{N_\pi}c_{a_j}\le B$ on every realized run. Consequently $N_\pi\le\lfloor B/c_{\min}\rfloor$. This budget measures data acquisition; model size and training computation are not charged.

For example, charging one unit per independent release allows at most $\lfloor B\rfloor$ releases. Additional optimization on the same record may improve a particular estimator, but it cannot purchase another physical observation. This separation lets us ask what any architecture could achieve before studying its computational implementation.

The best uniform recovery error at budget $B$ is
\begin{equation}\label{eq:risk}
 \mathfrak R_{\A}^\star(B)
 =\inf_{\pi\in\Pi_{\A}(B),\,\decoder}\sup_{w\in\W}
       \E_w\loss\bigl(F_w,\decoder(T^\pi)\bigr)^2.
\end{equation}
For a fixed world, $\E_w$ averages measurement noise and algorithm randomness. The supremum takes the worst candidate world; the infimum chooses the best single experiment-and-decoder procedure for the whole class. The reverse question is the least budget that permits a prescribed accuracy:
\begin{equation}\label{eq:budgetfrontier}
 B_{\A}^\star(\varepsilon)
 =\inf\!\left\{B\chg{\ge0}:\exists\pi\in\Pi_{\A}(B),\decoder\text{ with }
       \sup_{w\in\W}\E_w\loss\bigl(F_w,\decoder(T^\pi)\bigr)^2\le\varepsilon\right\}.
\end{equation}
Here $\varepsilon$ is expected squared field error, $\sqrt\varepsilon$ is its root-mean-square scale, and $\inf\varnothing=+\infty$. Both definitions range over all representations and learning rules, rather than a specified neural architecture.

The results of Sections~\ref{sec:distinguishability}--\ref{sec:recovery} use the following standing assumption.
\begin{assumption}[Compact world class]\label{ass:standing}
\begin{enumerate}[label=(\roman*)]
 \item The world class $\W$ is a nonempty compact metric space with its Borel $\sigma$-algebra.
 \item For every $w\in\W$, the target field $F_w|_D$ is continuous on $D$, and the map $w\mapsto F_w|_D$ is continuous from $\W$ into $(\F,\loss)$.
 \item For every episode $a\in\A$, the covariance $\Sigma_a\succ0$ does not depend on $w$, the cost satisfies $c_{\min}\le c_a<\infty$, and the observation mean $w\mapsto m_a(w)$ is continuous on $\W$.
 \item Every trajectory used by an episode exists, is unique, and remains in $\mathcal X$ through its final readout.
 \item At least two worlds have different target fields, so that the diameter $D_F=\max_{w,v\in\W}\loss(F_w,F_v)$, which is attained by (i)--(ii), is strictly positive.
\end{enumerate}
\end{assumption}
Compactness and continuity allow finite approximations at any fixed precision; they do not require finitely many physical parameters. Appendix~\ref{app:model} gives the measurability conventions for policies, transcripts, and decoders.

\section{What Experiments Can Distinguish}\label{sec:distinguishability}
Before asking how many measurements are needed, consider what could be learned if an experiment's mean response were known exactly. The circular-orbit example shows why this question comes first: perfect knowledge of that response still leaves two different laws possible.

The simplest obstruction is exact agreement of experimental responses. If two worlds give the same distribution of readings for every permitted episode, no choice of later experiments can distinguish them. In the present model the noise covariance is known and world-independent, so equality of these distributions is exactly equality of their means. Define
\begin{equation}\label{eq:equivalence}
 w\sim_{\A}v\ \Longleftrightarrow\ m_a(w)=m_a(v)\ \forall a\in\A,
 \qquad
 \kappa_{\A}=\max_{w\sim_{\A}v}\loss(F_w,F_v).
\end{equation}
Equivalent worlds induce identical complete transcript laws, including under adaptive episode selection. The triangle inequality then yields the ambiguity floor.
\begin{proposition}[Ambiguity floor]\label{prop:floor}
Under Assumption~\ref{ass:standing}, the maximum defining $\kappa_{\A}$ in~\eqref{eq:equivalence} is attained. If $w\sim_{\A}v$, then every policy $\pi$, including an adaptive policy with randomized stopping, induces the same transcript law in $w$ and in $v$. Consequently
\begin{equation}\label{eq:floor}
 \mathfrak R_{\A}^\star(B)\ge\kappa_{\A}^2/4\qquad\text{for every }B\ge0.
\end{equation}
\end{proposition}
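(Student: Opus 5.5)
The proof splits into three parts: attainment of the maximum, equality of transcript laws, and the two-point lower bound.

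\emph{Attainment.} I will show that the set $E=\{(w,v)\in\W\times\W: w\sim_{\A}v\}$ is closed in the compact space $\W\times\W$. For each episode $a$, the set $\{(w,v): m_a(w)=m_a(v)\}$ is closed because $m_a$ is continuous by Assumption~\ref{ass:standing}(iii). The set $E$ is an intersection of such closed sets, hence closed and therefore compact. It is nonempty, since it contains the diagonal. The map $(w,v)\mapsto\loss(F_w,F_v)$ is continuous by Assumption~\ref{ass:standing}(ii) and the triangle inequality for $\loss$. A continuous function on a nonempty compact set attains its maximum, so $\kappa_{\A}$ is attained.

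\emph{Equal transcript laws.} Fix a policy $\pi$ with its internal randomness $R$, which is independent of the world and the noise. I will describe the transcript law sequentially. Given $R$ and the past record $(a_1,Y_1,\dots,a_{j-1},Y_{j-1})$, the policy chooses to stop or to run an episode $a_j$ through a measurable kernel that does not depend on the world. Conditionally on $a_j$ and the past, the reading $Y_j$ has law $N(m_{a_j}(w),\Sigma_{a_j})$, because the noise is independent across executions. If $w\sim_{\A}v$, this conditional law is the same for $w$ and for $v$, since the covariance does not depend on the world. Since $N_\pi\le\lfloor B/c_{\min}\rfloor$ is finite, an induction on the number of steps, using the Ionescu--Tulcea construction from Appendix~\ref{app:model}, shows that the joint laws of $(R,T^\pi)$ agree. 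Any additional decoder randomness is also independent of the world, so the joint law of $(T^\pi,\lawhat)$ agrees as well. The main technical care lies here: I must state the measurability of the stopping rule and the kernels so that the induction is legitimate for every finite horizon.

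\emph{Floor.} Take a maximizing pair $(w,v)$ with $\loss(F_w,F_v)=\kappa_{\A}$. Because $\lawhat$ has the same law under $w$ and $v$, the triangle inequality gives
\[
\kappa_{\A}\le \loss(F_w,\lawhat)+\loss(F_v,\lawhat).
\]
Squaring and using $(\alpha+\beta)^2\le 2\alpha^2+2\beta^2$, then taking expectations, I get
\[
\kappa_{\A}^2\le 2\,\E_w\loss(F_w,\lawhat)^2+2\,\E_v\loss(F_v,\lawhat)^2\le 4\max\{\E_w\loss(F_w,\lawhat)^2,\,\E_v\loss(F_v,\lawhat)^2\}.
\]
Taking expectations of a function of $\lawhat$ under a single common law is exactly where the previous step is used. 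The supremum over $\W$ in~\eqref{eq:risk} is at least this maximum, so it is at least $\kappa_{\A}^2/4$. This holds for every $\pi\in\Pi_{\A}(B)$ and every decoder $\decoder$, which yields~\eqref{eq:floor}.
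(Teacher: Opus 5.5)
Your proposal is correct and follows essentially the same route as the paper. Attainment comes from the closed intersection of the equality constraints (which contains the diagonal) in compact $\W^2$. Equality of transcript laws follows by induction over the finitely bounded transcript with world-independent selection kernels; the paper does not invoke Ionescu--Tulcea but handles randomized stopping by padding stopped histories to length $\lfloor B/c_{\min}\rfloor$ with a null symbol, which is exactly what your finite-horizon induction needs. The floor then comes from integrating the pointwise bound $\loss(F_w,f)^2+\loss(F_v,f)^2\ge\kappa_{\A}^2/2$ under the common law of $\lawhat$. The paper phrases this last step through its general two-point inequality \eqref{eq:app_twopoint}, which reduces to your argument when the total variation is zero.
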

Appendix~\ref{app:information} proves Proposition~\ref{prop:floor}.
Repeating the same kind of evidence cannot recover the missing distinction; a larger predictive model does not alter this floor.

For the running example, first restrict the apparatus to repetitions of the circular preparation, with any finite readout schedule. The two noiseless response vectors consist entirely of ones. Independent noise gives more precise estimates of those ones, but no evidence favoring either force law. On the two-world class $\W=\{w_G,w_H\}$, equation~\eqref{eq:equivalence} gives $\kappa_{\A}=\delta_{\rm law}$. Consequently every procedure has worst-case squared field error at least $\delta_{\rm law}^2/4$, with the distance evaluated in equation~\eqref{eq:orbit-law-distance}.

Here the bound can be attained without data: return the midpoint field $(F_G+F_H)/2$. Its squared distance from either candidate is $\delta_{\rm law}^2/4$. This is allowed because the output space contains arbitrary square-integrable fields, including estimates outside the candidate class. A different decision problem would require selecting one of the two named laws. Under equal priors that selection has error probability $1/2$, and random selection gives squared field risk $\delta_{\rm law}^2/2$. A testing error and a field-recovery error therefore have different meanings and numerical scales, even on this simple example.

Now permit the release preparation. Its unequal radial means make the two worlds inequivalent, so $\kappa_{\A}=0$ for this two-world class. A single noisy reading still does not identify the law perfectly: removal of an exact ambiguity only makes repeated informative measurement potentially useful. Conversely, enlarging the evaluation region can expose additional disagreements between fields while leaving the permitted observations unchanged. Both what must be recovered and what can be observed must therefore be specified. The release establishes separability of these two candidates; it does not show that one release can recover an unrestricted unknown force field. A richer candidate class may contain further laws that agree on both preparations and require additional experiments.

The quantity $\kappa_{\A}$ expresses an obstruction of the apparatus, rather than an error caused by a finite sample. It depends on the competing laws, their possible sensors, and the permitted experiments together. If it is positive, some physical distinction remains inaccessible even as the budget grows. If it is zero, every pair of different laws can be separated by at least one experiment. Section~\ref{sec:recovery} will show why compactness turns this pairwise condition into uniform recovery at every positive tolerance.

We now turn from distinctions that are absent to distinctions that must be measured accurately. This requires a measure of the information needed to describe the law and a separate measure of the information the apparatus can reveal.

\section{Information Required and Information Available}\label{sec:information}
Shannon's theory separates the information required to describe a source from the information a channel can convey \citep{shannon1948,shannon1959}. For physical-law learning, these are different objects: complexity belongs to the target field on $D$, while accessible information belongs to the experiments and their sensors. We first consider the description of a law at finite resolution, then allow an average reconstruction error, and finally ask what information can be obtained from actual experiments.

If there were only finitely many candidate laws, naming one would require an index into that finite set. For a continuous class, the corresponding first step is to specify the law only to the desired physical resolution.

The two-law example in Figure~\ref{fig:orbit} is a useful first source. Suppose the inverse-square law $F_G$ and the linear restoring law $F_H$ are equally likely. Before observing an experiment, the learner lacks one binary choice. A circular trajectory does not settle that choice: both laws generate the same radial response. A release from rest can settle it because the responses differ. The uncertainty to be resolved is unchanged when the apparatus changes; what changes is the evidence available to resolve it. For a larger class, there are more choices, and sufficiently similar laws may be treated as equivalent at the requested accuracy.

Let $\F_0=\{F_w|_D:w\in\W\}\subseteq\F$ be the set of candidate physical fields on the evaluation domain. In distances and information quantities below, $F_w$ denotes this restriction. For $r>0$, let $N(\F_0,\loss,r)$ be the smallest number of closed radius-$r$ balls, centered in $\F_0$, that cover it. Its resolution-dependent metric entropy is
\begin{equation}\label{eq:entropy}
 \mathsf H_{\F_0}(r)=\log N(\F_0,\loss,r).
\end{equation}
All logarithms and information quantities use nats. Compactness makes this entropy finite at every positive resolution. A nearest-center description of the law has squared error at most $r^2$ and uses at most $\lceil\mathsf H_{\F_0}(r)/\log2\rceil$ bits. This is a description cost assuming the law is available to an encoder; it does not say those bits can be inferred from data. Neither network parameter count nor differential entropy of a continuous parameterization is substituted for~\eqref{eq:entropy}.

Write $\delta_{\rm law}=\loss(F_G,F_H)$ for the physical distance computed in~\eqref{eq:orbit-law-distance}. For this two-member class, $\mathsf H_{\F_0}(r)=\log2$ when $0<r<\delta_{\rm law}$ and equals zero when $r\ge\delta_{\rm law}$. The threshold is $\delta_{\rm law}$ because covering centers are required to be candidate laws. Below that radius, neither candidate can describe the other accurately enough. This simple example also shows why resolution must be declared: the same candidate class can require either one bit or no bits, depending on which discrepancies the task tolerates.

Let $P$ be a Borel prior on the joint world class, $W\sim P$, and $F=F_W$ the random target field. Allow an abstract reconstruction $\widetilde F\in\F$ through any measurable conditional distribution $Q(d\widetilde f\mid f)$. For random elements $X$ and $Y$ with joint law $P_{XY}$ and marginals $P_X,P_Y$, the mutual information is $I(X;Y)=\KL(P_{XY}\,\|\,P_X\otimes P_Y)\in[0,\infty]$. The law rate--distortion function is
\begin{equation}\label{eq:rd}
 \rd_P(\varepsilon)=
 \inf_{Q:\,\E_{P,Q}\loss(F,\widetilde F)^2\le\varepsilon}
 I_{P,Q}(F;\widetilde F).
\end{equation}
It is the least mutual information compatible with the requested average fidelity. It is nonincreasing in $\varepsilon$ and satisfies $\rd_P(r^2)\le\mathsf H_{\F_0}(r)$. The hypothetical output $\widetilde F$ in~\eqref{eq:rd} must be distinguished from the actual estimate $\lawhat=\decoder(T^\pi)$, whose information comes from an executable experiment.

Unlike the cover, the reconstruction in~\eqref{eq:rd} may lie outside the candidate class. Under the equal two-law prior, exact recovery requires $\rd_P(0)=\log2$ nats. However, the constant midpoint field $(F_G+F_H)/2$ has squared error $\delta_{\rm law}^2/4$ under either law and uses no observations. Thus $\rd_P(\varepsilon)=0$ for $\varepsilon\ge\delta_{\rm law}^2/4$. Guessing one of the two laws instead has selection error $1/2$ and average squared law error $\delta_{\rm law}^2/2$. These are different reconstruction rules and different losses. Rate--distortion measures the information needed for the declared field error, not necessarily the information needed to name a candidate correctly.

For the same prior, define the experimental information capacity
\begin{equation}\label{eq:capacity}
 \capinfo_{\A}(P,B)=\sup_{\pi\in\Pi_{\A}(B)}I_P(F;T^\pi)\in[0,\infty].
\end{equation}
This is a prior-dependent capacity at a finite acquisition budget, not a memoryless-channel capacity per use. It is nondecreasing in $B$. Although the transcript depends on both $F_W$ and $H_W$, the target of its information is $F$. Information about the entire world can also describe unknown sensing: $I_P(W;T^\pi)=I_P(F;T^\pi)+I_P(W;T^\pi\mid F)$. Counting that additional information as recovered physical law would change the question. For the equal prior on an equivalent pair with different laws, the capacity is zero at every budget. This is the information-theoretic form of the obstruction in Section~\ref{sec:distinguishability}.

\begin{samepage}
\begin{theorem}[Rate--distortion converse]\label{thm:rd}
Let Assumption~\ref{ass:standing} hold, let $B,\varepsilon\ge0$, and let $P$ be a Borel prior on $\W$. Suppose that a policy $\pi\in\Pi_{\A}(B)$ and a decoder $\decoder$ satisfy $\sup_{w\in\W}\E_w\loss(F_w,\decoder(T^\pi))^2\le\varepsilon$, and put $\lawhat=\decoder(T^\pi)$. Then
\begin{equation}\label{eq:rdchain}
 \boxed{\rd_P(\varepsilon)\le I_P(F;\lawhat)
       \le I_P(F;T^\pi)\le\capinfo_{\A}(P,B).}
\end{equation}
Consequently,
\begin{equation}\label{eq:rdbudget}
 B_{\A}^\star(\varepsilon)\ge
 \sup_P\inf\{B\ge0:\capinfo_{\A}(P,B)\ge\rd_P(\varepsilon)\},
\end{equation}
where the supremum is over Borel priors $P$ on $\W$ and $\inf\varnothing=+\infty$.
\end{theorem}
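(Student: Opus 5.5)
We prove the boxed chain~\eqref{eq:rdchain} one inequality at a time, reading left to right, and then turn it into the budget bound~\eqref{eq:rdbudget}.

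\emph{First inequality.} Put $W\sim P$, $F=F_W$ and $\lawhat=\decoder(T^\pi)$. The conditional law of $\lawhat$ given $F$ defines a reconstruction kernel $Q(d\widetilde f\mid f)$. Here $F$ takes values in the separable metric space $\F_0$, since it is the continuous image of a compact space, and $\F$ is Polish. A regular conditional distribution therefore exists by the measurability conventions of Appendix~\ref{app:model}. Averaging the uniform hypothesis over the prior gives
\[
\E_{P,Q}\loss(F,\lawhat)^2=\int \E_w\loss(F_w,\lawhat)^2\,dP(w)\le\varepsilon .
\]
So $Q$ is feasible in~\eqref{eq:rd}, and $\rd_P(\varepsilon)\le I_{P,Q}(F;\widetilde F)=I_P(F;\lawhat)$, because $(F,\widetilde F)$ under $P,Q$ has the same joint law as $(F,\lawhat)$.

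\emph{Second inequality.} This is data processing. The decoder's internal randomness is independent of $(W,\text{noise},\text{policy randomness})$, so $F\to W\to T^\pi\to\lawhat$ is a Markov chain. Absorbing the randomness into an independent auxiliary variable $U$, we get $\lawhat=g(T^\pi,U)$ with $U$ independent of $(F,T^\pi)$. Hence
\[
I(F;\lawhat)\le I(F;T^\pi,U)=I(F;T^\pi)+I(F;U\mid T^\pi)=I(F;T^\pi).
\]
The last step holds because $U$ is independent of $(F,T^\pi)$. The mutual informations live on general standard Borel spaces, so we use the Dobrushin/Gelfand--Yaglom definition via the KL divergence, for which data processing and the chain rule hold in full generality.

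\emph{Third inequality.} This is immediate from~\eqref{eq:capacity}, since $\pi\in\Pi_{\A}(B)$.

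\emph{Budget bound.} Fix a prior $P$ and any $B$ in the feasible set defining $B_{\A}^\star(\varepsilon)$. Some $\pi\in\Pi_{\A}(B)$ and some $\decoder$ meet the accuracy requirement, so the chain gives $\capinfo_{\A}(P,B)\ge\rd_P(\varepsilon)$. Thus every feasible $B$ lies in $\{B\ge0:\capinfo_{\A}(P,B)\ge\rd_P(\varepsilon)\}$, and taking infima yields the bound for that $P$. If no $B$ is feasible, the left side is $+\infty$ and the inequality is trivial. Taking the supremum over $P$ gives~\eqref{eq:rdbudget}.

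The main obstacle is measure-theoretic rather than conceptual, in the first two steps. We must justify that $T^\pi$ is a well-defined random element of a standard Borel space even under adaptive policies with random stopping, since its length $N_\pi$ is random; the transcript lives on a disjoint union of finite-length record spaces. We must also justify that $\lawhat$ is measurable. I would rely on the conventions in Appendix~\ref{app:model} for both points. With those in place, the kernel construction and data processing for general mutual information are standard.
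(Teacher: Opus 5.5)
Your proposal is correct and follows the paper's proof essentially step for step. You average the uniform risk bound over $P$ so that the conditional law of $\lawhat$ given $F$ is feasible in~\eqref{eq:rd}, apply data processing along $F\to T^\pi\to\lawhat$ (with the decoder seed independent of $(F,T^\pi)$), invoke the definition~\eqref{eq:capacity}, and place every feasible budget in $\{B\ge0:\capinfo_{\A}(P,B)\ge\rd_P(\varepsilon)\}$ for each fixed prior before taking the supremum. The measure-theoretic points you flag are the conventions the paper itself relies on, in Appendix~\ref{app:model} and at the start of Appendix~\ref{app:information}: regular conditional distributions on standard Borel spaces, padded transcripts of random length, and an independent decoder seed.
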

\end{samepage}
\begin{proof}
A uniform risk bound holds after averaging over $P$, so the conditional law of $\lawhat$ given $F$ is feasible in~\eqref{eq:rd}. The decoder only sees the transcript and independent randomness; hence $F\to T^\pi\to\lawhat$ is a Markov chain. Data processing gives the middle inequality, and~\eqref{eq:capacity} gives the last. Every feasible budget satisfies the chain for every prior, yielding~\eqref{eq:rdbudget}.
\end{proof}

For the circular-orbit apparatus, the chain has a transparent interpretation. The transcript carries zero information about the binary law, so every estimate computed from it also carries zero information. The best constant reconstruction attains the midpoint floor, but an error below that floor requires a distinction absent from the data. If release experiments are admitted, the observations carry information about that same binary choice; independent repetitions can make the choice increasingly reliable. The theorem applies equally to a symbolic estimator and a learned world model because neither can increase the information about the target after acquisition. It does not prescribe the internal representation used to retain that information.

The theorem is Shannon's distortion-constrained information converse applied to physical experiments. Metric entropy makes its implication concrete. For $r>0$, let $M(\F_0,\loss,2r)$ be the largest cardinality of a subset of $\F_0$ whose elements have pairwise distances strictly greater than $2r$; compactness of $\F_0$ makes it finite. Nearest-law decoding and Fano's inequality give the following bound.
\begin{corollary}[Packing lower bound]\label{cor:packing}
Let Assumption~\ref{ass:standing} hold, let $r>0$, and suppose $M=M(\F_0,\loss,2r)\ge2$. Let $f_1,\ldots,f_M\in\F_0$ have pairwise distances strictly greater than $2r$, choose $w_j\in\W$ with $F_{w_j}=f_j$, and let $P_r$ be the uniform prior on $\{w_1,\ldots,w_M\}$. Then, for every $\varepsilon\ge0$,
\begin{equation}\label{eq:packing}
 \rd_{P_r}(\varepsilon)
 \ge(1-\varepsilon/r^2)\log M-\log2.
\end{equation}
If, in addition, some $\pi\in\Pi_{\A}(B)$ and decoder $\decoder$ have uniform squared law risk at most $\varepsilon$, then $\capinfo_{\A}(P_r,B)\ge(1-\varepsilon/r^2)\log M-\log2$. In particular, if $\varepsilon>0$ and $M(\F_0,\loss,4\sqrt\varepsilon)\ge2$, such a budget satisfies $\capinfo_{\A}(P_{2\sqrt\varepsilon},B)\ge\tfrac34\log M(\F_0,\loss,4\sqrt\varepsilon)-\log2$.
\end{corollary}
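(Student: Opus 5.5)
The plan is a standard Fano argument run on the rate--distortion side. I fix an arbitrary feasible conditional law $Q(d\widetilde f\mid f)$ with $\E_{P_r,Q}\loss(F,\widetilde F)^2\le\varepsilon$ and lower bound $I(F;\widetilde F)$.

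\emph{Step 1: reduce estimation to testing.} Let $J$ be the uniform index of the packing element, so that $F=f_J$. Since $F$ determines $J$ and conversely, $I(F;\widetilde F)=I(J;\widetilde F)$. Define the nearest-law decoder $\widehat J\in\arg\min_j\loss(\widetilde F,f_j)$, breaking ties measurably by smallest index. Because the packing has pairwise distances strictly greater than $2r$, the triangle inequality shows that $\loss(\widetilde F,f_J)\le r$ forces $\widehat J=J$: any other $k$ has $\loss(\widetilde F,f_k)>2r-r=r$. Hence $\{\widehat J\ne J\}\subseteq\{\loss(\widetilde F,F)>r\}$. Markov's inequality then gives $P_e:=\Pr(\widehat J\ne J)\le\varepsilon/r^2$. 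This is vacuous, but still correct, when $\varepsilon\ge r^2$.

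\emph{Step 2: apply Fano.} Data processing along $J\to\widetilde F\to\widehat J$ gives $I(J;\widetilde F)\ge I(J;\widehat J)$. Fano's inequality in nats gives $H(J\mid\widehat J)\le h(P_e)+P_e\log(M-1)\le\log2+P_e\log M$. Therefore
\[
I(J;\widehat J)\ge\log M-\log2-P_e\log M\ge(1-\varepsilon/r^2)\log M-\log2.
\]
Taking the infimum over feasible $Q$ yields \eqref{eq:packing}. When $\varepsilon>r^2$ the right side is negative and the bound is trivial. I expect the only delicate point to be measurability of $\widehat J$ and of the tie-breaking rule on the space $\F$. This is handled by noting that $\widetilde f\mapsto\loss(\widetilde f,f_j)$ is continuous, so $\widehat J$ is a Borel function of $\widetilde F$.

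\emph{Step 3: the capacity statement.} $P_r$ is a Borel prior supported on finitely many points of $\W$. A policy and decoder with uniform risk at most $\varepsilon$ therefore satisfy the hypotheses of Theorem~\ref{thm:rd} with $P=P_r$. The chain \eqref{eq:rdchain} gives $\capinfo_{\A}(P_r,B)\ge\rd_{P_r}(\varepsilon)$, which combined with \eqref{eq:packing} is the claim.

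\emph{Step 4: the special case.} Take $r=2\sqrt\varepsilon$, so that $2r=4\sqrt\varepsilon$ and $\varepsilon/r^2=1/4$. Then $1-\varepsilon/r^2=\tfrac34$, giving $\capinfo_{\A}(P_{2\sqrt\varepsilon},B)\ge\tfrac34\log M(\F_0,\loss,4\sqrt\varepsilon)-\log2$, with the packing realizing $M(\F_0,\loss,4\sqrt\varepsilon)$. Such a maximal packing exists because $M$ is a finite maximum by compactness, as noted before the corollary.
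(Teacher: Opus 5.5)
Your proposal is correct and follows essentially the same route as the paper. Both arguments use nearest-packing-element decoding with the triangle inequality and Markov's inequality to get $p_e\le\varepsilon/r^2$, then Fano's inequality on the uniform index and the infimum over abstract channels, and finally Theorem~\ref{thm:rd} with the choice $r=2\sqrt\varepsilon$. The only cosmetic difference is that you pass through $I(J;\widehat J)$ by data processing before applying Fano, whereas the paper applies Fano directly to $I(J;\widetilde F)$.
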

The bound is informative only when $\varepsilon<r^2$. For example, set $r=2\sqrt\varepsilon>0$ and suppose the resulting packing contains at least two laws. The required information is then at least $\tfrac34\log M(\F_0,\loss,4\sqrt\varepsilon)-\log2$ nats. Entropy and information thus connect a physical accuracy requirement to the distinctions acquisition must resolve, as in information-theoretic minimax estimation \citep{yang1999}. Appendix~\ref{app:information} proves Corollary~\ref{cor:packing} and supplies Gaussian information-per-cost bounds.

This inequality is a necessary condition. Exceeding the scalar information requirement does not ensure recoverability. A sensor may reveal several bits about a small component of the field while hiding a single bit that controls most of its physical error; Appendix~\ref{app:information} gives a finite example with $\capinfo_{\A}(P,B)>\rd_P(\varepsilon)$ but unattainable distortion $\varepsilon$. Unlike a transmitter that knows and encodes a source, the experimenter only probes an unknown law through a constrained apparatus. A recovery theorem must therefore exhibit experiments that resolve the relevant physical distinctions. This is the role of the next section.

\section{From Distinguishable Responses to Recovered Laws}\label{sec:recovery}
The converse asks how much information recovery requires. We now construct an experimental program and a decoder that suffice. The construction treats a world as a joint hypothesis about dynamics and sensing, because the same physical field can produce different responses under different sensors. Its output is only the physical field. The idea is to list finitely many representative responses, repeat a program that separates physically different laws, and return the field whose representative response best matches the observed average.

For Figure~\ref{fig:orbit}, this is simply a comparison between two predicted release radii. Prepare the object at $\boldsymbol q=(2,0)$ with $\boldsymbol v=0$, observe its radius at $t_*=0.4$, and compare the average reading with the two candidate predictions. Let $r_G(t)$ and $r_H(t)$ denote the release radii predicted by $F_G$ and $F_H$. The relevant experimental separation is the response gap $d_{\rm rel}=|r_G(t_*)-r_H(t_*)|$, whereas the relevant reconstruction loss is the field distance $\delta_{\rm law}$ on the evaluation domain. They answer different questions: the first determines how difficult the measurement is; the second determines how costly an incorrect conclusion would be. The construction below keeps these two distances separate for an arbitrary compact world class.

To compare observations with different noise covariances, transform each response to $Z_a=\Sigma_a^{-1/2}Y_a$ and write its mean as $\mu_a(w)=\Sigma_a^{-1/2}m_a(w)$. This known invertible transformation, called whitening, gives identity noise covariance and preserves all information. For a finite nonempty program $A\subset\A$, define its stacked mean $\mu_A(w)=(\mu_a(w))_{a\in A}$, output dimension $Q_A=\sum_{a\in A}k_a$, and one-round cost $\cost(A)=\sum_{a\in A}c_a$. After $n\ge1$ independent rounds, with $n$ an integer, the average response is
\begin{equation}\label{eq:average}
 \overline Z_A=\mu_A(w)+G/\sqrt n,\qquad G\sim N(0,I_{Q_A}),
\end{equation}
at cost $n\cost(A)$. At a physical resolution $0<r\le D_F$, let
\begin{equation}\label{eq:sep}
 \Delta_A(r)=\min_{\loss(F_w,F_v)\ge r}
        \|\mu_A(w)-\mu_A(v)\|_2.
\end{equation}
The minimum is over joint worlds $w,v\in\W$; under Assumption~\ref{ass:standing} the constraint set is nonempty and compact and $\mu_A$ is continuous, so the minimum is attained. A positive gap means that every pair of laws whose confusion matters at resolution $r$ is separated in response space, even after allowing their sensors to differ.

In the two-law release example, $A$ contains one scalar readout with noise standard deviation $\sigma$, so $\Delta_A(r)=d_{\rm rel}/\sigma$ for $0<r\le\delta_{\rm law}$. Whitening therefore expresses the radius gap in units of observational noise. The circular-orbit program instead has $\Delta_A(r)=0$ throughout this range. Repeating a separating program reduces the noise of its mean by $\sqrt n$; repeating a zero-gap program never makes its two means different. This is the operational distinction between acquiring more data and acquiring a new kind of evidence.

For $h>0$, a codebook of radius $h$ is a finite family of representative worlds $v_1,\ldots,v_{M}\in\W$ whose responses form an $h$-cover of the compact set $\mu_A(\W)$:
\begin{equation}\label{eq:codebook}
 \sup_{w\in\W}\min_{1\le j\le M}
 \|\mu_A(w)-\mu_A(v_j)\|_2\le h.
\end{equation}
Compactness of $\mu_A(\W)$ guarantees that such codebooks exist. Let $M_A(h)\ge1$ denote the smallest size of a codebook of radius $h$. It includes response variability due to sensing, so its logarithm is not the law entropy of~\eqref{eq:entropy}. The response-matching decoder is
\begin{equation}\label{eq:decoder}
 \widehat j=\min\argmin_{1\le j\le M}
 \|\overline Z_A-\mu_A(v_j)\|_2,
 \qquad \lawhat=F_{v_{\widehat j}}.
\end{equation}
The smallest-index rule resolves ties; the true world need not occur in the codebook.

For the running example, take the two candidate worlds themselves as codewords. Their means are represented exactly, and nearest-response matching places the decision threshold halfway between the predicted radii. The probability of selecting the wrong law under either candidate is $p_e(n)=\Phi(-\sqrt n\,d_{\rm rel}/(2\sigma))$, where $\Phi$ is the standard normal distribution function. The selected field then has expected squared law error $\delta_{\rm law}^2p_e(n)$. This is an exact risk for this particular two-candidate decoder, not a claim that candidate selection minimizes squared error over all field-valued outputs. A posterior mean, for example, may interpolate between the two fields. The theorem gives a uniform guarantee without requiring such a closed-form binary calculation.

\begin{theorem}[Constructive law recovery]\label{thm:recovery}
Let Assumption~\ref{ass:standing} hold, let $A\subset\A$ be a finite nonempty program, and let $n\ge1$ be an integer. Let $0<r\le D_F$ satisfy $\Delta_A(r)>0$, let $0<h<\Delta_A(r)$, and let $v_1,\ldots,v_M$ be a codebook of radius $h$ with $M=M_A(h)$. Applied to the averaged response~\eqref{eq:average} of $n$ independent rounds of $A$, the decoder~\eqref{eq:decoder} satisfies
\begin{align}
 \sup_{w\in\W}\Pr_w\{\loss(F_w,\lawhat)\ge r\}
 &\le\min\!\left\{1,M_A(h)e^{-n(\Delta_A(r)-h)^2/8}\right\},\label{eq:decode_error}\\
 \sup_{w\in\W}\E_w\loss(F_w,\lawhat)^2
 &\le r^2+D_F^2M_A(h)e^{-n(\Delta_A(r)-h)^2/8}.
 \label{eq:decode_risk}
\end{align}
Both bounds remain valid when $M_A(h)$ is replaced by the size of any codebook of radius $h$. For $0<\varepsilon<D_F^2$, if $\Delta_A(\sqrt{\varepsilon/2})>0$, setting $r=\sqrt{\varepsilon/2}$, $h=\Delta_A(r)/2$, and $n=\lceil 32\Delta_A(r)^{-2}\log(2M_A(h)D_F^2/\varepsilon)\rceil$ yields uniform risk at most $\varepsilon$ at deterministic cost $n\cost(A)$; hence
\begin{equation}\label{eq:constructive_budget}
 B_{\A}^\star(\varepsilon)\le
 \cost(A)\left\lceil\frac{32}{\Delta_A(r)^2}
    \log\frac{2M_A(\Delta_A(r)/2)D_F^2}{\varepsilon}\right\rceil.
\end{equation}
\end{theorem}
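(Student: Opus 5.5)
The proof reduces the error event to a union of Gaussian half-space events, one for each ``bad'' codeword, and bounds each by a one-dimensional Gaussian tail. Projecting onto a single direction, rather than bounding $\|G\|_2$ directly, is what keeps the output dimension $Q_A$ out of the exponent. Fix $w\in\W$ and write $g=G/\sqrt n$, so that $\overline Z_A=\mu_A(w)+g$.

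By~\eqref{eq:codebook}, there is an index $j^*$ with $\|\mu_A(w)-\mu_A(v_{j^*})\|_2\le h$. Since $h<\Delta_A(r)$, the definition~\eqref{eq:sep} forces $\loss(F_w,F_{v_{j^*}})<r$, because otherwise the responses would be at least $\Delta_A(r)$ apart. Hence the event $\{\loss(F_w,\lawhat)\ge r\}$ requires the decoder to select some ``bad'' index $j$ with $\loss(F_w,F_{v_j})\ge r$. For such $j$, again by~\eqref{eq:sep}, $\|\mu_A(w)-\mu_A(v_j)\|_2\ge\Delta_A(r)$. Ties are harmless, since any selection of $j$ still entails $\|\overline Z_A-\mu_A(v_j)\|_2\le\|\overline Z_A-\mu_A(v_{j^*})\|_2$. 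So the error event is contained in the union over bad $j$ of these comparison events, and I will bound each one separately.

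For a fixed bad $j$, put $a=\mu_A(w)-\mu_A(v_j)$ and $b=\mu_A(w)-\mu_A(v_{j^*})$. The comparison $\|a-g\|^2\le\|b-g\|^2$ is equivalent to the half-space condition
\[
2\langle g,a-b\rangle\ge\|a\|^2-\|b\|^2 .
\]
Here $\langle g,a-b\rangle\sim N(0,\|a-b\|^2/n)$; the case $a=b$ is impossible because $\|a\|\ge\Delta_A(r)>h\ge\|b\|$. The key inequality is
\[
\frac{\|a\|^2-\|b\|^2}{2\|a-b\|}
=\frac{(\|a\|-\|b\|)(\|a\|+\|b\|)}{2\|a-b\|}
\ge\frac{\|a\|-\|b\|}{2}
\ge\frac{\Delta_A(r)-h}{2},
\]
which uses $\|a-b\|\le\|a\|+\|b\|$. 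Therefore each comparison event has probability at most $\Phi\bigl(-\sqrt n(\Delta_A(r)-h)/2\bigr)\le e^{-n(\Delta_A(r)-h)^2/8}$, by the standard bound $\Phi(-t)\le e^{-t^2/2}$ for $t\ge0$. A union bound over at most $M$ codewords, together with the trivial bound $1$, gives~\eqref{eq:decode_error}. Nothing here used minimality of the codebook, so the bound holds for any codebook of radius $h$.

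For~\eqref{eq:decode_risk}, I will split the expectation on the event $\{\loss(F_w,\lawhat)\ge r\}$. Off this event, the squared loss is below $r^2$. On it, the squared loss is at most $D_F^2$, because $\lawhat=F_{v_{\widehat j}}$ with $v_{\widehat j}\in\W$, and by Assumption~\ref{ass:standing}(v) any two candidate fields lie within $D_F$. Multiplying $D_F^2$ by~\eqref{eq:decode_error} gives the second term.

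For the budget statement, take $r=\sqrt{\varepsilon/2}$, which lies in $(0,D_F]$ because $\varepsilon<D_F^2$, and take $h=\Delta_A(r)/2$. Then $(\Delta_A(r)-h)^2/8=\Delta_A(r)^2/32$. The chosen $n$ makes $D_F^2M\,e^{-n\Delta_A(r)^2/32}\le\varepsilon/2$, so the total risk is at most $\varepsilon/2+\varepsilon/2=\varepsilon$. The procedure ``run $A$ exactly $n$ times, then decode'' is a nonadaptive policy of deterministic cost $n\,\cost(A)$. It therefore lies in $\Pi_{\A}(n\,\cost(A))$, which yields~\eqref{eq:constructive_budget}. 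Measurability of the decoder is immediate, since it is an argmin over finitely many continuous functions of $\overline Z_A$ with a fixed tie-break.

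The main obstacle is obtaining the dimension-free exponent $(\Delta_A(r)-h)^2/8$. The naive triangle-inequality route, $\|g\|\ge(\Delta_A(r)-h)/2$, would introduce a $\chi^2_{Q_A}$ tail that depends on $Q_A$. The half-space projection above avoids this, so that is the step I would verify most carefully.
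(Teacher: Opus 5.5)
Your proof is correct and follows the paper's argument essentially step for step. The shared steps are: a comparator codeword within $h$ of the truth, projection of the noise onto the line joining the two codeword responses, and the bound $(\|a\|^2-\|b\|^2)/(2\|a-b\|)\ge(\|a\|-\|b\|)/2$ via the triangle inequality. These are followed by a one-dimensional Gaussian tail, a union bound over bad codewords, the split $r^2+D_F^2\cdot(\text{tail})$ for the risk, and the same parameter choice. One harmless slip: with your $a=\mu_A(w)-\mu_A(v_j)$, the residual is $\|a+g\|$ rather than $\|a-g\|$, which only flips the sign of the projected Gaussian and so does not affect the bound.
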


To see the mechanism, compare a physically wrong candidate with a codeword whose mean is within $h$ of the truth. The wrong candidate is at least $\Delta_A(r)$ away in mean. It can have the smaller residual only through a one-dimensional Gaussian fluctuation with probability at most $e^{-n(\Delta_A(r)-h)^2/8}$. A union bound over the codebook proves~\eqref{eq:decode_error}; bounded target diameter gives~\eqref{eq:decode_risk}. Appendix~\ref{app:recovery} proves the result, including off-codebook truth. The sufficient budget separates three costs: the number of candidate responses, the precision needed to distinguish them, and the price of executing the program.

A small response codebook does not automatically mean a simple physical-law class, and a large one can reflect nuisance variation in the sensor. Its purpose is to approximate what the apparatus may observe. Positive separation ensures that matching those observations is enough to control the physical error. Constructing this codebook may nevertheless be computationally difficult: the result establishes an acquisition guarantee, and the budget does not charge for enumerating worlds, solving their forward dynamics, or searching representatives. It is therefore an existence construction against which a practical learning procedure can be assessed, rather than a prescribed scalable training algorithm.

The theorem also yields an exact answer to when recovery is possible. For a finitely supported prior $P$, let $\mathsf H_P(F)$ denote the ordinary Shannon entropy of its finite random target law, counting identical fields only once: if $f_1,\ldots,f_m$ are the distinct values of $F=F_W$ under $P$, then $\mathsf H_P(F)=-\sum_{i=1}^m P(F=f_i)\log P(F=f_i)$.
\begin{corollary}[Experimental completeness]\label{cor:characterization}
Under Assumption~\ref{ass:standing}, the following are equivalent: (i) $\kappa_{\A}=0$; (ii) for every finitely supported prior $P$, $\lim_{B\to\infty}\capinfo_{\A}(P,B)=\mathsf H_P(F)$; (iii) $\lim_{B\to\infty}\mathfrak R_{\A}^\star(B)=0$.
\end{corollary}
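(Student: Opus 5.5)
I would prove the cycle (i)$\Rightarrow$(iii)$\Rightarrow$(ii)$\Rightarrow$(i), using Theorem~\ref{thm:recovery} for the constructive direction, Proposition~\ref{prop:floor} together with data processing for the converses, and a finite-hypothesis Fano/consistency argument to link information with risk.

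\textbf{(i)$\Rightarrow$(iii).} Fix $0<r\le D_F$. For each pair $(w,v)$ with $\loss(F_w,F_v)\ge r$, condition (i) supplies an episode $a_{wv}\in\A$ with $m_{a_{wv}}(w)\ne m_{a_{wv}}(v)$. By continuity of the means, the same episode separates all pairs in an open neighbourhood of $(w,v)$ in $\W\times\W$. The set $\{(w,v):\loss(F_w,F_v)\ge r\}$ is compact, so finitely many such neighbourhoods cover it. Let $A_r$ be the finite union of the corresponding episodes. Then $\|\mu_{A_r}(w)-\mu_{A_r}(v)\|>0$ on this compact set, and continuity gives a positive minimum, namely $\Delta_{A_r}(r)>0$. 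Applying~\eqref{eq:constructive_budget} with $r=\sqrt{\varepsilon/2}$ shows $B^\star_\A(\varepsilon)<\infty$ for every $\varepsilon\in(0,D_F^2)$. Since $\mathfrak R^\star_\A$ is nonincreasing in $B$, it follows that $\mathfrak R^\star_\A(B)\to0$. This compactness step, which turns pairwise separability into a single finite program with a uniform gap, is the heart of the argument. The program is allowed to depend on $r$.

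\textbf{(iii)$\Rightarrow$(ii).} Let $P$ be supported on $w_1,\dots,w_k$, with distinct fields $f_1,\dots,f_m$ and minimal separation $\eta>0$. Data processing gives the upper bound $I_P(F;T^\pi)\le \mathsf H_P(F)$. For the lower bound, choose $B$ large enough that some $(\pi,\decoder)$ has uniform risk $\varepsilon\ll\eta^2$. Define $\hat\imath$ as the index of the $f_i$ nearest to $\lawhat$. By Markov's inequality, $\Pr\{\hat\imath\ne i\}\le \Pr\{\loss\ge\eta/2\}\le 4\varepsilon/\eta^2=:p_e$. Fano's inequality then gives $H(F\mid T^\pi)\le \log2\cdot h_b(p_e)+p_e\log m$, where the right side tends to $0$ as $\varepsilon\to0$. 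Hence $I_P(F;T^\pi)\ge \mathsf H_P(F)-o(1)$. Because $\capinfo_\A(P,B)$ is nondecreasing and bounded by $\mathsf H_P(F)$, its limit equals $\mathsf H_P(F)$.

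\textbf{(ii)$\Rightarrow$(i).} Suppose $\kappa_\A>0$. Proposition~\ref{prop:floor} says the maximum is attained, so there exist $w\sim_\A v$ with $F_w\ne F_v$. Take $P$ uniform on $\{w,v\}$, so that $\mathsf H_P(F)=\log2$. By Proposition~\ref{prop:floor}, every policy induces the same transcript law under $w$ and $v$, so $T^\pi$ is independent of $W$, and hence of $F$. Thus $\capinfo_\A(P,B)=0<\log2$ for all $B$, contradicting (ii). Equivalently, (iii)$\Rightarrow$(i) follows directly from the floor $\kappa_\A^2/4$.

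The routine checks are measurability and the claim that the transcript law is independent of $W$ under adaptive stopping. Both follow from the statement of Proposition~\ref{prop:floor}.
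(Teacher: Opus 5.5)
Your proposal is correct and follows essentially the paper's proof: the same compactness and finite-subcover argument on $\{(w,v):\loss(F_w,F_v)\ge r\}$ yields a finite program $A_r$ with $\Delta_{A_r}(r)>0$, to which Theorem~\ref{thm:recovery} applies; the same nearest-target Markov bound plus Fano gives information saturation; and the same uniform two-world prior shows $\capinfo_{\A}(P,B)=0<\log 2$ when an equivalent pair with distinct laws exists. The paper organizes these as (i)$\Leftrightarrow$(iii) and (i)$\Leftrightarrow$(ii) rather than as your cycle, but its (i)$\Rightarrow$(ii) step runs through the vanishing-risk procedures exactly as your (iii)$\Rightarrow$(ii) does. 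You should also state the trivial case where $F$ is constant under $P$: there $\eta$ is undefined, and both sides of (ii) vanish.
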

Indeed, if every distinct-law pair is separated, compactness supplies a finite program $A_r$ with $\Delta_{A_r}(r)>0$ at each $0<r\le D_F$. Theorem~\ref{thm:recovery} then gives a finite budget at every tolerance. Repetitions also identify the law of any finite source with vanishing error, recovering all its entropy. An equivalent distinct-law pair violates both claims. The quantifiers matter: a program may change as the requested resolution becomes finer, and completeness means resolving every finite law source, not merely exceeding one scalar information threshold.

The remaining issue is stability. For a fixed program, define the inverse response modulus
\begin{equation}\label{eq:modulus}
 \omega_A(s)=\max_{\|\mu_A(w)-\mu_A(v)\|_2\le s}\loss(F_w,F_v),\qquad s\ge0.
\end{equation}
It is the largest law discrepancy compatible with response uncertainty $s$. For an integer $n\ge1$, let $Z^{(1)},\ldots,Z^{(n)}$ be independent whitened responses of $n$ complete rounds of $A$, so that $Z^{(i)}\sim N(\mu_A(w),I_{Q_A})$, and define the minimax squared law risk
\[
 \mathfrak R_A(n)=\inf_{\widehat F}\sup_{w\in\W}\E_w\loss\bigl(F_w,\widehat F(Z^{(1)},\ldots,Z^{(n)})\bigr)^2,
\]
where the infimum is over all measurable, possibly randomized, estimators with values in $\F$.
\begin{proposition}[Inverse response modulus]\label{prop:modulus}
Let Assumption~\ref{ass:standing} hold and let $A\subset\A$ be a finite nonempty program. The maximum in~\eqref{eq:modulus} is attained, and $\omega_A$ is nondecreasing and right-continuous with values in $[0,D_F]$. For every integer $n\ge1$,
\begin{equation}\label{eq:fixedmodulus}
 \tfrac12\Phi(-\tfrac12)\omega_A(n^{-1/2})^2
 \le\mathfrak R_A(n)
 \le\E\,\omega_A(2\|G\|_2/\sqrt n)^2,
\end{equation}
where $G\sim N(0,I_{Q_A})$ and $\Phi$ is the standard normal distribution function. If, moreover, there are constants $0<c\le C$, $s_0>0$, and $0<\beta\le1$ such that $cs^\beta\le\omega_A(s)\le Cs^\beta$ for $0\le s\le s_0$, then there are constants $0<c'\le C'$ and an integer $n_0$ such that $c'n^{-\beta}\le\mathfrak R_A(n)\le C'n^{-\beta}$ for all $n\ge n_0$.
\end{proposition}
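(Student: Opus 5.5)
The argument has four parts: structural facts about $\omega_A$, a two-point lower bound, a plug-in upper bound, and a final conversion to rates.

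\textbf{Structural properties.} For fixed $s\ge0$, the constraint set $\{(w,v):\|\mu_A(w)-\mu_A(v)\|_2\le s\}$ is a closed subset of the compact space $\W\times\W$. Indeed, $\mu_A$ is continuous because each $m_a$ is continuous and there are finitely many episodes. The set is nonempty since it contains the diagonal. The map $(w,v)\mapsto\loss(F_w,F_v)$ is continuous by Assumption~\ref{ass:standing}(ii), so the maximum is attained.

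Monotonicity follows from nesting of the constraint sets. The value range $[0,D_F]$ follows from the diagonal and the definition of $D_F$.

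For right-continuity at $s$, take $s_k\downarrow s$ and maximizers $(w_k,v_k)$ for $\omega_A(s_k)$. Pass to a convergent subsequence $(w_k,v_k)\to(w,v)$. Continuity of $\mu_A$ gives $\|\mu_A(w)-\mu_A(v)\|_2\le s$. Continuity of the law map then gives $\lim\omega_A(s_k)=\loss(F_w,F_v)\le\omega_A(s)$, and monotonicity supplies the reverse inequality.

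\textbf{Lower bound.} I would use Le Cam's two-point method. Pick $(w,v)$ attaining $\omega_A(n^{-1/2})$, so that $\|\mu_A(w)-\mu_A(v)\|_2\le n^{-1/2}$. The sample mean is sufficient, so the total variation between the two $n$-sample laws is the total variation between $N(\mu_A(w),I/n)$ and $N(\mu_A(v),I/n)$. This equals $2\Phi(\sqrt n\,\delta/2)-1$ with $\delta\le n^{-1/2}$, hence is at most $1-2\Phi(-\tfrac12)$.

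For any estimator, the triangle inequality gives $\loss(F_w,\widehat F)^2+\loss(F_v,\widehat F)^2\ge\omega^2/2$, where $\omega=\omega_A(n^{-1/2})$. Averaging the two risks and using the overlap $\int\min(p,q)=1-\TV\ge2\Phi(-\tfrac12)$ yields
\[
\max_{u\in\{w,v\}}\E_u\loss(F_u,\widehat F)^2\ge\tfrac12\cdot\tfrac{\omega^2}{2}\cdot2\Phi(-\tfrac12)=\tfrac12\Phi(-\tfrac12)\omega^2 .
\]
Randomized estimators are handled by conditioning on the internal randomness.

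\textbf{Upper bound.} Use the minimum-distance estimator: output $F_{\widehat w}$, where $\widehat w$ minimizes $\|\overline Z-\mu_A(u)\|_2$ over $u\in\W$. The minimum exists by compactness, and a measurable selection exists by standard measurable-selection results over the compact metric $\W$; this is the one technical point to check, citing Appendix~\ref{app:model} conventions. Writing $\overline Z=\mu_A(w)+G/\sqrt n$, we get
\[
\|\mu_A(\widehat w)-\mu_A(w)\|_2\le\|\overline Z-\mu_A(\widehat w)\|_2+\|G\|_2/\sqrt n\le2\|G\|_2/\sqrt n,
\]
since the true $w$ is a competitor. Hence $\loss(F_w,F_{\widehat w})\le\omega_A(2\|G\|_2/\sqrt n)$ pointwise, and taking expectations gives the claim uniformly in $w$.

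\textbf{Rates.} The lower bound gives $\mathfrak R_A(n)\ge\tfrac12\Phi(-\tfrac12)c^2n^{-\beta}$ once $n^{-1/2}\le s_0$.

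For the upper bound, split the expectation on the event $E=\{2\|G\|_2/\sqrt n\le s_0\}$:
\[
\E\,\omega_A(2\|G\|_2/\sqrt n)^2\le C^2 4^\beta n^{-\beta}\E\|G\|_2^{2\beta}+D_F^2\Pr(\|G\|_2>s_0\sqrt n/2).
\]
The first term uses $\E\|G\|_2^{2\beta}\le Q_A^\beta$ by Jensen's inequality. The Gaussian tail in the second term is at most $e^{-(s_0\sqrt n/2-\sqrt{Q_A})^2/2}$ by Lipschitz concentration, which is $o(n^{-\beta})$. This yields constants $c',C'$ and a threshold $n_0$.

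The main obstacle is bookkeeping rather than ideas. The points needing care are the exact Le Cam constant and the measurable selection of the minimum-distance estimator. If the selection is troublesome, I would replace the exact minimizer with a fine finite codebook, as in Theorem~\ref{thm:recovery}. That changes $2\|G\|_2/\sqrt n$ to $2\|G\|_2/\sqrt n+\eta$ with $\eta\downarrow0$, and right-continuity of $\omega_A$ recovers the bound in the limit.
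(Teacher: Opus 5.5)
Your proposal is correct and follows the paper's proof essentially step for step: compactness for attainment and right-continuity, the Le Cam two-point bound at separation $n^{-1/2}$ with overlap $2\Phi(-\tfrac12)$, the exact nearest-response estimator with $\|\mu_A(\widehat w)-\mu_A(w)\|_2\le2\|G\|_2/\sqrt n$, and the split on $\{2\|G\|_2/\sqrt n\le s_0\}$ for the rates. The differences are minor. The paper builds the Borel minimizer explicitly through nested finite ball covers rather than citing a measurable-selection theorem, and your codebook fallback with right-continuity would also work. The paper also bounds the Gaussian tail via $\E e^{\|G\|^2/4}=2^{Q_A/2}$ instead of Lipschitz concentration.
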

This is the inverse-problem modulus principle \citep{donoho1994,donoho1991}; Appendix~\ref{app:recovery} proves Proposition~\ref{prop:modulus}. Identifiability alone gives no such rate. This modulus concerns one fixed experimental program: it describes how noise in that program's responses becomes uncertainty about the field. Optimizing across programs is a further problem, since a better-separated program may also cost more, and finer resolution may require additional preparations or sensors. The next section computes the optimal cost for a concrete infinite-dimensional class, beyond the general codebook construction.

\section{The Experimental Cost of a Physical Law}\label{sec:budget}
The previous sections separate information requirements from experimental recoverability. The release example has only two possible laws. Once their responses differ, repeated Gaussian measurements make candidate selection increasingly reliable with a Gaussian-tail error. Recovering an arbitrary field is a different task: improving the requested accuracy exposes new spatial distinctions, so both the number and the precision of experiments must increase. We now compute the minimax acquisition cost for an infinite-dimensional law class. Let $p\ge1$, $q\ge0$, and $d=p+q$. For fixed $L,M,\sigma>0$, let
\begin{equation}\label{eq:lipschitzclass}
 \F_{L,M}=\{F:\R^p\times[0,1]^q\to\R^p:
 \sup_z\|F(z)\|_2\le M,\ \|F(z)-F(z')\|_2\le L\|z-z'\|_2\}.
\end{equation}
Here $z=(x,u)$ denotes a state--action pair, and $D=[0,1]^d$ carries the uniform measure in~\eqref{eq:metric}. Fields are defined beyond $D$ because trajectories may leave the evaluation domain. This class is compact under uniform convergence on compact subsets of its domain.

The catalog $\A_{\rm state}$ resets exactly to any known $x_0\in[0,1]^p$, holds any constant $u\in[0,1]^q$, and records the full physical state at $k\ge1$ times $0<\tau_1<\cdots<\tau_k$. Each reading is $x(\tau_i)+\xi_i$ with independent $\xi_i\sim N(0,\sigma^2I_p)$. The sensor is known. An episode costs $c_{\rm r}+k+c_{\rm t}\tau_k$, with fixed reset and time costs $c_{\rm r},c_{\rm t}\ge0$; every readout costs one unit. There are no observations between paid readouts. Write $B^\star_{\rm state}(\varepsilon):=B^\star_{\A_{\rm state}}(\varepsilon)$ for the least feasible budget defined in~\eqref{eq:budgetfrontier}.

To visualize this apparatus, take a scalar system with no control, so $p=d=1$. The unknown curve $F(x)$ specifies how fast the state moves at every $x\in[0,1]$. A reset chooses a starting point on this interval; a later noisy state reading estimates the short displacement from that point. Sampling only one region leaves the rest of the curve insufficiently constrained under the Lipschitz assumption. A successful design must therefore combine coverage of the interval with accurate local measurements. In higher dimensions the same requirement applies across both state and action coordinates.

The exponent below agrees with the nonparametric ODE rates of \citet{schoetz2026,schoetz_siebel2026}. We establish a uniform bound over sequential acquisition policies for a Lipschitz class, without a sampling-density or trajectory-coverage condition on the selected experiments. In particular, the policy may concentrate its measurements based on earlier observations.

\begin{theorem}[Minimax acquisition cost]\label{thm:lipschitz-budget}
Let $p\ge1$, $q\ge0$, $d=p+q$, $L,M,\sigma>0$, and $c_{\rm r},c_{\rm t}\ge0$. Let the world class be $\F_{L,M}$ in~\eqref{eq:lipschitzclass}, each world observed through the known identity sensor, with evaluation domain $D=[0,1]^d$ under the uniform measure and catalog $\A_{\rm state}$. There are $c_1,c_2>0$ depending only on $d,p$, and $\varepsilon_0>0$ depending only on $d,p,L,M,\sigma$, such that for $0<\varepsilon\le\varepsilon_0$,
\begin{equation}\label{eq:statebudget}
 c_1\sigma^2L^{d+2}M^2\varepsilon^{-(d+4)/2}
 \le B^\star_{\rm state}(\varepsilon)
 \le c_2(1+c_{\rm r}+c_{\rm t})\sigma^2L^{d+2}M^2\varepsilon^{-(d+4)/2}.
\end{equation}
A nonadaptive finite-difference design attains the upper bound. The lower bound applies to every adaptive policy with pathwise cost at most $B$ and every measurable, possibly randomized, estimator, and remains valid if readout times and stopping within an episode adapt to earlier paid readouts, while its action remains constant.
\end{theorem}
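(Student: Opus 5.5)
The plan is to prove the two sides of \eqref{eq:statebudget} separately. The upper bound comes from an explicit nonadaptive finite-difference design. The lower bound uses an Assouad-type hypercube of localized bumps on top of a constant drift. Its key device is a family of perturbations that leave the time to cross a cell exactly unchanged, so that perturbations are invisible outside their own cell. In both halves the scalings are chosen as $h\asymp\sqrt\varepsilon/L$ (spatial resolution) and $\tau\asymp\sqrt\varepsilon/(LM)$ (readout time). The proposed argument gives the lower bound only for $d\ge2$; $d=1$ needs a separate mechanism.

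\textbf{Upper bound.} Partition $[0,1]^d$ into $m\asymp h^{-d}$ cubes of side $h$ with centers $z_j=(x_j,u_j)$. For each cube, run $n$ episodes that reset to $x_j$, hold $u_j$, and read once at time $\tau$. Set $\widehat F(z)=\bigl(\overline{Y}_j-x_j\bigr)/\tau$ on cube $j$, where $\overline Y_j$ is the average of the $n$ readings.
\begin{itemize}
\item Since $\|\dot x\|\le M$, the trajectory stays within $M\tau$ of $x_j$. Hence $\|(x(\tau)-x_j)/\tau-F(z_j)\|\le LM\tau$.
\item The Lipschitz property adds error $\lesssim Lh$ across the cube.
\item The noise adds variance $p\sigma^2/(n\tau^2)$.
\end{itemize}
The squared risk is therefore $\lesssim L^2h^2+L^2M^2\tau^2+\sigma^2/(n\tau^2)$. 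With the scalings above and $n\asymp\sigma^2L^2M^2\varepsilon^{-2}$, every term is $\le\varepsilon/3$ after adjusting constants. The total cost is $mn(c_{\rm r}+1+c_{\rm t}\tau)$, which is at most $(1+c_{\rm r}+c_{\rm t})\,\sigma^2L^{d+2}M^2\varepsilon^{-(d+4)/2}$ up to a dimensional constant. Here $\varepsilon_0$ guarantees $\tau\le1$, $n\ge1$, and $h\le1$. Boundary cubes are handled by keeping $x_j$ in $[0,1]^p$; fields are defined on all of $\R^p$, so leaving the domain is harmless.

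\textbf{Lower bound construction.} Fix the base field $F_0\equiv(M/2)e_1$. Choose a smooth $\phi$ supported in the unit cube that is odd in $z_1$. For $\theta\in\{\pm1\}^m$, set
\[
F_\theta=F_0+\sum_j\theta_j\,c\,Lh\,\phi\bigl((z-z_j)/h\bigr)\,e_1 ,
\]
with $c$ small so that $F_\theta\in\F_{L,M}$ whenever $Lh\le M/4$. Because the perturbation is along $e_1$, the motion is one-dimensional along $e_1$ with the other coordinates frozen. The time to traverse a cell is $\int dx_1/F_\theta$. Oddness alone does not make this time exactly $h/(M/2)$, so instead I would choose the profile so that $\int_{\text{cell}}\bigl(1/F_\theta-1/F_0\bigr)\,dx_1=0$ on every line. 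This is achieved by defining the perturbation through the reciprocal speed $1/F$ rather than through $F$: perturb $1/F$ by an odd bump, which keeps $F$ Lipschitz with constant $\lesssim Lh\cdot h^{-1}$ and values within $[M/4,M]$. This is exactly the crossing-time symmetry. Every trajectory then coincides pathwise with the $F_0$ trajectory, with the same position at the same time, except while it is inside a cell. Inside cell $j$ the discrepancy between $\theta$ and $\theta^{(j)}$ (sign $j$ flipped) is at most $Lh\cdot(h/M)$ up to constants. Since the action and transverse coordinates are constant, each readout lies in at most one cell.

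\textbf{Information bound and the expected obstacle.} Apply the sequential chain rule for KL to any adaptive policy with adaptive readout times and stopping. Each paid readout contributes conditional KL at most $C L^2h^4/(M^2\sigma^2)$, and only to the coordinate $j$ of the cell it lies in. Summing over the at most $B$ readouts, the average per-coordinate KL is $\lesssim BL^2h^4/(M^2\sigma^2 m)$. Assouad's lemma, with per-cell separation $\loss^2\gtrsim L^2h^2h^d$, gives
\[
\text{risk}\gtrsim L^2h^2\bigl(1-\sqrt{BL^2h^{d+4}/(M^2\sigma^2)}\bigr).
\]
Choose $h^{d+4}\asymp M^2\sigma^2/(L^2B)$, so that the square-root term is at most $1/2$. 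Then risk $\le\varepsilon$ forces $L^2h^2\lesssim\varepsilon$, which yields $B\ge c_1\sigma^2L^{d+2}M^2\varepsilon^{-(d+4)/2}$.

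The main obstacle is this step. We need exact pathwise equality outside the active cell for all durations, including trajectories that enter or leave $[0,1]^d$, together with the Lipschitz bound of the reciprocal-speed bump. There is also a dimensional problem. For Assouad to count coordinates correctly, a trajectory must visit only one cell per $e_1$-line, so each line can contain only one cell. Cells that share a line would let a single readout far downstream accumulate information about every cell the trajectory has crossed. Restricting to one cell per line leaves $m\asymp h^{-(d-1)}$ cells of height $Lh$ in the transverse directions, with separation $\loss^2\gtrsim L^2h^{3}h^{d-1}$. The counting should still go through for $d\ge2$ but must be rechecked. For $d=1$ there are no transverse directions, so this construction gives only a single cell; I would then need a genuinely different mechanism, such as randomizing the drift direction, and I expect this to be the delicate part.
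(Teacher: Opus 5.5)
Your upper bound is the paper's finite-difference design. Your crossing-time device is also sound: perturbing $1/F$ by a bump odd in $x_1$ gives equal traversal times for both signs, and even equality with the unperturbed time. The paper instead perturbs $F$ itself by a multiple of $\sin(2\pi s)$ times transverse sines. It obtains only $T_+=T_-$, via the reflection $s\mapsto 1-s$, and that is all the argument needs.

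The gap is the conclusion you draw next. Your worry that a downstream readout ``accumulates information about every cell the trajectory has crossed'' is exactly what this property excludes. Compare $\theta$ with $\theta^{(j)}$. The two trajectories agree until they reach cell $j$. They enter and leave it at the same times and at the same point of its far face, and then solve the same equation. So they coincide again after cell $j$.

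Nothing therefore forces one cell per $e_1$-line, and the retreat actually breaks the proof. With a single layer of $\asymp h^{-(d-1)}$ cubes, the total separation is only $\asymp L^2h^3$. Assouad then yields at best $B\gtrsim\varepsilon^{-(d+3)/3}$, which is strictly weaker than $\varepsilon^{-(d+4)/2}$ in every dimension, and $d=1$ is left open. The rate needs $\asymp(L/\sqrt\varepsilon)^d$ cells stacked along the drift. In $d=1$ the paper simply places all its cells on the single line, with no need to randomize the drift direction.

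What you do need, and omitted, is the case of a reset inside a cell. If the reset lies strictly inside cell $k$, the remaining traversal time depends on $\theta_k$, because a partial crossing does not cancel. The two worlds then leave cell $k$ with a time shift of order $Lh\cdot h/M^2$. This persists as a displacement difference of order $Lh^2/M$ at all later times, since speeds stay bounded. Your claim that trajectories coincide with the $F_0$ trajectory outside the active cell is false for such resets.

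The correct accounting is as follows. Fix the history, reset, constant action and readout time. The readout mean changes under at most two single-bit flips: the reset cell and the cell occupied at the readout time. Each change is at most $CLh^2/M$, uniformly in duration. Hence $\sum_j\|\Delta_j\|^2\lesssim L^2h^4/M^2$ per readout. The adaptive chain rule then gives $\sum_j\KL(P_\theta\|P_{\theta^{(j)}})\lesssim BL^2h^4/(M^2\sigma^2)$ for every $\theta$.

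With this bound, your first Assouad computation over all $\asymp h^{-d}$ cells goes through verbatim in every dimension, including $d=1$. That is precisely the paper's argument.
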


The upper bound balances three errors: approximating the field across the domain, approximating a derivative by a short trajectory, and averaging measurement noise. For the upper bound, divide $D$ into cubes of side $1/m$ with centers $z_i=(x_i,u_i)$. At every center perform $n$ independent resets and take one readout after time $t$. Let $\overline Y_i$ denote their mean. With $\operatorname{proj}_M$ the Euclidean projection onto the radius-$M$ ball, return
\begin{equation}\label{eq:fd}
 \lawhat(z)=\operatorname{proj}_M\!\left(\frac{\overline Y_i-x_i}{t}\right)
 \quad\text{for }z\text{ in cube }i.
\end{equation}
The finite difference has deterministic error at most $LMt/2$, since the field is $L$-Lipschitz and the state moves at speed at most $M$. Its variance is $p\sigma^2/(nt^2)$. Spatial interpolation adds at most $L\sqrt d/(2m)$ in norm. Thus $m\asymp L/\sqrt\varepsilon$, $t\asymp\sqrt\varepsilon/(LM)$, and $n\asymp\sigma^2L^2M^2/\varepsilon^2$ suffice. Multiplying the number of centers by the repetitions and the per-episode cost gives the upper bound.

The short-time choice is essential. Waiting longer increases the displacement relative to the readout noise, but also makes that displacement depend on the field along a longer trajectory rather than only at the reset state. Waiting less reduces this differentiation bias while magnifying noise after division by $t$. Repetition compensates for that amplified noise. Finer spatial cells alone do not remove temporal bias, and longer trajectories alone do not supply uniformly accurate local derivatives. The lower bound explains why a different adaptive choice of the admitted experiments cannot improve the worst-case exponent.

The difficulty in the lower bound is that a long trajectory could cross many local perturbations before it is read. We place sign-changing perturbations on a constant positive drift. Reversing a perturbation preserves its exact full-crossing time, by reflection symmetry, so the trajectories rejoin at its exit. A readout is affected by at most two sign choices: the cell containing its reset and the cell it currently occupies. Summed Gaussian divergences therefore remain small regardless of the duration. An adaptive change-of-measure argument and a hypercube testing reduction give the lower bound. Appendix~\ref{app:lipschitz-budget} provides the full construction and constants.

For comparison, let $\A_{\rm vel}$ return $F(x_0,u)+N(0,\sigma^2I_p)$ directly at cost $1+c_{\rm r}$, for any $x_0\in[0,1]^p$ and $u\in[0,1]^q$, with independent noise across readings, and write $B^\star_{\rm vel}=B^\star_{\A_{\rm vel}}$. The same spatial approximation needs only $n\asymp\sigma^2/\varepsilon$ readings per center.
\begin{proposition}[Direct velocity readouts]\label{prop:velocity}
Under the conditions of Theorem~\ref{thm:lipschitz-budget}, with $\A_{\rm vel}$ in place of $\A_{\rm state}$, there are $c_1',c_2'>0$ depending only on $d,p$, and $\varepsilon_0'>0$ depending only on $d,p,L,M,\sigma$, such that for $0<\varepsilon\le\varepsilon_0'$,
\begin{equation}\label{eq:velocitybudget}
 c_1'\sigma^2L^d\varepsilon^{-(d+2)/2}
 \le B^\star_{\rm vel}(\varepsilon)
 \le c_2'(1+c_{\rm r})\sigma^2L^d\varepsilon^{-(d+2)/2}.
\end{equation}
The lower bound holds for every adaptive policy and estimator.
\end{proposition}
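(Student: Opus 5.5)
The proof has two halves: a nonadaptive histogram estimator with direct readouts for the upper bound, and an Assouad-type hypercube argument for the lower bound. The lower bound must hold uniformly over adaptive policies and budgets. Throughout, $\varepsilon\le\varepsilon_0'$ is small enough that the grid size and the perturbation amplitudes below stay inside the class $\F_{L,M}$.

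\textbf{Upper bound.} We reuse the cube partition from the discussion after Theorem~\ref{thm:lipschitz-budget}. Split $D=[0,1]^d$ into $m^d$ cubes of side $1/m$ with centers $z_i=(x_i,u_i)$. At each center we take $n$ independent readings $F(z_i)+N(0,\sigma^2I_p)$, average them to get $\overline Y_i$, and return $\operatorname{proj}_M(\overline Y_i)$ on cube $i$.

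The projection onto the convex ball containing $F(z_i)$ is nonexpansive, so the error on cube $i$ is at most
\[
L\sqrt d/(2m)+\|\overline Y_i-F(z_i)\|_2 .
\]
Squaring, using $(a+b)^2\le 2a^2+2b^2$, and integrating gives risk at most
\[
L^2d/(2m^2)+2p\sigma^2/n .
\]
Choose $m=\lceil L\sqrt{d/\varepsilon}\rceil$ and $n=\lceil 4p\sigma^2/\varepsilon\rceil$. For small $\varepsilon$ both ceilings are within a factor of two of their arguments, provided $\varepsilon\lesssim\min(L^2,\sigma^2)$, which is absorbed into $\varepsilon_0'$. The total cost is $(1+c_{\rm r})m^dn\lesssim(1+c_{\rm r})\sigma^2L^d\varepsilon^{-(d+2)/2}$.

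\textbf{Lower bound.} Fix a bump $\phi$ supported in the unit cube of $\R^d$, Lipschitz with constant 1, $\|\phi\|_\infty\le1$, and $\int\phi^2=c_\phi>0$. Take $m^d$ disjoint cells of side $1/m$ in $D$ with centers $z_j$. For $\theta\in\{\pm1\}^{m^d}$ define
\[
F_\theta(z)=\sum_j\theta_j\,\frac{L}{2m}\,\phi\bigl(m(z-z_j)\bigr)e_1 .
\]
This lies in $\F_{L,M}$ once $L/(2m)\le M$, which holds for small $\varepsilon$ because $m\to\infty$. Disjoint supports give
\[
\loss(F_\theta,F_{\theta'})^2=\frac{c_\phi L^2}{m^{d+2}}\,\#\{j:\theta_j\ne\theta'_j\}.
\]

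Assouad's lemma then says that any estimator has worst-case risk at least
\[
\frac{m^d}{8}\cdot\frac{c_\phi L^2}{m^{d+2}}\;\min_{j}\;\frac{1}{2^{m^d}}\sum_{\theta}\bigl(1-\TV(\Pr_\theta,\Pr_{\theta^{(j)}})\bigr),
\]
where $\theta^{(j)}$ flips coordinate $j$.

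To bound these total variations, let $N_j$ be the random number of readings taken inside cell $j$. The adaptive chain rule for KL divergence (divergence decomposition, as in the sequential-design literature) gives
\[
\KL(\Pr_\theta\|\Pr_{\theta^{(j)}})\le\E_\theta N_j\cdot\frac{2(L/2m)^2}{\sigma^2}.
\]
Each paid reading lands in at most one cell, so $\sum_j\E_\theta N_j\le B/c_{\min}$. We do not need every cell to be lightly sampled. Averaging over $j$ and applying Pinsker's inequality together with Jensen's inequality, the average of $\TV$ over $j$ and $\theta$ is at most
\[
\sqrt{\frac{B L^2}{2\sigma^2 m^{d+2}}}.
\]
So we use the averaged form of Assouad's lemma, which replaces the minimum over $j$ with an average.

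Now choose $m\asymp L/\sqrt\varepsilon$, so the separation term $c_\phi L^2/(8m^2)$ is a constant multiple of $\varepsilon$ that exceeds $\varepsilon$. Any $B\le c\,\sigma^2m^{d+2}/L^2\asymp\sigma^2L^d\varepsilon^{-(d+2)/2}$ keeps the average $\TV$ below $1/2$, which forces risk above $\varepsilon$. This yields $c_1'$.

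\textbf{Main obstacle.} The delicate step is handling adaptivity and random stopping correctly. The KL decomposition must be conditioned on the transcript filtration, with the pathwise budget ensuring a finite stopping time. The averaging over $j$ must come before Pinsker, so that concentrating samples in a few cells cannot help. I would verify this by writing the transcript density as a product of policy kernels, which cancel, and Gaussian likelihoods.
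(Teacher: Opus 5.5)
Your proof is correct and follows essentially the same route as the paper. The upper bound averages direct readouts at the $m^d$ cube centers. The lower bound uses an Assouad hypercube of disjoint bumps: each reading touches at most one cell, the adaptive KL chain rule bounds the summed neighbor divergences by the budget, and you average over coordinates before Pinsker (the paper reuses its drift-plus-bump hypercube with amplitude $h=\sqrt{8\varepsilon/P}$, where the drift plays no role). Only bookkeeping needs tidying: make $\phi(m(z-z_j))$ supported in cell $j$ (use lower corners, or a bump on $[-\tfrac12,\tfrac12]^d$), and choose $m$ so that $c_\phi L^2/(16m^2)>\varepsilon$, absorbing the factor $1/2$ lost when $\TV\le1/2$.
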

For $d=1$, these bounds correspond to squared errors of order $B^{-2/5}$ for state readouts and $B^{-2/3}$ for direct field readouts, with problem constants fixed. Halving the root-mean-square error means quartering $\varepsilon$: the corresponding power-law budget factors are $2^{d+4}$ and $2^{d+2}$, or $32$ and $8$ in this scalar example, with multiplicative problem constants suppressed. These are worst-case accuracy costs for the whole Lipschitz class, not the binary error curve for the two known orbit laws.

This is the Lipschitz regression rate \citep{stone1982}; Appendix~\ref{app:lipschitz-budget} proves Proposition~\ref{prop:velocity} with explicit constants. The extra factor $\varepsilon^{-1}$ for state readouts is a worst-case acquisition penalty in this model. Parametric restrictions, greater smoothness, direct velocity sensing, or additional control operations can change it. In particular, a second-order mechanical system observed only through positions is a different observation model; here the entire first-order state is measured. For the orbit system this would require observing both position and velocity as components of the first-order state; its scalar radial sensor does not meet this assumption. The orbit example establishes the value of an informative intervention, while the present theorem isolates a separate cost of recovering a general field from full-state measurements.

\section{Intervention, Calibration, and Measurement}
\label{sec:cases}

We use five studies to make the theory operational. The first two separate informative preparation from physical calibration. Two further orbital studies examine the choice of release and readout, then recover a continuous force parameter using a finite response codebook. The final study examines the cost of local field reconstruction from noisy observations. All reported numerical results were generated from the accompanying source. These experiments illustrate the mathematical mechanisms; they do not compare neural architectures or empirically establish minimax optimality.

We first return to the two laws in Figure~\ref{fig:orbit}. Releasing the object from rest at radius two produces different position responses. At time $t=0.4$ the radial responses differ by $\Delta=0.13781$. With independent Gaussian radial readout noise of standard deviation $\sigma=0.1$, the optimal two-candidate test under equal priors after $n$ releases has error $\Phi(-\sqrt n\Delta/(2\sigma))$, where $\Phi$ is the standard normal distribution function. It decreases from $0.2454$ at $n=1$ to $0.002924$ at $n=16$, whereas repetition of the circular orbit leaves the testing error at $1/2$ for every budget. Figure~\ref{fig:cases}(a) compares the exact curve with 200,000 Monte Carlo trials per budget. Repetition suppresses noise only after an experiment has exposed a distinction.

Unknown sensors create a different obstruction. Consider the nonlinear family
\begin{equation}
\dot x=-ax-bx^3+g(u+u^3),\qquad y=cx+b_0+\xi.
\label{eq:nonlinear-sensor-case}
\end{equation}
Here $a,b,g$ are physical coefficients, $c>0$ is sensor gain, $b_0$ is sensor offset, and $\xi$ is readout noise. If episodes start at $x=0$, the transformation
\[
(a,b,g,c,b_0)\longmapsto(a,b/\lambda^2,\lambda g,c/\lambda,b_0),\qquad \lambda>0,
\]
maps the physical trajectory to $\lambda x(t)$ while preserving every sensor response. This holds for any admissible input waveform, and therefore also for input choices based on previous observations, whenever both worlds belong to the declared class. Excitation alone cannot determine the physical scale. A known-state reference at $x=1$ supplies a response $c+b_0$ that changes under this transformation; a reference at $x=0$ separately measures $b_0$.

To isolate this obstruction numerically, we use a separate linear gain family $\dot x=-x+gu$, $y=cx+\xi$, with $cg=1$ and $c\sim\mathrm{Unif}[0.75,1.25]$. For the noiseless output coordinate $z=cx$, the observed dynamics are $\dot z=-z+u$ for every world, so even exact dynamical responses contain no information about the remaining gain. Under uniform evaluation on $(x,u)\in[-1,1]^2$, squared law error is $(\hat g-g)^2/3$. The optimal uncalibrated Bayes decoder has error $\mathrm{Var}(1/c)/3=0.0076318$ at every budget. After $n$ independent known-state references with mean reading $\bar y$, the clipped decoder $\hat c=\operatorname{clip}_{[0.75,1.25]}(\bar y)$, $\hat g=1/\hat c$ has vanishing error. Each reference costs one clamp and one readout, so $B=2n$. With $\sigma=0.08$, 1,024 worlds and 128 noise replicates per world, mean law error falls from $2.04\times10^{-3}$ at $B=2$ to $2.52\times10^{-6}$ at $B=2048$; the last three budgets give a slope of $-0.994$. Noiseless dynamic-output prediction remains unchanged because $\hat c\hat g=1$. Figure~\ref{fig:cases}(b) shows the distinction between accurate observable predictions and calibrated physical-law recovery.

\begin{figure}[!htbp]
\centering
\includegraphics[width=\linewidth]{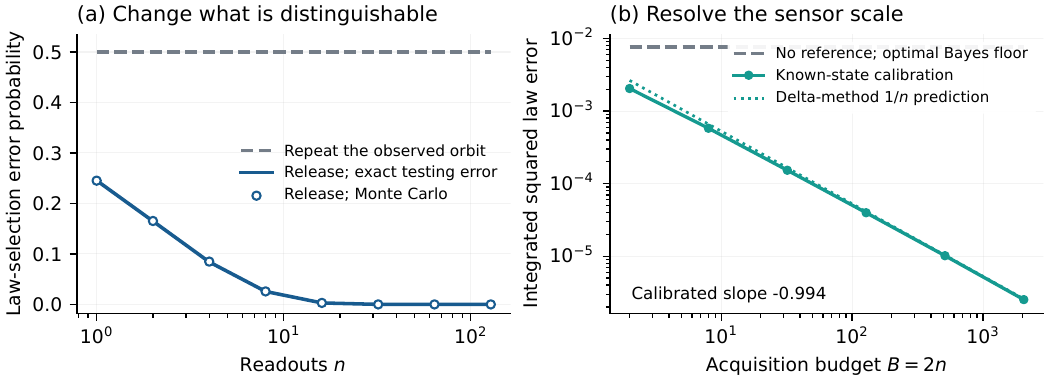}
\caption{Evidence that changes recoverability. Left: exact and simulated two-law selection error after release; repeating the circular orbit remains uninformative. Right: the exact uncalibrated Bayes floor and calibrated squared law error in the sensor-gain family. Shading is $\pm1.96$ standard errors over world averages; the dotted curve is the small-noise approximation. The panels report different risks, as labeled.}
\label{fig:cases}
\end{figure}

We now return to the calibrated radial sensor in the orbital example and ask a more quantitative question: which release experiment makes the two laws easiest to distinguish at a given cost? Let $R\in[1,2]$ be the release radius, with zero initial velocity, and let $t\in[0.05,0.6]$ be the time of a single radial reading. Denote the predicted means by $m_G(R,t)$ and $m_H(R,t)$. Each independent episode costs one reset, one readout, and its elapsed time, so $c(R,t)=2+t$. With noise standard deviation $\sigma=0.1$, a useful pairwise design score is
\begin{equation}\label{eq:release-design-score}
 S(R,t)=\frac{|m_G(R,t)-m_H(R,t)|^2}{\sigma^2(2+t)}.
\end{equation}
This is response separation per unit cost, rather than the mutual-information capacity of~\eqref{eq:capacity}. The exact equal-prior testing error at budget $B$ is obtained from the binary formula using $n=\lfloor B/(2+t)\rfloor$ independent releases. When the mean gap is positive, the least number of repetitions of this fixed episode giving error at most $0.01$ is
\begin{equation}\label{eq:release-onepercent}
 n_{0.01}(R,t)=\left\lceil
 \left(\frac{2\sigma\Phi^{-1}(0.99)}{|m_G(R,t)-m_H(R,t)|}\right)^2
 \right\rceil.
\end{equation}

Figure~\ref{fig:release-design} compares a grid of 101 release radii and 111 readout times. Table~\ref{tab:release-designs} gives several individual designs. At $R=2$, delaying the readout from $0.2$ to $0.6$ increases the cost of one episode from $2.2$ to $2.6$, but reduces the number of required releases from 179 to 3. The total cost falls from $393.8$ to $7.8$. Releasing from radius one at rest also eventually separates the laws, even though their initial accelerations agree; this preparation differs from the tangential velocity that sustains the original circular orbit. Its much smaller response gap makes it expensive at the displayed time. These comparisons concern specified two-law tests. The highest score on the displayed grid occurs at $(R,t)=(2,0.6)$; it is not a claim of globally optimal experimental design. Integer rounding in~\eqref{eq:release-onepercent} also means that maximizing the score need not minimize the exact finite-budget threshold.

\begin{table}[!htbp]
\centering\small
\caption{Cost of distinguishing the two orbit laws with error at most $1\%$. Each row repeats the same reset and one-readout episode, with $\sigma=0.1$. The cost is $B_{0.01}=(2+t)n_{0.01}$. These are analytically evaluated Gaussian tests, not fitted learning curves.}
\label{tab:release-designs}
\begin{tabular}{@{}rrrrr@{}}
\toprule
Release $R$ & Readout $t$ & Mean gap & Releases $n_{0.01}$ & Cost $B_{0.01}$\\
\midrule
1.0 & 0.6 & 0.017865 & 679 & 1765.4\\
1.5 & 0.4 & 0.082567 & 32 & 76.8\\
2.0 & 0.2 & 0.034863 & 179 & 393.8\\
2.0 & 0.4 & 0.137811 & 12 & 28.8\\
2.0 & 0.6 & 0.303986 & 3 & 7.8\\
\bottomrule
\end{tabular}
\end{table}

\begin{figure}[!htbp]
\centering
\includegraphics[width=\linewidth]{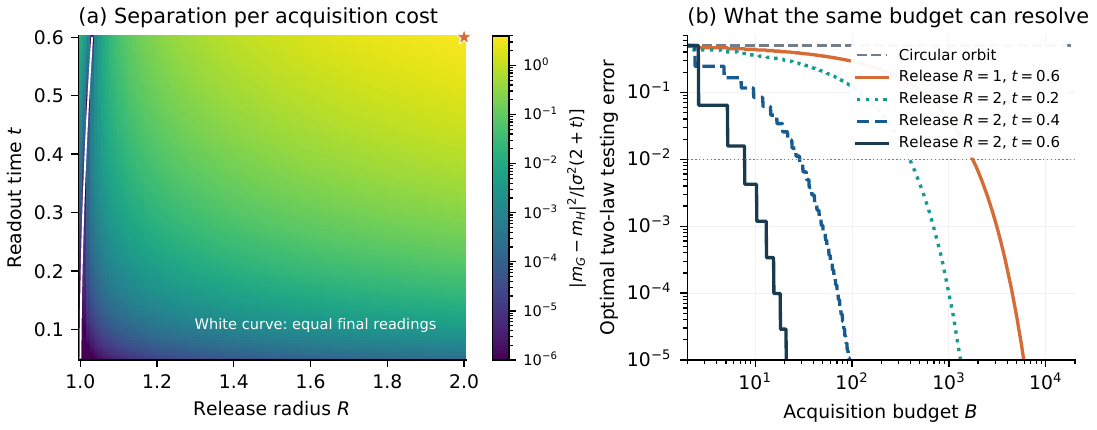}
\caption{The same two physical laws can be easy, difficult, or impossible to distinguish under different readout designs. (a) The horizontal axis changes the release radius and the vertical axis changes the readout time. Color shows the separation-per-cost score~\eqref{eq:release-design-score} on a logarithmic scale; brighter regions give more separation for the charged cost. The star marks the largest score on the tested grid. The white curve is a set of equal terminal readings, although the intervening trajectories differ. (b) Exact Gaussian testing errors at the same total budget, including the integer number of affordable episodes. The gray dashed line is the uninformative circular preparation, and the horizontal dotted line marks $1\%$ error. Every curve uses the same known sensor noise $\sigma=0.1$.}
\label{fig:release-design}
\end{figure}

The white curve reveals a second kind of ambiguity within this example. At $t=0.6$, the signed mean difference is negative at $R=1$ and positive at $R=1.1$. Continuity therefore supplies a radius $R_*$ with equal terminal means; numerical solution gives $R_*\simeq1.030748$. Repeating only that final reading cannot distinguish the laws, although the full trajectories are different. An additional reading at $t=0.3$ has mean gap approximately $0.00308325$ and restores distinguishability. It remains a weak design: with independent noise of standard deviation $0.1$ on both readings, reaching $1\%$ error requires 22,772 releases and cost $81,979.2$. This illustrates the difference between removing an exact ambiguity and obtaining a stable, economical experiment. Appendix~\ref{app:experiments} states the root-existence argument and the two-readout calculation.

The binary examples assume that the true law is one of two stored hypotheses. To examine the response-codebook construction when the truth is not itself stored, consider the continuous family
\begin{equation}\label{eq:continuous-force-family}
 a_\theta(\boldsymbol q)=-(1-\theta)\boldsymbol q
 -\theta\frac{\boldsymbol q}{\|\boldsymbol q\|^3},
 \qquad F_\theta(\boldsymbol q,\boldsymbol v)
 =(\boldsymbol v,a_\theta(\boldsymbol q)),\qquad 0\le\theta\le1.
\end{equation}
Every member generates the same unit circular orbit with unit tangential speed. Thus the original observational ambiguity is now a continuum, not merely a pair. We release from $R=2$ at rest and read the known radial sensor at $t=0.6$. Let $m(\theta)$ denote this noiseless response. A codebook stores the responses at $K_c\in\{17,65,257\}$ equally spaced parameter values, and the decoder selects the stored response nearest to the observed sample mean. The true parameters are 32 fixed values, all absent from every codebook; they include $0.37$ and $0.63$, followed by 30 uniform draws retained across all comparisons. This is a fixed-truth comparison, not a numerical uniform-risk guarantee over $[0,1]$.

The loss remains the physical field error of Section~\ref{sec:problem}. Linearity in the force parameter gives the exact identity
\begin{equation}\label{eq:continuous-force-loss}
 \loss(F_\theta,F_{\widehat\theta})^2
 =\delta_{\rm law}^2(\theta-\widehat\theta)^2,
 \qquad \delta_{\rm law}^2=21/8-2\log2.
\end{equation}
Figure~\ref{fig:codebook-study} separates three operations: constructing the response codebook, refining that codebook, and improving the measurement precision. At each configuration we simulate 50,000 Gaussian sample means per true parameter and compute the decoder's noise-averaged error independently from Gaussian probabilities at its response thresholds. With $\sigma=0.1$ and $n=16,384$ releases, the mean squared law errors for 17, 65, and 257 entries are approximately $3.909\times10^{-4}$, $3.452\times10^{-5}$, and $9.867\times10^{-6}$, respectively. The cost is $B=2.6n=42,598.4$. Over the 40 compared designs, aggregate Monte Carlo estimates differ from the probability calculations by at most 3.48 estimated standard errors.

The horizontal dotted levels are the errors of the respective decoders given a noiseless response, averaged over the same 32 true parameters. They are approximately $3.834\times10^{-4}$, $2.491\times10^{-5}$, and $1.308\times10^{-6}$. These levels arise from the fixed finite codebooks. They are not the apparatus ambiguity floor $\kappa_{\A}^2/4$ and do not bound an estimator allowed to return intermediate parameter values. More repetitions reduce observational noise; a finer codebook reduces the remaining approximation error. Holding the codebook at 65 entries and lowering $\sigma$ from $0.3$ to $0.03$ reduces the error at the largest budget from approximately $1.000\times10^{-4}$ to $2.540\times10^{-5}$, close to that codebook's noiseless level. The example makes the separate roles of response separation, finite approximation, and measurement noise visible within the same physical family.

\begin{figure}[!htbp]
\centering
\includegraphics[width=\linewidth]{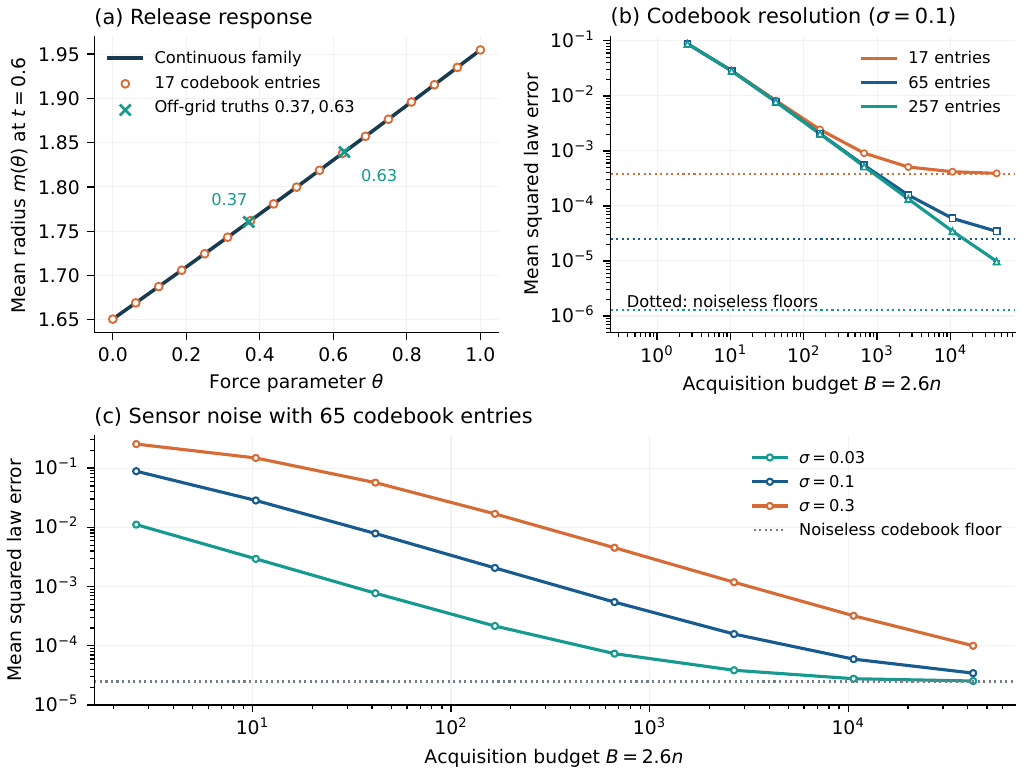}
\caption{Recovering a continuous force parameter when the true world is not in the response codebook. (a) The release response $m(\theta)$ varies with the force parameter. Orange circles are the 17 coarse codewords; teal crosses illustrate two off-codebook truths. All these laws agree on the original circular orbit. (b) Noise-averaged squared field error versus acquisition budget, for three codebook sizes at $\sigma=0.1$. Lines use Gaussian decision probabilities; markers and bars show Monte Carlo means and $\pm1.96$ standard errors, conditional on the same 32 true parameters. The bars may be smaller than the markers. Dotted lines show each decoder's noiseless approximation error. (c) The same error for different sensor noise levels with 65 entries. Improving precision approaches the fixed codebook's approximation level. These are finite-decoder studies in a known one-parameter family, rather than an empirical minimax result or a comparison of neural architectures.}
\label{fig:codebook-study}
\end{figure}

The continuous force family still has only one unknown parameter. The final study returns to the different problem in Section~\ref{sec:budget}: estimating a spatially varying field by combining coverage of the evaluation region with repeated short-time measurements.

Figure~\ref{fig:scaling} evaluates the unprojected finite-difference construction of~\eqref{eq:fd} on two bounded Lipschitz fields with state--action dimensions one and two. With readout noise $\sigma=0.05$, a grid with $m$ cells per coordinate uses duration $t=1/m$ and $n=\lceil\sigma^2m^4\rceil$ independent state readouts per cell. The velocity comparator uses $n=\lceil\sigma^2m^2\rceil$; the fixed-duration comparator keeps $t=0.25$. Reset, readout, and elapsed time are all charged. The Gaussian sample mean is sampled directly at its exact variance. Exact piecewise-affine flows and analytic cell integrals remove integration and evaluation-grid errors from the comparison. Across 64 independent noise replicates, the fitted state-readout slopes are $-0.400$ and $-0.332$, and the velocity-readout slopes are $-0.667$ and $-0.488$, close to the design exponents $-2/(d+4)$ and $-2/(d+2)$. Keeping the duration fixed leaves finite-difference bias. These are checks of the proposed design on two fixed laws; the general lower bound, where applicable, is a mathematical claim rather than a conclusion from the fitted slopes.

\begin{figure}[!htbp]
\centering
\includegraphics[width=\linewidth]{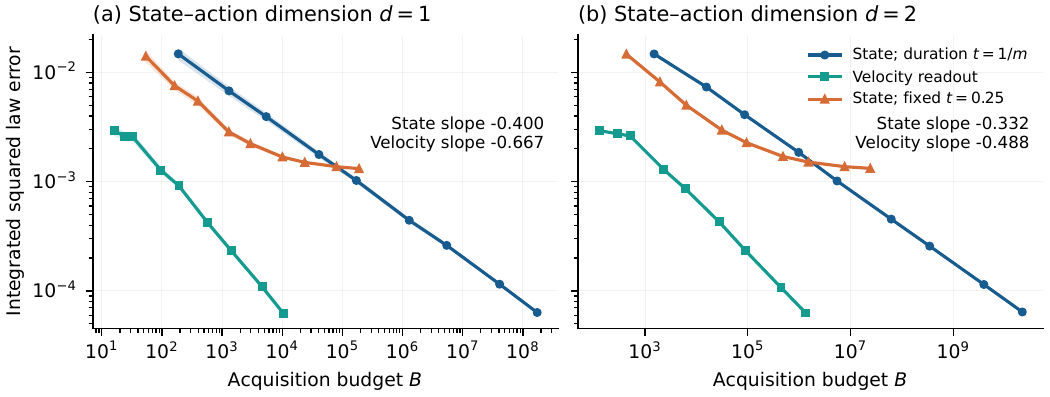}
\caption{Acquisition cost of the declared designs for state--action dimensions $d=1,2$. Costs include resets, readouts, and elapsed time. Shading is $\pm1.96$ standard errors over 64 noise replicates. Slopes use the five finest grids. The fixed-duration estimator retains differentiation bias. These curves illustrate estimator behavior on fixed laws; the minimax lower bound is proved separately.}
\label{fig:scaling}
\end{figure}

\section{Consequences and Scope}\label{sec:discussion}
The theory separates three questions. Rate--distortion describes the information required for a specified physical fidelity. Experimental completeness asks whether the apparatus can reveal the relevant distinctions. Inverse stability and acquisition cost determine how accurately they can be measured. A low prediction loss becomes evidence of physical recovery only when the target domain, observation model, and experimental access connect it to that claim.

Several consequences follow. Additional observations of an equivalent world pair cannot remove its law ambiguity. More expressive models cannot distinguish identical transcript distributions. Excitation and calibration solve different problems: excitation changes which dynamics are revealed, while calibration can anchor their physical coordinates. Even a correctly identified sampled transition map need not uniquely identify a continuous-time generator without suitable sampling assumptions \citep{zeng2024}. These are failures of particular implications, rather than claims that predictive world models are ineffective for their intended control tasks.

The running example gives these conclusions an operational meaning. An evaluation set containing only the observed circular motion rewards both laws equally. A region containing other radii asks a question the same data cannot answer. Releasing the object changes the evidence while keeping the physical target fixed. Moving the release point or readout time changes the strength and price of that evidence. The progression is from deciding which physical distinctions matter, to making them observable, to measuring them precisely. A larger training set helps at the last step only if its acquisition protocol supports the earlier ones.

Recovering a field also differs from discovering a preferred symbolic expression. The loss in~\eqref{eq:metric} compares physical derivatives over $D$. Two expressions defining the same field there have zero distance, and extensions differing only outside $D$ are not separated. The results support a specified domain of physical validity. They do not assert that a finite experiment uniquely selects a universal formula or an internal neural representation. Symbolic, numerical, and learned representations can thus be judged against the same physical question while retaining their distinct representational aims.

The response-codebook construction offers a way to understand what a learned predictor would have to accomplish. Its observations must resolve candidate fields at the requested physical accuracy; its representation must retain the distinctions needed by the decoder; and its approximation and optimization errors must fit within the remaining error allowance. The present theorems establish statistical possibility by constructing one decoder. Applying them to a particular encoder, latent transition model, and decoder requires an additional argument linking that implementation to physical-coordinate field error. The continuous-force example makes one part of this issue visible: repeating an informative experiment reduces measurement noise, whereas refining a fixed response codebook reduces its approximation error. These operations address different contributions to the reconstruction error.

Unknown sensing makes the physical-coordinate requirement especially consequential. In the gain example, an entire family of different physical gains produces the same observable dynamics. A known-state reference anchors the missing scale. If such references are unavailable, one may instead seek a model of observable dynamics or a law defined only up to the unresolved coordinate transformation. That is a legitimate alternative target, but its equivalence relation and loss must be stated accordingly. It cannot be converted into the fixed-coordinate recovery claim merely by attaching physical names to latent variables. The formulation makes the required references explicit rather than assigning them implicitly to the representation.

The results are statistical guarantees for a correctly specified compact world class, repeatable world-independent Gaussian noise, and the declared apparatus. The constructive codebook can be computationally prohibitive, and its sufficient budget need not match the information converse for every apparatus. Physical-coordinate recovery requires the necessary physical references; arbitrary latent coordinates and passive video are not covered by the sufficient access conditions. The nonparametric budget theorem measures the entire first-order state, fixes constant actions during each episode, and assumes strictly positive noise. Smoother classes, partial-state observations, model misspecification, and control waveforms within episodes require separate analyses. The simulations isolate these mechanisms in small systems and do not establish an advantage for any neural architecture.

The central claim is therefore operational: a world model can be certified as a recovered physical law only relative to a stated target and experiments capable of separating its competing explanations. This perspective makes the acquisition protocol part of the learning problem and exposes the cost of turning predictive evidence into physical knowledge.

\clearpage
\phantomsection
\addcontentsline{toc}{section}{References}
\bibliographystyle{plainnat}
\begingroup
\interlinepenalty=10000
\bibliography{references}

@article{shannon1948,
  author = {Shannon, Claude E.},
  title = {A Mathematical Theory of Communication},
  journal = {The Bell System Technical Journal},
  volume = {27},
  pages = {379--423, 623--656},
  year = {1948},
  doi = {10.1002/j.1538-7305.1948.tb01338.x},
  url = {https://people.math.harvard.edu/~ctm/home/text/others/shannon/entropy/entropy.pdf}
}

@inproceedings{shannon1959,
  author = {Shannon, Claude E.},
  title = {Coding Theorems for a Discrete Source with a Fidelity Criterion},
  booktitle = {IRE National Convention Record, Part 4},
  volume = {7},
  pages = {142--163},
  year = {1959},
  url = {https://doi.org/10.1109/9780470544242.ch21}
}

@book{cover2006,
  author = {Cover, Thomas M. and Thomas, Joy A.},
  title = {Elements of Information Theory},
  publisher = {Wiley-Interscience},
  edition = {2},
  year = {2006},
  doi = {10.1002/047174882X},
  url = {https://doi.org/10.1002/047174882X}
}

@article{pan2016,
  author = {Pan, Wei and Yuan, Ye and Gon{\c{c}}alves, Jorge and Stan, Guy-Bart},
  title = {A Sparse {Bayesian} Approach to the Identification of Nonlinear State-Space Systems},
  journal = {IEEE Transactions on Automatic Control},
  volume = {61},
  number = {1},
  pages = {182--187},
  year = {2016},
  doi = {10.1109/TAC.2015.2426291},
  url = {https://arxiv.org/abs/1408.3549}
}

@article{brunton2016,
  author = {Brunton, Steven L. and Proctor, Joshua L. and Kutz, J. Nathan},
  title = {Discovering Governing Equations from Data by Sparse Identification of Nonlinear Dynamical Systems},
  journal = {Proceedings of the National Academy of Sciences},
  volume = {113},
  number = {15},
  pages = {3932--3937},
  year = {2016},
  doi = {10.1073/pnas.1517384113},
  url = {https://arxiv.org/abs/1509.03580}
}

@article{bellman1970,
  author = {Bellman, Richard and {\AA}str{\"o}m, Karl Johan},
  title = {On Structural Identifiability},
  journal = {Mathematical Biosciences},
  volume = {7},
  number = {3--4},
  pages = {329--339},
  year = {1970},
  doi = {10.1016/0025-5564(70)90132-X},
  url = {https://www.sciencedirect.com/science/article/pii/002555647090132X}
}

@article{sontag2017,
  author = {Sontag, Eduardo D.},
  title = {Dynamic Compensation, Parameter Identifiability, and Equivariances},
  journal = {PLOS Computational Biology},
  volume = {13},
  number = {4},
  pages = {e1005447},
  year = {2017},
  doi = {10.1371/journal.pcbi.1005447},
  url = {https://journals.plos.org/ploscompbiol/article?id=10.1371/journal.pcbi.1005447}
}

@article{ha2018,
  author = {Ha, David and Schmidhuber, J{\"u}rgen},
  title = {World Models},
  journal = {arXiv preprint arXiv:1803.10122},
  year = {2018},
  url = {https://arxiv.org/abs/1803.10122}
}

@article{hafner2025,
  author = {Hafner, Danijar and Pasukonis, Jurgis and Ba, Jimmy and Lillicrap, Timothy},
  title = {Mastering Diverse Control Tasks through World Models},
  journal = {Nature},
  volume = {640},
  number = {8059},
  pages = {647--653},
  year = {2025},
  doi = {10.1038/s41586-025-08744-2},
  url = {https://www.nature.com/articles/s41586-025-08744-2}
}

@inproceedings{chen2018,
  author = {Chen, Ricky T. Q. and Rubanova, Yulia and Bettencourt, Jesse and Duvenaud, David K.},
  title = {Neural Ordinary Differential Equations},
  booktitle = {Advances in Neural Information Processing Systems},
  volume = {31},
  year = {2018},
  url = {https://papers.nips.cc/paper_files/paper/2018/hash/69386f6bb1dfed68692a24c8686939b9-Abstract.html}
}

@article{mania2022,
  author = {Mania, Horia and Jordan, Michael I. and Recht, Benjamin},
  title = {Active Learning for Nonlinear System Identification with Guarantees},
  journal = {Journal of Machine Learning Research},
  volume = {23},
  number = {32},
  pages = {1--30},
  year = {2022},
  url = {https://jmlr.org/papers/v23/20-807.html}
}

@article{zeng2024,
  author = {Zeng, Zhexuan and Yue, Zuogong and Mauroy, Alexandre and Gon{\c{c}}alves, Jorge and Yuan, Ye},
  title = {A Sampling Theorem for Exact Identification of Continuous-Time Nonlinear Dynamical Systems},
  journal = {IEEE Transactions on Automatic Control},
  volume = {69},
  number = {12},
  pages = {8402--8417},
  year = {2024},
  doi = {10.1109/TAC.2024.3409639},
  url = {https://arxiv.org/abs/2204.14021}
}

@article{donoho1991,
  author = {Donoho, David L. and Liu, Richard C.},
  title = {Geometrizing Rates of Convergence, {II}},
  journal = {The Annals of Statistics},
  volume = {19},
  number = {2},
  pages = {633--667},
  year = {1991},
  doi = {10.1214/aos/1176348114},
  url = {https://web.stanford.edu/dept/statistics/cgi-bin/donoho/wp-content/uploads/2018/08/GeoII.pdf}
}

@article{donoho1994,
  author = {Donoho, David L.},
  title = {Statistical Estimation and Optimal Recovery},
  journal = {The Annals of Statistics},
  volume = {22},
  number = {1},
  pages = {238--270},
  year = {1994},
  doi = {10.1214/aos/1176325367},
  url = {https://doi.org/10.1214/aos/1176325367}
}

@incollection{yu1997,
  author = {Yu, Bin},
  title = {{Assouad}, {Fano}, and {Le Cam}},
  booktitle = {Festschrift for Lucien Le Cam},
  editor = {Pollard, David and Torgersen, Erik and Yang, Grace L.},
  publisher = {Springer},
  pages = {423--435},
  year = {1997},
  doi = {10.1007/978-1-4612-1880-7_29},
  url = {https://link.springer.com/chapter/10.1007/978-1-4612-1880-7_29}
}

@article{chernoff1959,
  author = {Chernoff, Herman},
  title = {Sequential Design of Experiments},
  journal = {The Annals of Mathematical Statistics},
  volume = {30},
  number = {3},
  pages = {755--770},
  year = {1959},
  doi = {10.1214/aoms/1177706205},
  url = {https://doi.org/10.1214/aoms/1177706205}
}

@article{kaufmann2016,
  author = {Kaufmann, Emilie and Capp{\'e}, Olivier and Garivier, Aur{\'e}lien},
  title = {On the Complexity of Best-Arm Identification in Multi-Armed Bandit Models},
  journal = {Journal of Machine Learning Research},
  volume = {17},
  number = {1},
  pages = {1--42},
  year = {2016},
  url = {https://jmlr.org/papers/v17/kaufman16a.html}
}

@article{stone1982,
  author = {Stone, Charles J.},
  title = {Optimal Global Rates of Convergence for Nonparametric Regression},
  journal = {The Annals of Statistics},
  volume = {10},
  number = {4},
  pages = {1040--1053},
  year = {1982},
  doi = {10.1214/aos/1176345969},
  url = {https://doi.org/10.1214/aos/1176345969}
}

@article{yang1999,
  author = {Yang, Yuhong and Barron, Andrew},
  title = {Information-Theoretic Determination of Minimax Rates of Convergence},
  journal = {The Annals of Statistics},
  volume = {27},
  number = {5},
  pages = {1564--1599},
  year = {1999},
  doi = {10.1214/aos/1017939142},
  url = {https://projecteuclid.org/journals/annals-of-statistics/volume-27/issue-5/Information-theoretic-determination-of-minimax-rates-ofconvergence/10.1214/aos/1017939142.full}
}

@article{schoetz2026,
  author = {Sch{\"o}tz, Christof},
  title = {Nonparametric Estimation of Ordinary Differential Equations: Snake and Stubble},
  journal = {Bernoulli},
  volume = {32},
  number = {3},
  pages = {1880--1906},
  year = {2026},
  doi = {10.3150/25-BEJ1936},
  url = {https://arxiv.org/abs/2407.14989}
}

@article{schoetz_siebel2026,
  author = {Sch{\"o}tz, Christof and Siebel, Maximilian},
  title = {Lower Bounds for Nonparametric Estimation of Ordinary Differential Equations},
  journal = {Electronic Journal of Statistics},
  volume = {20},
  number = {1},
  pages = {503--559},
  year = {2026},
  doi = {10.1214/26-EJS2484},
  url = {https://arxiv.org/abs/2407.14993}
}

@inproceedings{hafner2019planet,
  author = {Hafner, Danijar and Lillicrap, Timothy and Fischer, Ian and Villegas, Ruben and Ha, David and Lee, Honglak and Davidson, James},
  title = {Learning Latent Dynamics for Planning from Pixels},
  booktitle = {Proceedings of the 36th International Conference on Machine Learning},
  series = {Proceedings of Machine Learning Research},
  volume = {97},
  pages = {2555--2565},
  year = {2019},
  publisher = {PMLR},
  url = {https://proceedings.mlr.press/v97/hafner19a.html}
}

@article{assran2023ijepa,
  author = {Assran, Mahmoud and Duval, Quentin and Misra, Ishan and Bojanowski, Piotr and Vincent, Pascal and Rabbat, Michael and LeCun, Yann and Ballas, Nicolas},
  title = {Self-Supervised Learning from Images with a Joint-Embedding Predictive Architecture},
  journal = {arXiv preprint arXiv:2301.08243},
  year = {2023},
  doi = {10.48550/arXiv.2301.08243},
  url = {https://arxiv.org/abs/2301.08243}
}

@article{bardes2024vjepa,
  author = {Bardes, Adrien and Garrido, Quentin and Ponce, Jean and Chen, Xinlei and Rabbat, Michael and LeCun, Yann and Assran, Mahmoud and Ballas, Nicolas},
  title = {Revisiting Feature Prediction for Learning Visual Representations from Video},
  journal = {arXiv preprint arXiv:2404.08471},
  year = {2024},
  doi = {10.48550/arXiv.2404.08471},
  url = {https://arxiv.org/abs/2404.08471}
}

@inproceedings{locatello2019,
  title = {Challenging Common Assumptions in the Unsupervised Learning of Disentangled Representations},
  author = {Locatello, Francesco and Bauer, Stefan and Lucic, Mario and Raetsch, Gunnar and Gelly, Sylvain and Sch{\"o}lkopf, Bernhard and Bachem, Olivier},
  booktitle = {Proceedings of the 36th International Conference on Machine Learning},
  pages = {4114--4124},
  year = {2019},
  volume = {97},
  series = {Proceedings of Machine Learning Research},
  publisher = {PMLR},
  url = {https://proceedings.mlr.press/v97/locatello19a.html}
}

@article{schoelkopf2021,
  author = {Sch{\"o}lkopf, Bernhard and Locatello, Francesco and Bauer, Stefan and Ke, Nan Rosemary and Kalchbrenner, Nal and Goyal, Anirudh and Bengio, Yoshua},
  title = {Towards Causal Representation Learning},
  journal = {arXiv preprint arXiv:2102.11107},
  year = {2021},
  url = {https://arxiv.org/abs/2102.11107}
}
\endgroup
\clearpage
\appendix
\section{Formal Experiment Model and Notation}\label{app:model}
The world class $\W$ is a nonempty compact metric space with its Borel $\sigma$-algebra. Its target map $w\mapsto F_w|_D$ is continuous into a complete separable metric space $(\F,\loss)$. Consequently the image $\F_0$ is compact and has finite diameter $D_F$. In the integrated-error formulation the ambient output space is $L^2(D,\nu;\R^p)$, while the admitted target fields are continuous. A continuous difference that is nonzero at a point is bounded away from zero on a relative neighborhood; full support of $\nu$ gives that neighborhood positive measure. Thus zero integrated distance identifies equal continuous fields on $D$.

The catalog $\A$ is a standard Borel space. The positive integer observation dimension $k_a$ and cost $c_a\in[c_{\min},\infty)$ are measurable. Variable-dimensional observations lie in the countable disjoint union of their Euclidean spaces. The Gaussian kernels in~\eqref{eq:episode} are jointly Borel in $(w,a)$, and the mean $m_a$ is continuous in $w$ for each fixed $a$. Every trajectory used by an episode exists uniquely and remains in $\mathcal X$ through its final readout. For an ordinary trajectory episode with $J_a$ readout times and $s$ sensor coordinates,
\[
 k_a=sJ_a,\qquad
 m_a(w)=\big(H_w(x_w^a(t_{a,1})),\ldots,H_w(x_w^a(t_{a,J_a}))\big),
\]
where $x_w^a$ is the solution with the preparation and input prescribed by $a$. A known-state calibration at reference $x_{\rm ref}$ instead has noiseless mean $H_w(x_{\rm ref})$. These means are generated by physical operations; they are not oracle evaluations of the vector field.

For the Gaussian calculations, let $\Sigma_a^{-1/2}$ be the inverse of the symmetric positive definite square root of $\Sigma_a$. Then $Z_a=\Sigma_a^{-1/2}Y_a$ has law $N(\mu_a(w),I_{k_a})$, where $\mu_a(w)=\Sigma_a^{-1/2}m_a(w)$. The transformation is known and invertible for every selected episode. Replacing raw readings by whitened readings in a transcript therefore preserves its information, likelihood ratios, and available decisions. It also preserves continuity and the equivalence relation~\eqref{eq:equivalence}.

A policy is a sequence of measurable kernels selecting an episode or a stop symbol from completed histories and internal randomness $R_\pi$, independent of the world and all observation innovations. The decoder $\decoder$ receives the selected operations and their observations, as well as the specified accessible part of $R_\pi$. Under budget $B$, a stopped history can be padded to length $\lfloor B/c_{\min}\rfloor$ with a deterministic null symbol. The null operation has no observation and contributes no information or likelihood ratio. This finite padding justifies the adaptive chain rules used below, including randomized stopping. Decoder randomization uses an independent seed. Hidden policy randomness may be adjoined to the transcript when deriving an upper bound on information.

No invariant meaning is assigned to a latent coordinate by predictive accuracy alone. If $S$ is a $C^1$ diffeomorphism on a physical state region, writing $z=S(x)$ transforms the dynamics and sensor to
\[
 \dot z=DS(S^{-1}z)F_w(S^{-1}z,u),\qquad
 \widetilde H(z)=H_w(S^{-1}z).
\]
Here $DS$ is the Jacobian of $S$. The observed trajectories agree when the permitted preparations transform consistently. Known physical preparations can break this equivalence; arbitrary coordinate changes do not preserve an apparatus with fixed references. The target in~\eqref{eq:metric} is explicitly in physical coordinates, so a quotient by coordinate transformations would be a different learning objective.

The table collects the main notation for reference; each quantity is introduced in the text before it is used in a result.
\begin{center}\small
\begin{tabular}{p{0.26\linewidth}p{0.66\linewidth}}
\toprule
Symbol & Meaning \\
\midrule
$x,u;\ p,q,s$ & Physical state and input; state, input, and sensor dimensions. \\
$w=(F_w,H_w),\ \W$ & Joint dynamics--sensor world and candidate world class. \\
$\mathcal X,K,U;\ D,\nu$ & State domain, evaluation state region, input set; evaluation domain $D=K\times U$ and probability measure. \\
$\F,\F_0,\loss$ & Ambient output space, candidate target set, and law metric. \\
$\lawhat,\widetilde F$ & Actual decoded field and abstract rate--distortion reconstruction. \\
$a,\A;\ k_a,c_a$ & Episode, apparatus catalog; output dimension and cost. \\
$m_a,\Sigma_a,\mu_a$ & Noiseless output mean, noise covariance, whitened mean. \\
$Y_a,Z_a$ & Raw and whitened experimental observations. \\
$\pi,\Pi_{\A}(B),T^\pi$ & Acquisition policy, budget-feasible policies, complete transcript. \\
$\decoder,R_\pi$ & Field decoder and independent policy randomness. \\
$\varepsilon,B,D_F$ & Squared-error tolerance, acquisition budget, target diameter. \\
$\mathfrak R^\star_{\A}(B),B^\star_{\A}(\varepsilon)$ & Minimax law risk and least feasible acquisition budget. \\
$\mathsf H_{\F_0}(r)$ & Metric entropy at physical resolution $r$. \\
$P,\rd_P,\capinfo_{\A}(P,B)$ & Joint prior, law rate--distortion, experimental information capacity. \\
$\kappa_{\A}$ & Largest physical distance within an experimental equivalence class. \\
$A,Q_A,\cost(A)$ & Finite program, its output dimension and one-round cost. \\
$\Delta_A(r),M_A(h)$ & Response gap at law resolution $r$; response-cover size at radius $h$. \\
$\omega_A(s),\mathfrak R_A(n)$ & Inverse response modulus; risk from $n$ rounds of fixed $A$. \\
$d=p+q,\ L,M,\sigma$ & State--action dimension, field Lipschitz constant, magnitude and noise scale in Section~\ref{sec:budget}. \\
\bottomrule
\end{tabular}
\end{center}

\section{Information Bounds and Experimental Ambiguity}\label{app:information}
All information quantities below use natural logarithms. The target is the random field $F=F_W$, while a prior on $W$ specifies both dynamics and sensing. Regular conditional distributions exist on the standard Borel spaces of Appendix~\ref{app:model}. Thus an actual randomized decoder induces an admissible abstract channel in~\eqref{eq:rd}; the data-processing argument in Theorem~\ref{thm:rd} applies without assuming that sensing is known. The supremum over priors in~\eqref{eq:rdbudget} is taken after the necessary budget bound for each prior. No interchange of a policy infimum and a prior supremum is used, and no minimax-optimal procedure is assumed to attain its infimum.

We first prove the metric-entropy statements. A finite radius-$r$ cover with centers $f_1,\ldots,f_N\in\F_0$ admits a measurable nearest-center map, with ties resolved by the smallest index. Its reconstruction has squared loss at most $r^2$ pointwise and takes at most $N$ values. Consequently its mutual information with $F$ is at most $\log N$, proving $\rd_P(r^2)\le\mathsf H_{\F_0}(r)$ for every prior. This is a source-description statement; its encoder is supplied with $F$.

\begin{proof}[Proof of Corollary~\ref{cor:packing}]
Take target fields $f_1,\ldots,f_M$ with pairwise distances strictly greater than $2r$, and choose a world $w_j$ representing each field. Let $J$ be uniform on $\{1,\ldots,M\}$ and set $W=w_J$, $F=f_J$. For any abstract reconstruction $\widetilde F$ with $\E\loss(F,\widetilde F)^2\le\varepsilon$, decode
\[
 \widehat J=\min\argmin_{1\le j\le M}\loss(f_j,\widetilde F).
\]
If $\widehat J\ne J$, the triangle inequality and nearest-center rule imply
\[
 2r<\loss(f_J,f_{\widehat J})
 \le\loss(f_J,\widetilde F)+\loss(\widetilde F,f_{\widehat J})
 \le2\loss(f_J,\widetilde F).
\]
Thus its error probability $p_e$ is at most $\varepsilon/r^2$ by Markov's inequality. Since $J$ and $F$ determine each other on this packing, Fano's inequality yields
\begin{align}
 I(F;\widetilde F)
 &=I(J;\widetilde F)\nonumber\\
 &\ge\log M-h_2(p_e)-p_e\log(M-1)\nonumber\\
 &\ge(1-\varepsilon/r^2)\log M-\log2,\label{eq:app_fano}
\end{align}
where $h_2(t)=-t\log t-(1-t)\log(1-t)$, with $0\log0=0$. Taking the infimum over abstract channels proves the rate--distortion lower bound. Theorem~\ref{thm:rd} then gives~\eqref{eq:packing} for any uniformly $\varepsilon$-accurate procedure. Compactness bounds the packing size at each positive resolution. Taking $r=2\sqrt\varepsilon$ gives $1-\varepsilon/r^2=3/4$ and $2r=4\sqrt\varepsilon$, which is the final assertion.
\end{proof}
This lower bound uses a selected packing prior; it does not identify the rate--distortion function of every prior with metric entropy.

For completeness, we give a two-world bound that allows arbitrary measurable outputs, including fields outside $\F_0$. For a fixed policy let $P_w^\pi$ and $P_v^\pi$ be its transcript laws in worlds $w$ and $v$, and put $\delta=\loss(F_w,F_v)$. Let $\lambda$ dominate these two distributions, with respective densities $p_w,p_v$, and let $K(d\widehat f\mid t)$ be the decoder's output kernel. The triangle inequality gives
\[
 \loss(F_w,\widehat f)^2+\loss(F_v,\widehat f)^2\ge\delta^2/2.
\]
Integrating this inequality against $K(d\widehat f\mid t)\min(p_w(t),p_v(t))\,d\lambda(t)$ bounds the sum of the two risks from below by $\delta^2(1-\TV(P_w^\pi,P_v^\pi))/2$. The larger risk is at least half that sum, so
\begin{equation}\label{eq:app_twopoint}
 \max_{s\in\{w,v\}}\E_s\loss(F_s,\lawhat)^2
 \ge\frac{\loss(F_w,F_v)^2}{4}
          \bigl(1-\TV(P_w^\pi,P_v^\pi)\bigr).
\end{equation}
Here $\TV(P,Q)=\sup_E|P(E)-Q(E)|$. The conclusion also holds if a risk is infinite.

\begin{proof}[Proof of Proposition~\ref{prop:floor}]
Each equality constraint $\mu_a(w)=\mu_a(v)$ defines a closed subset of compact $\W^2$. Their intersection is nonempty, since it contains the diagonal, and the continuous distance $\loss(F_w,F_v)$ attains its maximum $\kappa_\A$ on this intersection. If $w\sim_\A v$, pad the transcript after stopping to $N_B=\lfloor B/c_{\min}\rfloor$ steps with a null operation and constant output. At a shared history the policy chooses the next operation by the same world-independent rule; a nonnull operation has the same Gaussian response distribution in the two worlds. Induction over the finite padded transcript proves $P_w^\pi=P_v^\pi$. Independent policy randomization can be conditioned on throughout this argument. Applying~\eqref{eq:app_twopoint} to an attaining pair and taking the infimum over policies and decoders proves~\eqref{eq:floor}, including $B=0$.
\end{proof}

A quantitative extension measures the best squared response separation per unit cost:
\begin{equation}\label{eq:app_gamma}
 \Gamma_\A(w,v)=\sup_{a\in\A}
       \frac{\|\mu_a(w)-\mu_a(v)\|_2^2}{c_a}\in[0,\infty].
\end{equation}
For $B>0$ and $\Gamma_\A(w,v)<\infty$, the adaptive divergence chain rule gives
\begin{equation}\label{eq:app_adaptive_kl}
 \KL(P_w^\pi\|P_v^\pi)
 \le\frac12\E_w\sum_{j=1}^{N_\pi}
             \|\mu_{a_j}(w)-\mu_{a_j}(v)\|_2^2
 \le\frac B2\Gamma_\A(w,v).
\end{equation}
To justify the first inequality, augment the record with the policy's entire independent random seed and pad after stopping as above. The seed has the same distribution in both worlds. Conditional on the seed and completed history, the selected operation is fixed and contributes no likelihood ratio. Its response contributes
$\KL(N(\mu_a(w),I)\|N(\mu_a(v),I))=\|\mu_a(w)-\mu_a(v)\|_2^2/2$.
The chain rule gives equality for this augmented record; discarding any inaccessible seed variables cannot increase divergence. The second inequality follows from the pathwise cost bound. Pinsker's inequality and~\eqref{eq:app_twopoint} therefore give
\begin{equation}\label{eq:app_cost_lower}
 \mathfrak R_\A^\star(B)
 \ge\frac{\loss(F_w,F_v)^2}{4}
 \left(1-\frac{\sqrt{B\Gamma_\A(w,v)}}2\right)_+,
 \qquad (t)_+=\max\{t,0\}.
\end{equation}
When $\Gamma_\A(w,v)=\infty$, the asserted nonnegative lower bound is simply zero. This argument applies to adaptation between complete episodes, as declared in Section~\ref{sec:problem}; it does not silently enlarge the catalog to include feedback inside an episode.

The scalar information converse has no unconditional sufficiency direction. An explicit example uses two-dimensional constant fields
\[
 F_{s,v}(x)=(\alpha s,v),\qquad
 s\in\{-1,-1/3,1/3,1\},\quad v\in\{-1,1\},\quad0<\alpha<1.
\]
Use the known sensor $H(x)=x_1$. Every permitted episode resets $x(0)=0$, runs for unit time, and returns $Y=\alpha S+G$, where $G\sim N(0,1)$; repetitions have independent noise and unit cost. Evaluate squared field error on any compact physical domain with a full-support probability measure $\nu$. Let the prior make $S$ and $V$ independent and uniform on their respective sets. There are only eight worlds, so all compactness and continuity assumptions hold.

Every transcript, including adaptive stopping and independent randomization, is independent of $V$. For any output field with second component $\widehat f_2$, if its risk is finite, independence and $\E V=0$ yield
\[
 \E\|V\mathbf1-\widehat f_2\|_{L^2(\nu)}^2
 =1+\E\|\widehat f_2\|_{L^2(\nu)}^2\ge1.
\]
Here $\mathbf1$ is the constant-one function and $\nu$ has total mass one. Infinite-risk outputs obey the same lower bound trivially. Thus the prior-average risk of every executable procedure, and hence its worst-world risk, is at least one at every budget.

An abstract reconstruction can instead report $V$ and output the constant field $\widetilde F=(0,V)$. Since $\E S^2=5/9$,
\[
 \E\loss(F,\widetilde F)^2=5\alpha^2/9\le\alpha^2,
 \qquad I(F;\widetilde F)=H(V)=\log2,
\]
so $\rd_P(\alpha^2)\le\log2$. This reconstruction is allowed to lie outside the candidate class, as required by the output convention of Section~\ref{sec:problem}. On the other hand, $n$ permitted readings have mean $\overline Y=\alpha S+N(0,1/n)$. The four means have minimum spacing $2\alpha/3$, so nearest-mean decoding of $S$ has error probability at most $2\Phi(-\alpha\sqrt n/3)$, where $\Phi$ is the standard normal distribution function. This probability tends to zero. Fano's inequality implies
\[
 I(F;Y_1,\ldots,Y_n)=I(S;Y_1,\ldots,Y_n)\longrightarrow\log4.
\]
For some finite budget, therefore,
\[
 \capinfo_\A(P,B)>\log2\ge\rd_P(\alpha^2),
 \qquad
 \inf_{\pi,\decoder}\E_P\loss(F,\decoder(T^\pi))^2\ge1>\alpha^2.
\]
The counterexample concerns which part of the law is observed, rather than a failure of rate--distortion theory. The abstract source encoder is allowed to reveal $V$; the declared sensor is not. The target information needed for a useful reconstruction cannot in general be replaced by an equal amount of information about different physical distinctions.

We finally bound the information supplied per unit experimental cost. Define
\begin{equation}\label{eq:app_chi}
 \chi_\A=\sup_{a\in\A}\sup_{Q\in\mathcal P(\W)}
             \frac{I_Q(F;Z_a)}{c_a}\in[0,\infty],
\end{equation}
where $\mathcal P(\W)$ denotes all Borel priors on the joint world space, $W\sim Q$, and $F=F_W$. This supremum must include every such prior, since conditional distributions after adaptive observations need not remain in a smaller prior family. For every prior $P$ and $B>0$,
\begin{equation}\label{eq:app_capacity_cost}
 \capinfo_\A(P,B)\le B\chi_\A.
\end{equation}
At zero budget capacity is zero; if $\chi_\A<\infty$,~\eqref{eq:app_capacity_cost} holds there as well. A nontrivial finite bound requires a finite $\chi_\A$ and is not a consequence of compactness alone.

To prove~\eqref{eq:app_capacity_cost}, let $R_\pi$ contain all policy randomization, independent of both $W$ and the observation innovations. It may be added to the transcript even when only part is accessible. Pad to $N_B$ steps and put $\mathscr H_0=\sigma(R_\pi)$ and
$\mathscr H_j=\sigma(R_\pi,a_1,Z_1,\ldots,a_j,Z_j)$.
Conditional on $\mathscr H_{j-1}$ the chosen episode or null action is fixed. Since $I_P(F;R_\pi)=0$, the information chain rule gives
\[
 I_P(F;T^\pi)
 \le\sum_{j=1}^{N_B}I_P(F;Z_j\mid\mathscr H_{j-1},a_j).
\]
At almost every completed history, the conditional law of $W$ is a Borel prior $Q_j$ on the joint world space. For a nonnull selected episode $a$, its conditional response is the original episode channel under $Q_j$: independence of innovations ensures that conditioning on the policy seed does not reveal its future measurement noise. Its conditional mutual information is thus at most $c_a\chi_\A$ by~\eqref{eq:app_chi}; a null episode contributes zero. If $\chi_\A<\infty$, averaging over histories yields
\[
 I_P(F;T^\pi)
 \le\chi_\A\E_P\sum_{j=1}^{N_\pi}c_{a_j}\le B\chi_\A.
\]
Taking the supremum over admissible policies proves the result. If $\chi_\A=\infty$ and $B>0$, the bound is true in the extended-real sense without performing an indeterminate multiplication at zero.

In the whitened Gaussian model, write $D_a=\diam\mu_a(\W)$, which is finite for each fixed episode. For any prior $Q$, set $\overline\mu_a=\E_Q\mu_a(W)$. Comparing the conditional Gaussian response with $N(\overline\mu_a,I)$ gives
\begin{align}
 I_Q(F;Z_a)&\le I_Q(W;Z_a)\nonumber\\
 &\le\E_Q\KL\bigl(N(\mu_a(W),I)\|N(\overline\mu_a,I)\bigr)\nonumber\\
 &=\tfrac12\E_Q\|\mu_a(W)-\overline\mu_a\|_2^2
 \le\tfrac14D_a^2.\label{eq:app_gaussian_capacity}
\end{align}
The comparison inequality follows from the identity that the average conditional divergence to a reference distribution equals mutual information plus the divergence of the marginal response to that reference. The final step uses an independent copy $W'$: the variance is $\E\|\mu_a(W)-\mu_a(W')\|_2^2/2\le D_a^2/2$. Hence
\begin{equation}\label{eq:app_chi_diameter}
 \chi_\A\le\frac14\sup_{a\in\A}\frac{D_a^2}{c_a}.
\end{equation}
This supplies a finite information-per-cost bound when the displayed supremum is finite. It also shows why an episode catalog with arbitrarily large response diameter at bounded cost requires separate control. Neither a finite per-episode response dimension nor compactness of the world class imposes that control uniformly over the catalog.

\section{Constructive Recovery and Inverse Stability}\label{app:recovery}

Throughout this appendix, $\W$ is a nonempty compact metric space, the target map $w\mapsto F_w$ takes values continuously in the complete separable metric space $(\F,\loss)$, and each whitened response $\mu_a$ is continuous. The target diameter is $D_F>0$. For a finite nonempty program $A\subset\A$, write
\[
 \mu_A(w)=(\mu_a(w))_{a\in A},\qquad
 Q_A=\sum_{a\in A}k_a,\qquad
 \cost(A)=\sum_{a\in A}c_a.
\]
The order used in the stacked vector is fixed in advance. Independent repetitions give the averaged response~\eqref{eq:average}, $\overline Z_A=\mu_A(w)+G/\sqrt n$, where $G\sim N(0,I_{Q_A})$, at deterministic cost $n\cost(A)$. Response norms in this appendix are Euclidean and already account for the known observation covariance through whitening. The representatives in the response codebook are complete worlds, including sensor uncertainty; only their physical fields are returned.

\begin{proof}[Proof of Theorem~\ref{thm:recovery}]
Fix a true world $w\in\W$, and abbreviate $\mu=\mu_A(w)$ and $\mu_j=\mu_A(v_j)$. Let $M=M_A(h)$ be the size of the fixed codebook in~\eqref{eq:codebook}. Choose an index $j_0$ whose response satisfies $s_0:=\|\mu_{j_0}-\mu\|\le h$. This index is used only in the analysis and need not be known to the learner. If a candidate $j$ has $\loss(F_w,F_{v_j})\ge r$, the separation definition~\eqref{eq:sep} gives
\[
 s_j:=\|\mu_j-\mu\|\ge\Delta_A(r)>h\ge s_0.
\]
In particular, $\mu_j-\mu_{j_0}\ne0$. If this candidate is selected by~\eqref{eq:decoder}, its squared residual is at most the comparator's. Substitution of $\overline Z_A=\mu+G/\sqrt n$ and expansion give
\[
 2\langle G,\mu_j-\mu_{j_0}\rangle
 \ge\sqrt n(s_j^2-s_0^2).
\]
The left inner product divided by $\|\mu_j-\mu_{j_0}\|$ is standard normal. The triangle inequality gives
\[
 \frac{\sqrt n(s_j^2-s_0^2)}{2\|\mu_j-\mu_{j_0}\|}
 \ge\frac{\sqrt n(s_j-s_0)}2
 \ge\frac{\sqrt n(\Delta_A(r)-h)}2.
\]
The Gaussian tail bound therefore implies
\[
 \mathbb P_w\{\widehat j=j\}
 \le\exp\!\left[-\frac{n(\Delta_A(r)-h)^2}{8}\right].
\]
A union bound over the at most $M$ candidates whose target fields are at distance at least $r$ proves~\eqref{eq:decode_error}:
\begin{equation}\label{eq:appendix_decode_tail}
 \sup_{w\in\W}\mathbb P_w\{\loss(F_w,\lawhat)\ge r\}
 \le\min\!\left\{1,M\exp\!\left[-\frac{n(\Delta_A(r)-h)^2}{8}\right]\right\}.
\end{equation}
Both the target and the selected field belong to $\F_0$, so their distance is at most $D_F$. Splitting the expectation according to whether the distance is below $r$ gives~\eqref{eq:decode_risk}:
\[
 \sup_{w\in\W}\E_w\loss(F_w,\lawhat)^2
 \le r^2+D_F^2M\exp\!\left[-\frac{n(\Delta_A(r)-h)^2}{8}\right].
\]
All bounds are uniform in the true world, including worlds outside the codebook. Taking $r=\sqrt{\varepsilon/2}$, $h=\Delta_A(r)/2$, and
\[
 n=\left\lceil\frac{32}{\Delta_A(r)^2}
             \log\frac{2M_A(h)D_F^2}{\varepsilon}\right\rceil
\]
makes the two terms at most $\varepsilon/2$ each and proves~\eqref{eq:constructive_budget}. Here $\log$ is natural. Compactness of the continuous image $\mu_A(\W)$ supplies a finite response cover for every $h>0$. The smallest minimizing codebook index is a measurable function of the response vector, so the construction defines an admissible decoder.
\end{proof}

The same argument also gives a useful statement without choosing a physical resolution first. For any $t>0$, a candidate with response distance $s_j>h+t$ can beat the comparator only with probability at most $e^{-nt^2/8}$. Consequently, for $0<\eta<1$, the inverse modulus in~\eqref{eq:modulus} gives
\begin{equation}\label{eq:appendix_codebook_modulus}
 \sup_{w\in\W}\mathbb P_w\!\left\{
 \loss(F_w,\lawhat)>
 \omega_A\!\left(h+\sqrt{\frac{8\log(M_A(h)/\eta)}n}\right)
 \right\}\le\eta.
\end{equation}
Thus approximation of the response image, statistical precision, and conversion from response error to physical error enter separately. This is a statistical existence guarantee; it does not bound the computation required to construct or search the cover.

\begin{proof}[Proof of Corollary~\ref{cor:characterization}]
By~\eqref{eq:equivalence}, $\kappa_{\A}=0$ exactly when every pair of worlds with distinct target fields is separated by some episode. For $0<r\le D_F$, let
\[
 K_r=\{(w,v)\in\W^2:\loss(F_w,F_v)\ge r\}.
\]
This set is nonempty and compact. Suppose every pair with distinct target fields is separated by some admissible episode. The relatively open sets
\[
 U_a=\{(w,v)\in K_r:\|\mu_a(w)-\mu_a(v)\|>0\},
 \qquad a\in\A,
\]
cover $K_r$. A finite subcover defines a finite program $A_r$. Its stacked response distance is continuous and strictly positive on $K_r$, so it attains a positive minimum: $\Delta_{A_r}(r)>0$. Conversely, such a program for every $r$ separates any distinct target pair by taking $r=\loss(F_w,F_v)$. Theorem~\ref{thm:recovery} then provides, for every positive squared-error tolerance, a finite deterministic budget with uniformly bounded risk at that tolerance. Since $\mathfrak R_{\A}^\star(B)$ is nonincreasing in $B$, its limit is zero.

If pairwise separation fails, take worlds $w,v$ with $\delta=\loss(F_w,F_v)>0$ and identical responses under every episode. Their episode distributions agree. At a common history, an admissible policy uses the same next-episode and stopping kernels in the two worlds. Pad stopped transcripts with a null symbol to length $\lfloor B/c_{\min}\rfloor$; induction then gives identical complete transcript laws, including accessible independent policy randomness. For every possible output $f\in\F$,
\[
 \loss(F_w,f)^2+\loss(F_v,f)^2\ge\delta^2/2.
\]
Integration under the common transcript law and the common decoder kernel shows that the larger of the two risks is at least $\delta^2/4$ at every budget. Uniform consistency is therefore equivalent to pairwise separation and to positive finite-program separation at every positive resolution.

For completeness, consider the information-saturation formulation. Let $P$ be any finitely supported prior on worlds, let $W\sim P$, and let $F=F_W$. The entropy $\mathsf H_P(F)$ is the entropy of the finite set of distinct target fields under their induced probabilities, rather than the entropy of world labels. Recall that
\[
 \capinfo_{\A}(P,B)=\sup_{\pi\in\Pi_\A(B)}I_P(F;T^\pi).
\]
If the target is constant under $P$, both quantities vanish. Otherwise list the distinct target fields as $f_1,\ldots,f_m$, $m\ge2$, and put $\delta_P=\min_{i\ne j}\loss(f_i,f_j)>0$. Pairwise separation and the preceding construction produce a sequence of admissible procedures whose mean squared error under $P$ tends to zero as their budgets tend to infinity. Map each reconstructed field to the nearest member of this finite target list, breaking ties by the smallest index. If the decoded label is wrong, the original field error is at least $\delta_P/2$. Markov's inequality therefore bounds the label error $p_B$ by $4\E_P\loss(F,\lawhat)^2/\delta_P^2$, which tends to zero along the constructed sequence.

Fano's inequality gives
\[
 \mathsf H_P(F\mid T^\pi)\le h_2(p_B)+p_B\log(m-1)\longrightarrow0,
\]
where $h_2(p)=-p\log p-(1-p)\log(1-p)$ with $0\log0=0$. Independent decoder randomization can be included in the record without changing $I_P(F;T^\pi)$. It follows that $\capinfo_{\A}(P,B)\to \mathsf H_P(F)$: the lower limit follows from these procedures and monotonicity in budget, and $I_P(F;T^\pi)\le \mathsf H_P(F)$ gives the upper bound. All entropies here use natural logarithms; the assertion is unchanged if all information quantities use bits.

Conversely, if distinct target worlds $w,v$ have identical episode responses, take $P$ uniform on these two worlds. Every admissible transcript is independent of the binary target, so $\capinfo_{\A}(P,B)=0$ for every $B$, whereas $\mathsf H_P(F)=\log2>0$. Information saturation for every finitely supported prior therefore implies pairwise separation. This proves all formulations in the corollary. The universal quantifier over priors is essential: saturation for one prior need not detect ambiguity elsewhere in $\W$.
\end{proof}

The resolution quantifier cannot in general be interchanged with the choice of program. For example, take $\dot x=f(u)$, $x(0)=0$, known identity sensing, and the compact class of $1$-Lipschitz functions $f:[0,1]\to[-1,1]$. An episode holds one constant input for unit time and reads the terminal state with Gaussian noise. All such episodes together separate distinct functions. Any finite program, however, observes only finitely many input values and misses a sufficiently small Lipschitz bump between them. The characterization asserts $\forall r>0\;\exists A_r$, not the existence of one finite program valid at every resolution.

We next justify the fixed-program bounds~\eqref{eq:fixedmodulus}, including their measurability and limiting assertions. Denote by $\mathfrak R_A(n)$ the minimax squared target error over all estimators of the complete record from $n$ independent executions of the fixed program $A$. The constraint set in~\eqref{eq:modulus} is a nonempty compact subset of $\W^2$, so $\omega_A(s)$ is attained and lies in $[0,D_F]$. It is nondecreasing. It is also right-continuous: for $s_j\downarrow s$, choose maximizing pairs at $s_j$; compactness supplies a convergent subsequence, whose limiting response distance is at most $s$ and whose target distance equals $\lim_j\omega_A(s_j)$. Hence this limit is at most $\omega_A(s)$, and monotonicity gives equality. In particular $\omega_A$ is Borel measurable and $\omega_A(s)\downarrow\omega_A(0)$ as $s\downarrow0$.

An exact nearest-response estimator over $\W$ exists measurably. To see this without imposing a coordinate parameterization of $\W$, write
\[
 \ell(z,v)=\|z-\mu_A(v)\|,\qquad
 m(z)=\min_{v\in\W}\ell(z,v).
\]
The minimum exists, and $m$ is $1$-Lipschitz in $z$. The minimum of $\ell(z,\cdot)$ over any fixed nonempty compact subset of $\W$ is also $1$-Lipschitz. At each integer level $j\ge1$, choose a finite cover of $\W$ by closed metric balls of radius $2^{-j}$ with centers in $\W$. Select the first ball whose intersection with the previously selected balls contains a minimizer of $\ell(z,\cdot)$. Each test compares $m(z)$ with the minimum over a fixed compact intersection, taking the latter minimum as $+\infty$ for an empty intersection; thus each finite-level choice is Borel. The chosen intersections with the minimizer set are nested, nonempty, compact, and have diameter at most $2^{1-j}$. They determine a unique point $\widehat w(z)$, which is the limit of the selected centers and hence is Borel measurable. It is an exact minimizer. Composition with the continuous target map gives a measurable field estimator.

\begin{proof}[Proof of Proposition~\ref{prop:modulus}]
Use the exact nearest-response estimator above with $z=\overline Z_A$. The true world is a feasible comparator, and the triangle inequality gives
\[
 \|\mu_A(\widehat w)-\mu_A(w)\|
 \le 2\|G\|/\sqrt n.
\]
By the inverse modulus definition, the output $F_{\widehat w}$ therefore has risk bounded uniformly by
\[
 \mathfrak R_A(n)\le
 \E\left[\omega_A(2\|G\|/\sqrt n)^2\right],
 \qquad G\sim N(0,I_{Q_A}).
\]
For the lower bound, fix $t\ge0$ and choose worlds attaining $\omega_A(t)$, with actual response separation $s\le t$. Their full $n$-round Gaussian records have whitened mean separation $\sqrt n\,s$. If $P_w^{(n)}$ and $P_v^{(n)}$ denote these record laws and $\Phi$ is the standard normal cumulative distribution function, projection onto the line joining the means gives
\[
 1-\TV(P_w^{(n)},P_v^{(n)})
 =2\Phi(-\sqrt n\,s/2)
 \ge2\Phi(-\sqrt n\,t/2).
\]
For any output $f$, the sum of the two squared losses is at least $\omega_A(t)^2/2$. Integrating this inequality against the pointwise minimum of the two record densities, and also against the decoder kernel if randomized, bounds the average pairwise risk below by
\[
 \frac{\omega_A(t)^2}{4}
       \bigl(1-\TV(P_w^{(n)},P_v^{(n)})\bigr)
 \ge\frac{\omega_A(t)^2}{2}\Phi(-\sqrt n\,t/2).
\]
The maximal risk is no smaller than the average. Taking $t=n^{-1/2}$ proves
\[
 \frac{\Phi(-1/2)}2\,\omega_A(n^{-1/2})^2
 \le\mathfrak R_A(n)
 \le\E\left[\omega_A(2\|G\|/\sqrt n)^2\right].
\]
Taking $t=0$ also gives $\mathfrak R_A(n)\ge\omega_A(0)^2/4$. If $\omega_A(0)=0$, right-continuity at zero and bounded convergence in the upper bound imply $\mathfrak R_A(n)\to0$. If $\omega_A(0)>0$, the lower bound prevents consistency.

Finally suppose $cs^\beta\le\omega_A(s)\le Cs^\beta$ for $0\le s\le s_0$, with $0<\beta\le1$. For $n\ge s_0^{-2}$, the displayed two-point bound gives $\mathfrak R_A(n)\ge\tfrac12\Phi(-\tfrac12)c^2n^{-\beta}$. For the upper bound, split according to whether $2\|G\|/\sqrt n\le s_0$ and use $\omega_A\le D_F$. Then
\[
 \mathfrak R_A(n)
 \le C^2(4/n)^\beta\E\|G\|^{2\beta}
   +D_F^2\mathbb P\{\|G\|>s_0\sqrt n/2\}.
\]
The moment is finite. Since $\E e^{\|G\|^2/4}=2^{Q_A/2}$, the tail is at most $2^{Q_A/2}e^{-s_0^2n/16}$, which is at most a constant multiple of $n^{-\beta}$. This proves the matching upper order. The program $A$, its dimension $Q_A$, and the modulus constants are fixed in this rate statement.
\end{proof}

Identifiability does not itself imply a polynomial rate. In the scalar compact inverse problem with parameter $\theta\in[0,1]$, target distance $|\theta-\vartheta|$, and response $\mu(\theta)=e^{-1/\theta^2}$ for $\theta>0$, $\mu(0)=0$, the response is smooth and injective. Comparing $0$ with $\theta=[\log(1/s)]^{-1/2}$ shows $\omega(s)\ge[\log(1/s)]^{-1/2}$ for $0<s\le e^{-1}$. The fixed-program lower bound is therefore of order at least $1/\log n$, excluding every polynomial upper rate on this class.

The abstract separation condition has a direct sufficient realization for nonlinear physical dynamics. The next proposition spells out the required access rather than assuming that a learner can prescribe an arbitrary physical state through its latent representation.

\begin{proposition}[Sufficient physical access for nonlinear dynamics]\label{prop:physical_access}
Let $D=K\times U$, with $K\subset\R^p$ and $U\subset\R^q$ compact, and let $\W$ satisfy Assumption~\ref{ass:standing}(i),(ii),(v). Let $E\subset\R^p$ be a compact state neighborhood with $\delta=\operatorname{dist}(K,\R^p\setminus\operatorname{int}E)>0$. Suppose there are constants $M_0,L_0\ge0$ such that, for all $w\in\W$, $u\in U$, and $x,y\in E$, $\|F_w(x,u)\|\le M_0$ and $\|F_w(x,u)-F_w(y,u)\|\le L_0\|x-y\|$. Suppose the sensor maps are continuous and injective on $E$, and that both fields and sensors depend continuously on $w$ in the uniform norms on $E\times U$ and $E$, respectively. Assume finite-cost episodes include calibration at every known state in $E$, and evolution episodes that reset exactly to any state in $K$, hold any constant input in $U$, and read the state through the sensor once at a time $t$ drawn from a set of positive times with infimum zero. Fix $0<T_0<\delta/(M_0+1)$ and let $\A'$ consist of these calibrations and of these evolution episodes with $t\le T_0$. Then:
\begin{enumerate}[label=(\alph*)]
 \item every evolution episode in $\A'$ has a unique trajectory that remains in $E$, and every episode in $\A'$ has a continuous response map $w\mapsto m_a(w)$;
 \item for every pair $w,v\in\W$ with $F_w\ne F_v$ at some point of $D$, some episode $a\in\A'$ satisfies $m_a(w)\ne m_a(v)$.
\end{enumerate}
Consequently $\A'$ satisfies Assumption~\ref{ass:standing}(iii),(iv) and $\kappa_{\A'}=0$, so Corollary~\ref{cor:characterization} gives $\mathfrak R^\star_{\A'}(B)\to0$; since $\A'\subseteq\A$, also $\mathfrak R^\star_{\A}(B)\to0$ as $B\to\infty$.
\end{proposition}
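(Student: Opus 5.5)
The plan is to prove (a) by a standard Picard--Lindelöf and escape-time argument. Then (b) follows from a two-step identification: calibration identifies the sensor on $E$, and short-time readouts identify the field.

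For (a), fix an evolution episode with reset $x_0\in K$, constant input $u\in U$ and readout time $t\le T_0$. On $E$ the field $F_w(\cdot,u)$ is bounded by $M_0$ and $L_0$-Lipschitz. Extend it to $\R^p$ by composing with a Lipschitz retraction, or apply Picard--Lindelöf locally. A unique solution exists as long as it stays in $E$. Since $\|\dot x\|\le M_0$, we have $\|x(s)-x_0\|\le M_0 s<\delta$ for $s\le T_0$. Hence the trajectory cannot reach $\R^p\setminus\operatorname{int}E$ before $T_0$, and uniqueness on $[0,t]$ follows.

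Continuity in $w$ comes from Grönwall. For worlds $w,v$ with trajectories in $E$,
\[
\|x_w(s)-x_v(s)\|\le s\,\|F_w-F_v\|_{E\times U}\,e^{L_0 s}.
\]
Then
\[
\|H_w(x_w(t))-H_v(x_v(t))\|\le\|H_w-H_v\|_E+\|H_v(x_w(t))-H_v(x_v(t))\|.
\]
The second term tends to zero as $v\to w$ by uniform continuity of $H_w$ on compact $E$, combined with uniform convergence $H_v\to H_w$; a standard $\epsilon/3$ argument handles this. Calibration responses $H_w(x_{\rm ref})$ are continuous by uniform convergence of the sensors. Costs and covariances are as given, so Assumption~\ref{ass:standing}(iii),(iv) hold for $\A'$.

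For (b), argue by contrapositive. Suppose $m_a(w)=m_a(v)$ for all $a\in\A'$. Calibration at every known state of $E$ gives $H_w=H_v=:H$ on $E$, and $H$ is continuous and injective on compact $E$, hence a homeomorphism onto its image. For each $x_0\in K$, $u\in U$ and each admissible readout time $t$, the equality $H(x_w(t))=H(x_v(t))$ with both points in $E$ and injectivity give $x_w(t)=x_v(t)$. Letting $t\downarrow0$ along the admissible times, which have infimum zero,
\[
F_w(x_0,u)=\lim_{t\downarrow 0}\frac{x_w(t)-x_0}{t}=\lim_{t\downarrow 0}\frac{x_v(t)-x_0}{t}=F_v(x_0,u).
\]
The limits hold because trajectories are $C^1$ with $\dot x(0)=F(x_0,u)$. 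So $F_w=F_v$ everywhere on $D$, a contradiction.

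Since admitted fields are continuous and $\nu$ has full support (Appendix~\ref{app:model}), pointwise inequality on $D$ is the same as positive $\loss$-distance. Therefore every $\sim_{\A'}$-equivalent pair has $\loss(F_w,F_v)=0$, which gives $\kappa_{\A'}=0$. Corollary~\ref{cor:characterization} then yields $\mathfrak R^\star_{\A'}(B)\to0$. Monotonicity under catalog enlargement, since policies over $\A'$ are admissible for $\A$, gives the same for $\A$.

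The main obstacle is step (a)'s continuity of $w\mapsto H_w(x_w(t))$. We only have uniform convergence on $E$ and no Lipschitz control on the sensors, so the composition must be handled carefully via uniform continuity of the limit sensor. The rest is routine.
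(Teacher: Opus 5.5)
Your proof is correct and follows essentially the paper's route. For (a) you use the escape bound $M_0T_0<\delta$, Gr\"onwall, and uniform continuity of the limit sensor. For (b) you use calibration to force $H_w=H_v$ on $E$, then injectivity and a short-time expansion. The only difference is that you argue (b) by contrapositive through difference quotients along admissible times $t_k\downarrow0$, whereas the paper uses the explicit estimate $\|x_w(t)-x_0-tF_w(x_0,u_0)\|\le L_0M_0t^2/2$ to show directly that the two states differ at every sufficiently small $t$.

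One minor slip: a general compact $E$ (for example an annular neighborhood, as in the orbit example) is not a Lipschitz retract of $\R^p$. Drop that option and use your local Picard--Lindel\"of argument, or retract onto a closed ball around $x_0$ contained in $\operatorname{int}E$, which is all you need.
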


\begin{proof}
Let $M_0$ be a common field bound and put $\delta=\operatorname{dist}(K,\R^p\setminus\operatorname{int}E)>0$. With $T_0$ as in the statement, every trajectory reset in $K$ under a constant input stays in $E$ through time $T_0$. Existence and uniqueness follow from the stated state-Lipschitz assumption. If $w_j\to w$, uniform convergence of the fields and Gronwall's inequality give uniform convergence of the associated trajectories on this interval. Uniform convergence of the sensors and uniform continuity of the limiting sensor on $E$ then give convergence of the noiseless readouts. Calibration means are continuous directly, and the fixed nonsingular whitening maps preserve continuity.

Take worlds $w,v$ with distinct target fields. If their sensors differ at any state of $E$, calibration at that known state separates their response means. Otherwise they share a common injective sensor $H$ on $E$. By hypothesis there is a point $(x_0,u_0)\in K\times U$ with $F_w(x_0,u_0)\ne F_v(x_0,u_0)$. Reset to $x_0$ and hold $u_0$. Since $\|x_w(s)-x_0\|\le M_0s$ and $F_w(\cdot,u_0)$ is $L_0$-Lipschitz on $E$, $\|x_w(t)-x_0-tF_w(x_0,u_0)\|\le L_0M_0t^2/2$ for $0\le t\le T_0$, and likewise for $v$. Hence
\[
 x_w(t)-x_v(t)
 =t\bigl(F_w(x_0,u_0)-F_v(x_0,u_0)\bigr)+O(t^2)
 \quad\text{as }t\downarrow0.
\]
The two states consequently differ for every sufficiently small positive time. They lie in $E$, so injectivity of $H$ makes their noiseless readings different. Some permitted positive readout time is small enough, and whitening preserves this difference. Thus an admissible episode separates the pair, proving the claim. Enlarging the catalog cannot increase $\mathfrak R^\star$, which gives the final assertion.
\end{proof}

The access assumptions are sufficient, not necessary. They require physical references and preparations supplied by the apparatus; they do not follow from model expressivity or accurate trajectory prediction. Without injective known sensing or adequate calibration, a change of physical coordinates can be compensated by a changed sensor. Without enough resets or inputs, laws can agree on all visited trajectories. Even when the proposition applies, its short-time argument gives separation rather than a quantitative polynomial inverse modulus; the budget needed to overcome a weak separation gap must still be analyzed.


\section{Proof of the Lipschitz budget theorem}
\label{app:lipschitz-budget}

We prove both bounds for the exact class, target loss, and observation
cost specified before Theorem~\ref{thm:lipschitz-budget}.
Throughout, $\varepsilon$ is expected squared law error, $p\geq1$,
$q\geq0$, $d=p+q$, and $L,M,\sigma>0$.
All norms in finite-dimensional spaces are Euclidean.

For the upper bound, set
\[
t=\frac{\sqrt{\varepsilon/2}}{LM},\qquad
\delta=\frac{\sqrt{\varepsilon/2}}{L\sqrt d},\qquad
m=\lceil\delta^{-1}\rceil,\qquad
n=\left\lceil\frac{4p\sigma^2L^2M^2}{\varepsilon^2}\right\rceil.
\]
Partition $D$ into $m^d$ cubes of side $1/m$, with centers
$z_i=(x_i,u_i)$. At each center, execute $n$ independent experiments
that reset to $x_i$, hold $u_i$ constant, and record the state once
at time $t$. Let $\overline Y_i$ be their mean and define
\[
\widehat F(z)=\operatorname{proj}_{\overline B(0,M)}
\left(\frac{\overline Y_i-x_i}{t}\right)
\quad\text{for }z\text{ in cube }i.
\]
The globally bounded Lipschitz field has a unique global solution.
Since $\|x(s)-x_i\|\leq Ms$,
\[
\left\|\frac{x(t)-x_i}{t}-F(x_i,u_i)\right\|
\leq\frac1t\int_0^tL\|x(s)-x_i\|\,ds
\leq\frac{LMt}{2}.
\]
For $z$ in cube $i$, the distance to its center is at most
$\delta\sqrt d/2$. Before projection the mean-square error is bounded by
\[
\frac{p\sigma^2}{nt^2}
+\left(\frac{LMt}{2}+\frac{L\delta\sqrt d}{2}\right)^2
\leq\frac\varepsilon2+
\left(\sqrt{\frac\varepsilon8}+\sqrt{\frac\varepsilon8}\right)^2
=\varepsilon.
\]
Projection onto the closed Euclidean ball does not increase distance to
$F(z)$, which belongs to that ball. Integration over $D$ therefore gives
the required uniform risk bound.

Define
\[
\varepsilon_{\rm up}
=\min\{2L^2d,\ 2L^2M^2,\ 2\sqrt p\,\sigma LM\}.
\]
For $\varepsilon\leq\varepsilon_{\rm up}$, we have $\delta\leq1$,
$t\leq1$, $m\leq2/\delta$, and
$n\leq8p\sigma^2L^2M^2/\varepsilon^2$. There are $m^dn$ experiments,
each of cost $1+c_{\rm r}+c_{\rm t}t\leq1+c_{\rm r}+c_{\rm t}$. Their total cost is at most
\[
8p(8d)^{d/2}(1+c_{\rm r}+c_{\rm t})
\sigma^2L^{d+2}M^2\varepsilon^{-(d+4)/2}.
\]
Thus one may take $c_2=8p(8d)^{d/2}$. In particular,
$\varepsilon_{\rm up}$ does not depend on the reset or elapsed-time
costs.

Canceling pulse alternatives also appear in the ODE lower bounds of
\citet{schoetz_siebel2026}. Here an antisymmetric perturbation of the
drift coordinate gives equal crossing times for its two signs even in
one state dimension. We require a bound uniform over arbitrary reset
locations and readout times, including resets inside a perturbation,
so it can be used conditionally along an adaptive transcript.

For the lower bound, write $z=(x,u)$ and designate the first state
coordinate by $z_1=x_1$. This coordinate exists because $p\geq1$.
Let $e_1\in\mathbb R^p$ be its unit vector, and set
\[
G=\pi\sqrt{d+3},\qquad P=2^{-d},\qquad v=M/2.
\]
Define functions on the real line by
\[
\psi_1(s)=\begin{cases}\sin(2\pi s),&0<s<1,\\0,&\text{otherwise},\end{cases}
\qquad
\psi_0(s)=\begin{cases}\sin(\pi s),&0<s<1,\\0,&\text{otherwise},\end{cases}
\]
and put $\psi(z)=\psi_1(z_1)\prod_{k=2}^d\psi_0(z_k)$, with an
empty product equal to one. This function is supported on the unit cube,
is $G$-Lipschitz, satisfies $|\psi|\leq1$, and obeys
$\int_{[0,1]^d}\psi^2=P$.

Choose an amplitude
$0<h\leq\min\{L/(4G),M/4\}$ and define
\[
m=\left\lfloor\frac{L}{2Gh}\right\rfloor,\qquad
\ell=1/m,\qquad N=m^d.
\]
Then $m\geq L/(4Gh)\geq1$ and $\ell\leq4Gh/L$.
Partition $D$ into these $N$ cubes, with lower corners $z_j$,
and let $\phi_j(z)=\psi((z-z_j)/\ell)$, extended by zero outside
its cube. For $\theta\in\{-1,1\}^N$, define
\begin{equation}
F_\theta(z)=\left(v+h\sum_{j=1}^N\theta_j\phi_j(z)\right)e_1.
\label{eq:budget-hypercube}
\end{equation}
All fields belong to $\mathcal F_{L,M}$: their speeds are between
$v-h$ and $v+h$, hence between $v/2$ and $3v/2$, and their norms
are at most $M$. Within a cube the Lipschitz constant is at most
$hG/\ell\leq L/2$. Because the perturbations vanish on all cube
boundaries, this bound extends across cubes and to the complement of
$D$. Their supports have disjoint interiors, so
\begin{equation}
\rho^2(F_\theta,F_{\theta'})
=4h^2\ell^dP\,\operatorname{Ham}(\theta,\theta'),
\label{eq:budget-hamming-distance}
\end{equation}
where $\operatorname{Ham}$ is Hamming distance.

Under \eqref{eq:budget-hypercube}, only $x_1$ moves.
The transverse state coordinates and the constant action fix one
column of cubes. Along this column, $x_1$ increases strictly.
For two worlds differing only in bit $j$, the trajectories coincide
until they enter cube $j$, unless the reset already lies inside it.
When they enter at its left face, their exact crossing times are equal.
Indeed, if $a\in[0,1]$ is the fixed product of the transverse sine
factors, those times are
\begin{equation}
T_\pm=\ell\int_0^1\frac{ds}{v\pm ha\sin(2\pi s)},
\qquad T_+=T_-.
\label{eq:budget-exact-crossing}
\end{equation}
The equality follows by substituting $s\mapsto1-s$.
It is exact, for every permissible $h$ and every transverse
coordinate, including $a=0$. After leaving the cell, the two
trajectories therefore coincide again at the same time and state;
they solve the same future scalar equation. A fully traversed
perturbation leaves no residual displacement or time delay.

During a crossing, both scalar speeds, evaluated at the respective
trajectory positions, lie in $[v-h,v+h]$ and therefore differ by at most $2h$.
Each crossing lasts at most $\ell/(v-h)\leq2\ell/v$, giving
\begin{equation}
|x_1^+(t)-x_1^-(t)|\leq4h\ell/v.
\label{eq:budget-displacement}
\end{equation}
Write $z_\perp=(x_2,\ldots,x_p,u_1,\ldots,u_q)$ for the fixed
transverse coordinates, with an empty tuple when $p=1,q=0$.
We also need this bound when the reset lies strictly inside a cell,
because a partial crossing need not cancel. If the cell's right
endpoint is $\beta$ and its reset coordinate is $s_0$, the difference
of the remaining crossing times has absolute value at most
\[
\left|\int_{s_0}^{\beta}
\left[\frac{1}{v+h\phi_j(s,z_\perp)}
-\frac{1}{v-h\phi_j(s,z_\perp)}\right]ds\right|
\leq\frac{2h\ell}{v^2-h^2}
\leq\frac{8h\ell}{3v^2}.
\]
Before both trajectories leave the cell,
\eqref{eq:budget-displacement} holds by the same speed and duration
bounds. After both leave, they follow the same autonomous scalar
flow with a time shift of at most $8h\ell/(3v^2)$. Its speed is at
most $3v/2$, so \eqref{eq:budget-displacement} continues to hold
at every later time.

Fix a world $\theta$ and any reset, constant action, and readout time.
Let $\theta^{(j)}$ denote the sign vector with bit $j$ flipped,
and let $\Delta_j$ be the difference between the corresponding
state-readout means. By \eqref{eq:budget-exact-crossing},
$\Delta_j$ can be nonzero for at most two bits: the cell containing
the reset, if it is only partially crossed, and the cell occupied
by the trajectory under $\theta$ at the readout time.
For a fully entered cell, the entrance and exit times are the same
in both worlds, which justifies this statement also when their
positions inside the cell differ. Only the first state coordinate
contributes to $\Delta_j$. Thus
\begin{equation}
\sum_{j=1}^N\|\Delta_j\|^2
\leq32h^2\ell^2/v^2,
\qquad
\sum_{j=1}^N\frac{\|\Delta_j\|^2}{2\sigma^2}
\leq K_h,
\quad
K_h=\frac{1024G^2h^4}{\sigma^2L^2M^2}.
\label{eq:budget-information-per-readout}
\end{equation}
The bound is uniform over duration, including readouts taken after
the trajectory has left $D$.

We now allow adaptive selection of experiments, within-experiment
readout times, and stopping. A constant action is still selected
for each experiment, and there is no observation other than the
paid readings. Let $P_\theta^\pi$ be the transcript distribution
for a policy $\pi$ with pathwise cost at most $B$.
At a fixed realized history the command kernels are the same in
all worlds. Conditional on that history and the selected command,
each readout is Gaussian with covariance $\sigma^2I_p$ and the
deterministic flow mean determined by the commands in its current
experiment. There are at most $\lfloor B\rfloor$ readouts.
Padding a stopped transcript by common dummy symbols, the
conditional divergence chain rule gives
\[
\operatorname{KL}(P_\theta^\pi\Vert P_{\theta^{(j)}}^\pi)
=\mathbb E_\theta\sum_r
\frac{\|\Delta_{j,r}\|^2}{2\sigma^2}.
\]
This identity uses natural logarithms; the divergences are in nats.
The differences in this expression compare two worlds at the same
realized history and command sequence. Consequently the pointwise
bound \eqref{eq:budget-information-per-readout} applies before
taking the expectation, and
\begin{equation}
\sum_{j=1}^N
\operatorname{KL}(P_\theta^\pi\Vert P_{\theta^{(j)}}^\pi)
\leq B K_h\qquad\text{for every }\theta.
\label{eq:budget-adaptive-sum}
\end{equation}
The hard budget is sufficient for this argument even when $c_{\rm r}=0$
or $c_{\rm t}=0$, because each readout still costs one unit.

For completeness, the testing reduction is given explicitly.
For any estimator $\widehat F$ with finite risk, choose a nearest
hypercube field $F_{\widehat\theta}$ in $\rho$, breaking ties by
a fixed order. The triangle inequality and
\eqref{eq:budget-hamming-distance} imply
\[
\rho^2(F_\theta,\widehat F)
\geq h^2\ell^dP\,
\operatorname{Ham}(\theta,\widehat\theta).
\]
An estimator that can output fields outside $L^2(D;\mathbb R^p)$
has infinite loss there and cannot improve the lower bound.
For each fixed $j$ and $\theta_{-j}$, the sum of the two testing
errors is at least
$1-\operatorname{TV}(P_{(\theta_{-j},+)}^\pi,
P_{(\theta_{-j},-)}^\pi)$.
Pinsker's inequality, followed by Cauchy--Schwarz and
\eqref{eq:budget-adaptive-sum}, yields
\begin{equation}
2^{-N}\sum_\theta
\mathbb E_\theta\operatorname{Ham}(\theta,\widehat\theta)
\geq\frac12\left(N-\sqrt{NBK_h}\right).
\label{eq:budget-assouad}
\end{equation}
To check the factor for adaptive designs, the sum over coordinates
of the average one-sided neighbor divergences is bounded as
\[
\sum_j2^{-(N-1)}\sum_{\theta_{-j}}
\operatorname{KL}(P_{(\theta_{-j},+)}^\pi\Vert
P_{(\theta_{-j},-)}^\pi)
\leq2\cdot2^{-N}\sum_\theta\sum_j
\operatorname{KL}(P_\theta^\pi\Vert P_{\theta^{(j)}}^\pi)
\leq2BK_h.
\]
This is all that \eqref{eq:budget-assouad} uses; individual
coordinates need not receive equal experimental effort.
If $B\leq N/(4K_h)$, the average Hamming risk is at least $N/4$.
Since $N\ell^d=1$, every such procedure satisfies
\begin{equation}
\sup_\theta\mathbb E_\theta\rho^2(F_\theta,\widehat F)
\geq Ph^2/4.
\label{eq:budget-final-risk}
\end{equation}

Finally, define
\[
\varepsilon_{\rm low}
=\frac{P}{8}\min\left\{\frac{L^2}{16G^2},\frac{M^2}{16}\right\},
\qquad
\varepsilon_0=\min\{\varepsilon_{\rm up},\varepsilon_{\rm low}\}.
\]
For $0<\varepsilon\leq\varepsilon_0$, set $h=\sqrt{8\varepsilon/P}$.
The construction is valid and \eqref{eq:budget-final-risk}
is at least $2\varepsilon$. A procedure with risk at most
$\varepsilon$ must therefore have $B>N/(4K_h)$, giving
\[
\begin{split}
B^\star_{\rm state}(\varepsilon)
&\geq\frac{N}{4K_h}
\geq
\frac{\sigma^2L^{d+2}M^2}{4096\,4^dG^{d+2}}
h^{-(d+4)}\\
&=c_1\sigma^2L^{d+2}M^2\varepsilon^{-(d+4)/2},\qquad
c_1=\frac{(P/8)^{(d+4)/2}}{4096\,4^dG^{d+2}}.
\end{split}
\]
This proves the theorem. The exact crossing symmetry is essential
to the proof as written: a generic zero-mean perturbation need
not have equal crossing times under its two signs.

\begin{proof}[Proof of Proposition~\ref{prop:velocity}]
For the upper bound, a unit readout returns $F(x_0,u)+\xi$ at cost $1+c_{\rm r}$. With $\delta$ as above, put $m=\lceil\delta^{-1}\rceil$ and $n=\lceil2p\sigma^2/\varepsilon\rceil$, and at each of the $m^d$ cube centers average $n$ readouts. The cellwise estimate has variance $p\sigma^2/n\le\varepsilon/2$ and interpolation error at most $L\delta\sqrt d/2=\sqrt{\varepsilon/8}$, so the risk is at most $\varepsilon/2+\varepsilon/8\le\varepsilon$. If $\varepsilon\le\min\{2L^2d,2p\sigma^2\}$, then $m\le2/\delta$ and $n\le4p\sigma^2/\varepsilon$, so the total cost is at most
\[
 4p(8d)^{d/2}(1+c_{\rm r})\sigma^2L^d\varepsilon^{-(d+2)/2};
\]
one may take $c_2'=4p(8d)^{d/2}$.

For the lower bound, replace $\psi$ by the nonnegative bump $\psi_0^{\otimes d}$ in~\eqref{eq:budget-hypercube}. It is $G'$-Lipschitz with $G'=\pi\sqrt d$, vanishes outside the open unit cube, is bounded by one, and its squared integral is again $P=2^{-d}$. With $G'$ in place of $G$, the fields, the Hamming identity~\eqref{eq:budget-hamming-distance}, and the testing reduction are unchanged. At each query location only one bump can be nonzero, and flipping its sign changes the mean by at most $2h$, so the sum of neighbor divergences per reading is at most $K'=2h^2/\sigma^2$. Each reading costs at least one unit, so the adaptive argument~\eqref{eq:budget-adaptive-sum}--\eqref{eq:budget-final-risk} applies with $K'$ in place of $K_h$. Let
\[
 \varepsilon_{\rm low}'=\frac P8\min\left\{\frac{L^2}{16G'^2},\frac{M^2}{16}\right\}
\]
and $h=\sqrt{8\varepsilon/P}$ for $0<\varepsilon\le\varepsilon'_{\rm low}$. A procedure with risk at most $\varepsilon$ must have
\[
 B>\frac{N}{4K'}=\frac{N\sigma^2}{8h^2}
 \ge\frac{(P/8)^{(d+2)/2}}{8\cdot4^dG'^d}\,\sigma^2L^d\varepsilon^{-(d+2)/2},
\]
using $N=m^d\ge(L/(4G'h))^d$. Thus $c_1'=(P/8)^{(d+2)/2}/(8\cdot4^dG'^d)$ and $\varepsilon_0'=\min\{2L^2d,2p\sigma^2,\varepsilon'_{\rm low}\}$.
\end{proof}
Both rate statements assume positive noise with fixed variance.

\section{Case-Study Methods and Reproducibility}
\label{app:experiments}

The source file \texttt{reproduce.py} generates the original orbit-selection, calibration, and scaling results, using seed 20260912. Machine-readable results are saved in \texttt{results.json}; \texttt{scaling\_raw.npz} and \texttt{calibration\_raw.npz} retain the sampled law errors. The execution used NumPy 2.3.5, SciPy 1.17.0, and Matplotlib 3.10.8. The additional release-design and continuous-force studies use \texttt{extended\_orbit.py}; their full methods and records are described at the end of this appendix. Together the two scripts reproduce all reported numerical studies.

For the circular example, normalize the reference radius $r_0$ and acceleration $g_0$ to one. The common orbit is $q(t)=(\cos t,\sin t)$ with velocity $\dot q(t)=(-\sin t,\cos t)$. Both fields give acceleration $-q(t)$ on this orbit. Away from the origin, both define ordinary smooth dynamics; the displayed release trajectories remain within the regular annulus $0.7<\|q\|<2.1$. The harmonic release radius is $r_H(t)=2\cos t$. The inverse-square release has the parametric form
\[
r_G(t)=2\cos^2\theta,\qquad
t=2(\theta+\sin\theta\cos\theta),\quad 0\leq\theta<\pi/2.
\]
The implementation solves the strictly increasing scalar time relation by Brent's method. The noise experiment observes radius once at $t=0.4$ after each independent release. The equal-prior likelihood-ratio test compares the average readout to the midpoint of the two noiseless means. Its two conditional error probabilities coincide. We simulate 200,000 independent averages under one candidate at each $n\in\{1,2,4,8,16,32,64,128\}$, and evaluate the analytic normal-tail probability separately. A zero observed error at the largest budgets is a finite Monte Carlo outcome, not a claim of zero true risk. The two first-order fields have the same velocity component, so their difference is entirely in acceleration. For uniform radial weighting on $[1,2]$, the squared acceleration-law distance is $\int_1^2(r-r^{-2})^2\,dr=21/8-2\log2$. A decoder restricted to selecting a candidate has squared-law risk equal to this distance times its testing error. An unrestricted squared-loss decoder can perform better by interpolating; its exact uninformative two-point minimax risk is one quarter of the squared distance.

For the gain example, write $\underline c=0.75$ and $\overline c=1.25$. The unknown world is indexed by $c\in[\underline c,\overline c]$, with $g=1/c$ and known drift coefficient one. Dynamics start from $x=0$ and satisfy $\dot x=-x+gu$; the observation map is $H_c(x)=cx$, with known zero offset. Thus $z=cx$ satisfies $\dot z=-z+u$ for every $c$. Any sequence of dynamic observations, including adaptively chosen inputs, has a distribution independent of $c$. With a uniform prior, the optimal uncalibrated squared-loss decoder is
\[
\hat g_0=\mathbb E[c^{-1}]
=\frac{\log(\overline c/\underline c)}{\overline c-\underline c},
\quad
R_0=\frac13\left[
\frac{\underline c^{-1}-\overline c^{-1}}{\overline c-\underline c}
-\left(\frac{\log(\overline c/\underline c)}{\overline c-\underline c}\right)^2
\right].
\]
The loss factor $1/3$ is $\mathbb E[u^2]$ under the declared uniform evaluation distribution; the common drift cancels. This baseline uses the best possible prior decoder and exact observable dynamics, so its error cannot be attributed to an optimizer or an arbitrary sensor-gain convention.

Each calibration episode clamps the state at the known physical value one and observes $c+\xi$, where $\xi\sim N(0,0.08^2)$ independently across episodes. We charge one unit for the clamp and one for the readout. At $n\in\{1,4,16,64,256,1024\}$, the sample mean is generated directly from $N(c,0.08^2/n)$, clipped to the known interval, and inverted to obtain $\hat g$. We draw 1,024 independent gains once and generate 128 independent noise realizations per gain and budget. All squared errors are evaluated analytically as $(\hat g-g)^2/3$. Displayed standard errors use the 1,024 world-level averages, preserving the grouping of noise replicates within a world. The last-three-budget slope uses ordinary least squares on logarithms of mean loss and budget. At small noise the integrated leading term is
\[
\mathbb E\,\loss(\lawhat,F)^2
\sim\frac{\sigma^2}{3n}\mathbb E[c^{-4}]
=\frac{\sigma^2(\underline c^{-3}-\overline c^{-3})}
{9n(\overline c-\underline c)}.
\]
This is a prior-averaged statement for the linear gain family, not a numerical worst-case guarantee for the entire nonlinear family.

For the scaling experiment, define the continuous triangle wave
\[
\operatorname{tri}(s)=1-2|\,(s\bmod2)-1\,|.
\]
The two globally defined fields are
\[
f_1(x)=0.65+0.12\operatorname{tri}(2x+0.1),\qquad
f_2(x,u)=f_1(x)+0.08\cos(2\pi u).
\]
They are evaluated on $[0,1]$ and $[0,1]^2$, respectively, with the uniform probability measure. The first field has bounds $|f_1|\leq0.77$ and Lipschitz constant at most $0.48$. The second has bounds $|f_2|\leq0.85$ and Lipschitz constant at most $\sqrt{0.48^2+(0.16\pi)^2}$. These are analytical bounds, not grid estimates. In both cases the scalar flow remains positive, and for each fixed input the field is affine in $x$ between known triangle-wave kinks.

The grid resolutions are $m\in\{8,12,16,24,32,48,64,96,128\}$. For each of the $m^d$ cells, the state experiment resets to its state midpoint and holds its input midpoint constant when an input is present. It uses $t=1/m$ and $n=\lceil\sigma^2m^4\rceil$ noisy endpoint readouts, with $\sigma=0.05$. The cell estimate is $(\bar y-x_0)/t$. The velocity comparator directly averages $n=\lceil\sigma^2m^2\rceil$ observations of $f(x_0,u_0)+\xi$. The fixed-duration comparator uses $t=0.25$ and $n=\lceil\sigma^2m^2/t^2\rceil$. All three estimators are piecewise constant on the same cells; no function fitting, architecture choice, clipping, or hyperparameter tuning is performed. Every independent episode incurs one unit for reset, one for readout, and its elapsed time, so
\[
B_{\rm state}=(2+t)m^d n,\qquad
B_{\rm velocity}=2m^d n.
\]
Repeating an experiment $n$ times produces an exactly Gaussian sample mean with variance $\sigma^2/n$; the implementation samples this sufficient statistic directly. Consequently, large plotted acquisition budgets represent the number and costs of observations in the statistical experiment, not a claim that billions of individual measurements were physically executed or stored.

Within an affine segment with slope $k\in\{-0.48,0.48\}$ and initial speed $v$, displacement after duration $t$ is $v(e^{kt}-1)/k$. The time to the next kink at distance $\ell$ is $\log(1+k\ell/v)/k$. Chaining these exact expressions computes the endpoint without time discretization. Six checks spanning both fields, short intervals, and kink crossings agree with high-accuracy DOP853 integrations to less than $2.70\times10^{-12}$. For a cellwise constant estimate $\hat f_j$, the mean squared error within cell $j$ is $(\hat f_j-\mu_j)^2+v_j$, where $\mu_j$ and $v_j$ are the exact within-cell mean and variance of the true field. Each cell has volume $m^{-d}$, so the integrated law error over the unit domain is
\[
\loss(f,\hat f)^2=m^{-d}\sum_{j=1}^{m^d}\bigl[(\hat f_j-\mu_j)^2+v_j\bigr].
\] Triangle-wave moments are integrated by splitting at their kinks; cosine moments use their elementary antiderivatives. Thus squared law errors have no numerical evaluation-grid approximation. We sample 64 independent noises per configuration and report their means and standard errors. Slopes use only the five finest grids, $m\geq32$. The schedules are designed to make sampling noise and approximation error decrease with cell width; agreement of their empirical slopes with the construction is an implementation check and a visual illustration, not independent verification of a universal minimax exponent.

The additional orbital studies use \texttt{extended\_orbit.py}, with master seed 20260915. They produce \texttt{extended\_results.json}, two raw NPZ archives, and the release-design and codebook figures in Section~\ref{sec:cases}. The software versions are the same as those listed above. Their purpose is to vary specific elements of the apparatus or decoder while retaining a physical ground truth that can be evaluated independently.

For a release from rest at radius $R$, write $m_H(R,t)=R\cos t$ for the harmonic response. Before collision with the origin, the inverse-square response has the parameterization
\begin{equation}\label{eq:general-radial-release}
 m_G(R,t)=R\cos^2\psi,\qquad
 t=\sqrt{R^3/2}\,(\psi+\sin\psi\cos\psi),
 \qquad 0\le\psi<\pi/2.
\end{equation}
The auxiliary angle $\psi$ is solved by Brent's method. The collision time is $\frac{\pi}{2}\sqrt{R^3/2}$, greater than every admitted readout time in this study. The design grid contains 101 equally spaced radii from one to two and 111 times from $0.05$ to $0.6$. Because the two response distributions are scalar Gaussians with common variance, their optimal equal-prior testing error follows directly from the midpoint threshold. Equations~\eqref{eq:release-design-score}--\eqref{eq:release-onepercent} then give the score and integer budget. A zero mean gap leaves error $1/2$ at every budget and has no finite $1\%$ threshold. The displayed finite grid contains no exactly zero gap; the equal-reading curve is located separately from the signed responses. Independent DOP853 integrations at 12 radius--time pairs agree with~\eqref{eq:general-radial-release} to within $1.09\times10^{-13}$.

The unit-radius release deserves care. Its initial velocity is zero, whereas the indistinguishable circular preparation has unit tangential speed. At $R=1$, Taylor expansion gives $m_H(1,t)=1-t^2/2+t^4/24+O(t^6)$ and $m_G(1,t)=1-t^2/2-t^4/12+O(t^6)$. Their first acceleration terms agree, but their radius gap is $-t^4/8+O(t^6)$. The small fourth-order signal explains why short readouts from this preparation require many repetitions. Some of these release trajectories pass below radius one. Their positions evolve in the punctured plane, with velocity included in the full state. The field error is evaluated on the compact domain specified in Section~\ref{sec:problem}, whose positions lie in an annulus; the experimental trajectories need not stay within that evaluation region.

For the one-readout ambiguity at time $0.6$, define $g(R)=m_G(R,0.6)-m_H(R,0.6)$. It is continuous on $[1,1.1]$, with $g(1)\simeq-0.01786462$ and $g(1.1)\simeq0.03589940$. The intermediate value theorem therefore establishes an exact zero; a numerical root gives $R_*\simeq1.03074836417$ and shared terminal radius $0.85071333496$. The small numerical residual is a computation check, rather than the existence argument. At the same release radius, the signed response gap at $0.3$ is approximately $0.00308324978$. With independent non-invasive readings at $0.3$ and $0.6$, let $\boldsymbol d$ be the two-dimensional difference of means. Averaging $n$ independent episodes gives equal-prior error $\Phi(-\sqrt n\|\boldsymbol d\|_2/(2\sigma))$. Applying the same normal quantile as in~\eqref{eq:release-onepercent}, with $\sigma=0.1$, gives 22,772 episodes. Each costs $1+2+0.6=3.6$ units and contains two readings; thus 45,544 scalar observations cost $81,979.2$ units. The additional observation restores the missing distinction but does not make it inexpensive.

For the continuous family~\eqref{eq:continuous-force-family}, radial motion from rest obeys $\ddot r=-(1-\theta)r-\theta/r^2$. The response $m(\theta)$ at $R=2$, $t=0.6$ is computed with DOP853, relative tolerance $2\times10^{-12}$ and absolute tolerance $2\times10^{-14}$. Endpoint checks at $\theta=0,1$ agree with the harmonic and inverse-square expressions to within $8.4\times10^{-14}$. A 1,025-point parameter grid gives a strictly increasing sampled response curve. This is a numerical diagnostic over the tested grid, not a proof of injectivity for every real parameter. The experiment uses stored responses rather than assuming an analytic inverse of this curve.

For each codebook size $K_c$, use parameter values $\theta_j=(j-1)/(K_c-1)$, $j=1,\ldots,K_c$, with means $m_j=m(\theta_j)$. The sampled means are increasing. Define response thresholds $b_0=-\infty$, $b_{K_c}=+\infty$, and $b_j=(m_j+m_{j+1})/2$ for $1\le j<K_c$. At a true parameter $\theta$, the sample mean has law $N(m(\theta),\sigma^2/n)$. Its probability of being decoded to codeword $j$ is
\begin{equation}\label{eq:codebook-gaussian-cells}
 p_j(\theta;n,\sigma)
 =\Phi\!\left(\frac{\sqrt n[b_j-m(\theta)]}{\sigma}\right)
 -\Phi\!\left(\frac{\sqrt n[b_{j-1}-m(\theta)]}{\sigma}\right).
\end{equation}
Consequently the noise-averaged law error of this finite decoder is
\[
 \delta_{\rm law}^2\sum_{j=1}^{K_c}p_j(\theta;n,\sigma)(\theta_j-\theta)^2.
\]
The Gaussian probabilities and loss calculation have no Monte Carlo approximation; their response means retain the stated numerical ODE tolerance. The noiseless decoder instead picks the stored response closest to $m(\theta)$. Its resulting error is reported separately. It need not pick the parameter-nearest grid point, because the response is not linear in $\theta$.

The 32 fixed true parameters comprise $0.37$, $0.63$, and 30 independent uniform draws generated with seed 20260955; all are checked to lie off each codebook. We use $n\in\{1,4,16,64,256,1024,4096,16384\}$ and 50,000 independent Gaussian draws for every true parameter and design. At noise $\sigma=0.1$ all three codebooks are evaluated; the 65-entry codebook is also evaluated at $\sigma=0.03,0.3$. Within each $n$, the same standardized noise draws are reused across codebooks and noise levels. Draws are independent across true parameters and across $n$. A mean over these 32 fixed truths is an empirical design average; it is not an integral over the uniform parameter distribution or a supremum over the continuous family.

Let $V_i$ be the sample variance of individual squared law errors at true parameter $i$ in one design. The reported Monte Carlo standard error of the average is $\sqrt{\sum_{i=1}^{32}V_i/50{,}000}/32$. It measures uncertainty from noise simulation conditional on the fixed truths. The raw archive stores decoder-index histograms for every truth and design, from which all realized loss moments can be reconstructed, together with the means, codebooks, and seeds. Aggregate Monte Carlo estimates and probability-based calculations differ by at most 3.48 estimated standard errors over the 40 compared designs. A missing rare mistake in an individual finite simulation does not imply zero true error. The noiseless codebook levels describe finite approximation by these particular decoders; an estimator that interpolates between entries need not share them.

The study uses simulated systems, analytic ground-truth fields, and synthetic measurement noise. It involves no human participants or personal data. The evidence supports the stated mathematical mechanisms and the implemented estimators; it does not certify a learned controller for physical deployment.

\end{document}